\tikzset{
bicolor/.style 2 args={
  dashed,dash pattern=on 5pt off 5pt,->,#1,
  postaction={draw,dashed,dash pattern=on 5pt off 5pt,->,#2,dash phase=5pt}
  },
}
\tikzset{me/.style={to path={
\pgfextra{%
 \pgfmathsetmacro{\startf}{-(#1-1)/2}  
 \pgfmathsetmacro{\endf}{-\startf} 
 \pgfmathsetmacro{\stepf}{\startf+1}}
 \ifnum 1=#1 -- (\tikztotarget)  \else
     let \p{mid}=($(\tikztostart)!0.5!(\tikztotarget)$) 
         in
\foreach \i in {\startf,\stepf,...,\endf}
    {%
     (\tikztostart) .. controls ($ (\p{mid})!\i*6pt!90:(\tikztotarget) $) .. (\tikztotarget)
      }
      \fi   
     \tikztonodes
}}}
\definecolor{color1}{RGB}{128,0,0}     
\definecolor{color2}{RGB}{170,110,40}   
\definecolor{color3}{RGB}{128,128,0}    
\definecolor{color4}{RGB}{0,128,128}     
\definecolor{color5}{RGB}{0,0,128}       
\definecolor{color6}{RGB}{0,0,0}        
\definecolor{color7}{RGB}{230,25,75}      
\definecolor{color8}{RGB}{245,130,48}  
\definecolor{color9}{RGB}{255,225,25}  
\definecolor{color10}{RGB}{210,245,60}    
\definecolor{color11}{RGB}{60,180,75}    
\definecolor{color12}{RGB}{70,240,240}   
\definecolor{color13}{RGB}{0,130,200}    
\definecolor{color14}{RGB}{145,30,180}   
\definecolor{color15}{RGB}{240,50,230}   
\definecolor{color16}{RGB}{128,128,128}  
\definecolor{color17}{RGB}{250,190,212}  
\definecolor{color18}{RGB}{215,215,180}  
\definecolor{color19}{RGB}{255,250,200}  
\definecolor{color20}{RGB}{170,255,195}  
\definecolor{color21}{RGB}{220,190,255}  
\definecolor{color22}{RGB}{255,255,255}  
\definecolor{palecerulean}{rgb}{0.61, 0.77, 0.89}
\definecolor{pistachio}{rgb}{0.58, 0.77, 0.45}
\definecolor{bubblegum}{rgb}{0.99, 0.76, 0.8}
\definecolor{buublecerulean}{rgb}{0.80,0.77,0.82}
\definecolor{buublepistachio}{rgb}{0.79,0.77,0.63}
\definecolor{ceruleanpistachio}{rgb}{0.59,0.77,0.67}
\definecolor{airforceblue}{rgb}{0.36, 0.54, 0.66}
\definecolor{amaranth}{rgb}{0.9, 0.17, 0.31}
\title{Finding Diverse Minimum s-t Cuts\footnote{An earlier version of this work appeared at the 34th International Symposium on Algorithms and Computation (ISAAC 2023).}}
\author{Mark de Berg \thanks{Eindhoven University of Technology, Netherlands, \texttt{m.t.d.berg@tue.nl}} \and Andr\'es L\'opez Mart\'inez \thanks{Eindhoven University of Technology, Netherlands, \texttt{a.lopez.martinez@tue.nl}} \and Frits Spieksma \thanks{Eindhoven University of Technology, Netherlands, \texttt{f.c.r.spieksma@tue.nl}}}
\date{\today}
\begin{document}

\maketitle

\pagenumbering{arabic}

\begin{abstract}
Recently, many studies have been devoted to finding \textit{diverse} solutions in classical combinatorial problems, such as \textsc{Vertex Cover} (Baste et al., IJCAI'20), \textsc{Matching} (Fomin et al., ISAAC'20) and \textsc{Spanning Tree} (Hanaka et al., AAAI'21). Finding diverse solutions is important in settings where the user is not able to specify all criteria of a desired solution. Motivated by an application in the field of \textit{system identification}, we initiate the algorithmic study of $k$-\textsc{Diverse Minimum s-t Cuts} which, given a directed graph $G = (V, E)$, two specified vertices $s,t \in V$, and an integer $k > 0$, asks for a collection of $k$ minimum $s$-$t$ cuts in $G$ that has maximum diversity. We investigate the complexity of the problem for maximizing three diversity measures that can be applied to a collection of cuts: (i) the sum of all pairwise Hamming distances, (ii) the cardinality of the union of cuts in the collection, and (iii) the minimum pairwise Hamming distance. We prove that $k$-\textsc{Diverse Minimum s-t Cuts} can be solved in strongly polynomial time for diversity measures (i) and (ii) via \textit{submodular function minimization}. We obtain this result by establishing a connection between ordered collections of minimum $s$-$t$ cuts and the theory of distributive lattices. When restricted to finding only collections of mutually disjoint solutions, we provide a more practical algorithm that finds a maximum set of pairwise disjoint minimum $s$-$t$ cuts. For graphs with small minimum $s$-$t$ cut, it runs in the time of a single \textit{max-flow} computation. Our results stand in contrast to the problem of finding $k$ diverse \textit{global} minimum cuts---which is known to be NP-hard even for the disjoint case (Hanaka et al., AAAI'23)---and partially answer a long-standing open question of Wagner (Networks, 1990) about improving the complexity of finding disjoint collections of minimum $s$-$t$ cuts. Lastly, we show that $k$-\textsc{Diverse Minimum s-t Cuts} subject to diversity measure (iii) is NP-hard already for $k=3$ via a reduction from a constrained variant of the minimum vertex cover problem in bipartite graphs.



\end{abstract}

\section{Introduction}
\label{sec.introduction}
The \textsc{Minimum $s$-$t$ Cut} problem is a classic combinatorial optimization problem. Given a directed graph $G = (V, E)$ and two special 
vertices $s, t \in V$, the problem asks for a subset $S \subseteq E$ of minimum cardinality that separates vertices $s$ and $t$, meaning that removing these edges from $G$ ensures there is no path from $s$ to $t$. Such a set is called a \textit{minimum $s$-$t$ cut} or \textit{$s$-$t$ mincut}, and it need not be unique. This problem has been studied extensively and has numerous practical and theoretical applications. Moreover, it is known to be solvable in polynomial time. Several variants and generalizations of the problem have been studied; we mention the \textit{global minimum cut} problem and the problem of \textit{enumerating all minimum $s$-$t$ cuts} in a graph.
In this paper, we initiate the algorithmic study of computing \textit{diverse} minimum $s$-$t$ cuts. Concretely, we introduce the following optimization problem. 

\begin{extthm}[$k$-\textsc{Diverse Minimum s-t Cuts} (\textsc{$k$-DMC})] \label{problem:1}
Given are a directed graph $G = (V, E)$, vertices $s,t \in V$, and an integer $k > 0$. Let $\Gamma_G(s, t)$ be the set of minimum $s$-$t$ cuts in $G$, and let $U_k$ be the set of $k$-element multisets of $\Gamma_G(s, t)$. We want to find $C \in U_k$ such that $d(C) = \max_{S \in U_k} d(S)$, where $d : U_k \rightarrow \mathbb{N}$ is a measure of diversity.
\end{extthm}

Informally, given a directed graph $G$ along with two specified vertices $s$ and $t$, and an integer $k$, we are interested in finding a collection of $k$ $s$-$t$ mincuts in $G$ that are as different from each other as possible; that is, a collection having \textit{maximum diversity}. Notice that the problem is well defined even when there are less than $k$ $s$-$t$ mincuts in $G$, because we allow multisets in the solution. To formally capture the notion of diversity of a collection of sets, several measures have been proposed in literature \cite{zheng2007finding, baste2019fpt, hanaka2021finding, BASTE2022103644, hanaka2022framework}. 
In this work, we choose three natural and general measures as our notions of diversity. Given a collection $(X_1, X_2, \ldots, X_k)$ of subsets of a set $A$ (not necessarily distinct), we define 
\begin{align}
    & d_{\text{sum}}(X_1, \ldots, X_k) = \sum_{1\leq i < j \leq k} |X_i \triangle X_j|, \label{eq:1} \\
    & d_{\text{cov}}(X_1, \ldots, X_k) = \big\lvert \bigcup_{1 \leq i \leq k} X_i \big\rvert, \; \text{and} \label{eq:3} \\
    & d_{\text{min}}(X_1, \ldots, X_k) = \min_{1\leq i \leq j \leq k} |X_i \triangle X_j| \label{eq:2}
\end{align}
where $X_i \triangle X_j = (X_i \cup X_j) \setminus (X_i \cap X_j)$ is the \textit{symmetric difference} (or \textit{Hamming distance}) of $X_i$ and $X_j$. Throughout, we call function \eqref{eq:1} the \textit{pairwise-sum diversity}, function \eqref{eq:3} the \textit{coverage diversity}, and function \eqref{eq:2} the \textit{bottleneck diversity}. 
These measures are amongst the most broadly used in describing diversity among solutions in combinatorial problems.  

\paragraph{Motivation.}
We now briefly motivate why finding diverse minimum $s$-$t$ cuts in a graph can be of interest. In general, to solve a real-world problem, one typically formulates the problem as an instance of a computational problem and proceeds to find a solution with the help of an optimization algorithm. However, this is not always an easy task, and the abstraction to a mathematical formulation is usually just a simplification. From a theoretical perspective, an optimal solution to the simplified problem is as good as any other optimal solution, but due to the loss of information during the abstraction process, not every such solution is guaranteed to be adequate for practical usage. 
An illustrating example is the so-called \textit{synthesis problem} in the field of \textit{system identification}, where (under special conditions) the \textsc{Minimum $s$-$t$ Cut} problem can be used to determine an optimal placement of input and output signals in a physical system (modeled as a directed graph) to gather information about its behavior \cite{SHI2022110093}. 
An optimal placement obtained from the abstract model, however, is not always practically feasible due to omitted physical constraints of the system that would otherwise render the model unmanageable \cite{paulShi2021Personal}.

One way of dealing with this issue is to present all optimal solutions of the simplified model and let a \textit{user} choose between them based on external factors ignored by the mathematical model. Such an approach is useful when the number of optimal solutions is small, but in most cases (as in the \textsc{Minimum $s$-$t$ Cut} problem) the number of optimal solutions can be exponential in the input size, rendering the approach infeasible. Another approach is to present only a small number $k$ of optimal solutions, but one should be careful not to output solutions that are very \textit{similar} to each other, as a solution resembling a practically infeasible solution is likely to be practically infeasible as well. Thus, we would like to somehow obtain a small list of $k$ optimal, yet sufficiently ``diverse'' solutions from which a user can make a choice a posteriori. 

\paragraph{Our results.} 
We investigate the complexity of the following three variants of $k$-\textsc{Diverse Minimum s-t Cuts}: (i) \textsc{Sum} $k$-\textsc{Diverse Minimum s-t Cuts} (\textsc{Sum}-$k$-\textsc{DMC}), (ii) \textsc{Cover} $k$-\textsc{Diverse Minimum s-t Cuts} (\textsc{Cov}-$k$-\textsc{DMC}), and (iii) \textsc{Min} $k$-\textsc{Diverse Minimum s-t Cuts} (\textsc{Min}-$k$-\textsc{DMC}). These are the problems obtained when defining function $d$ in \textsc{$k$-DMC} as diversity measures \eqref{eq:1}, \eqref{eq:3}, and \eqref{eq:2}, respectively. For a graph $G$, we use $n$ to denote the number of nodes and $m$ to denote the number of edges.

Contrary to the hardness of finding diverse \textit{global} mincuts in a graph \cite{hanaka2022framework}, in Section \ref{sec:sfm} we show that both \textsc{Sum}-$k$-\textsc{DMC} and \textsc{Cov}-$k$-\textsc{DMC} can be solved in polynomial time. We show this via a reduction to the \textit{submodular function minimization} problem (SFM) on a \textit{lattice}, which is known to be solvable in strongly polynomial time when the lattice is \textit{distributive} \cite{grotschel2012geometric, iwata2001combinatorial, schrijver2000combinatorial}.

\begin{restatable}[]{theorem}{mainTheorem} \label{thm:1}
    \textsc{Sum-$k$-DMC} and \textsc{Cov}-$k$-\textsc{DMC} can be solved in strongly polynomial time.
\end{restatable}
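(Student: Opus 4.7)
The plan is to reduce both \textsc{Sum}-$k$-\textsc{DMC} and \textsc{Cov}-$k$-\textsc{DMC} to submodular function minimization (SFM) on a distributive lattice that admits a compact polynomial-size encoding, and then invoke the strongly polynomial SFM algorithms of Gr\"otschel--Lov\'asz--Schrijver, Iwata--Fleischer--Fujishige, and Schrijver already cited above.

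First, I would invoke the classical Picard--Queyranne construction: from any maximum $s$-$t$ flow one obtains a DAG $D$ of size $O(n+m)$ whose down-sets are in inclusion-preserving bijection with the source-sides of min $s$-$t$ cuts. Consequently the family $\mathcal{L}$ of source-sides is a distributive lattice under $\cap$ and $\cup$ with a polynomial-size implicit description, even though $|\mathcal{L}|$ may be exponential. Writing $\delta^+(A)$ for the edge cut associated with $A \in \mathcal{L}$, both diversity measures become set-theoretic functions of an ordered tuple $(A_1,\dots,A_k) \in \mathcal{L}^k$, so it suffices to optimize over the product lattice.

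The central step is an \emph{uncrossing lemma} showing that for both $d_{\mathrm{sum}}$ and $d_{\mathrm{cov}}$, replacing an incomparable pair $(A_i, A_j)$ in a $k$-tuple by its meet and join $(A_i \cap A_j,\, A_i \cup A_j)$ does not decrease the objective. Iterating, an optimal tuple may be assumed to form a chain $A_1 \subseteq \cdots \subseteq A_k$ in $\mathcal{L}$; the set of such chains is a distributive sublattice $\mathcal{C} \subseteq \mathcal{L}^k$ that is still compactly representable from $D$. On $\mathcal{C}$, using the identity $|\delta^+(A) \cap \delta^+(B)| = |E(A \cap B,\, V \setminus (A \cup B))|$, both objectives can be rewritten as affine combinations of cross-cut counts of the form $|E(A_i,\, V \setminus A_j)|$, which are classical supermodular functions of $(A_i, A_j)$. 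Negating turns the maximization into SFM over the distributive lattice $\mathcal{C}$, to which the strongly polynomial algorithms of \cite{grotschel2012geometric, iwata2001combinatorial, schrijver2000combinatorial} apply.

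The main obstacle I anticipate is the uncrossing lemma together with the supermodularity verification it supports. Uncrossing a single pair $(A_i,A_j)$ leaves their own symmetric difference unchanged but alters every other pairwise interaction involving $A_i$ or $A_j$; showing that these perturbations sum to a non-negative change for $d_{\mathrm{sum}}$, and that the union of the associated edge cuts cannot shrink for $d_{\mathrm{cov}}$, is where the \emph{distributivity} of $\mathcal{L}$ (and not merely its being a lattice) is crucial. A secondary technicality is ensuring strong polynomiality: the SFM oracle must compute meets, joins, and objective values in polynomial time from the $O(n+m)$-size encoding of $\mathcal{C}$ rather than from the full product lattice, but this should follow routinely from the explicit Picard--Queyranne description.
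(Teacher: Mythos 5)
Your proposal follows essentially the same route as the paper: uncross an arbitrary $k$-tuple of $s$-$t$ mincuts into a nested (left-right ordered) one, observe that these nested tuples form a distributive lattice with a compact Picard--Queyranne/Birkhoff representation, show that the negated objectives are submodular there, and invoke strongly polynomial SFM. Two points in your write-up need repair, though. The first is a sign error at the crux: the cross-cut count $|E(A_i, V\setminus A_j)|$ is \emph{submodular}, not supermodular, in $(A_i,A_j)$ under the componentwise order (per edge $(u,v)$ the indicator $[u\in A_i]\,[v\notin A_j]$ satisfies the submodular inequality; the classical supermodular quantity is $|E(A,B)|$, and $|E(A,V\setminus B)|$ is the modular term $|E(A,V)|$ minus it). This is fortunate rather than fatal: maximizing $d_{\mathrm{sum}}$ or $d_{\mathrm{cov}}$ amounts to \emph{minimizing} a nonnegative combination of these counts, which is SFM precisely because they are submodular; if they were supermodular, your ``negate and apply SFM'' step would in fact be supermodular minimization (equivalently submodular maximization) and the reduction would collapse. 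Relatedly, for $d_{\mathrm{cov}}$ the rewriting into pairwise cross-cut counts is not automatic: you need that on a chain the cuts containing a fixed edge form a consecutive block, so that $|\bigcup_i X_i| = k\lambda - \sum_{i=1}^{k-1}|X_i\cap X_{i+1}|$; this interval property is exactly where the restriction to the chain sublattice is used, and it deserves an explicit proof.

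The second point is the uncrossing lemma you flag as the main obstacle; it is easier than you fear, and the paper proves it in a stronger, exact form. For two minimum $s$-$t$ cuts, replacing the pair by its meet and join preserves the union of their edge sets and the size of their intersection, hence every edge multiplicity $\mu_e$ of the whole $k$-tuple is unchanged; since $d_{\mathrm{sum}} = 2\bigl[\lambda\binom{k}{2} - \sum_e \binom{\mu_e}{2}\bigr]$ and $d_{\mathrm{cov}} = |\{e : \mu_e \geq 1\}|$ depend only on the multiplicity vector, the objective is preserved \emph{exactly} (not merely non-decreasing), and the ``perturbation of other pairwise interactions'' you worry about never arises; termination of the iterated uncrossing is handled by an explicit pairwise pass. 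With these repairs your argument matches the paper's proof of Theorem~\ref{thm:1}; your per-edge submodularity check is a somewhat more direct substitute for the paper's multiplicity-based lemmas, and the remaining representation details (the $O(kn)$ join-irreducibles of the chain lattice, constructible after a single max-flow) are carried out in the paper as you anticipated.
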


At the core of this reduction is a generalization of an old result of Escalante \cite{escalante1972schnittverbande} establishing a connection between minimum $s$-$t$ cuts and distributive lattices. 
We obtain our results by showing that the pairwise-sum and coverage diversity measures (reformulated as minimization objectives) are \textit{submodular} functions on the lattice $L^*$ defined by \textit{left-right ordered} collections of $s$-$t$ mincuts, and that this lattice is in fact \textit{distributive}. Using the currently fastest algorithm for SFM by Jiang \cite{jiang2021minimizing}, together with an appropriate representation of the lattice $L^*$, we can obtain an algorithm solving these problems in $O(k^5n^5)$ time. 

In Section \ref{sec.disjointDMC}, we obtain better time bounds for the special case of finding collections of $s$-$t$ mincuts that are pairwise disjoint. Similar to \textsc{SUM-$k$-DMC} and \textsc{COV-$k$-DMC}, our approach exploits the partial order structure of $s$-$t$ mincuts. We use this to efficiently solve the following optimization problem, which we call $k$-\textsc{Disjoint Minimum $s$-$t$ Cuts}: given a graph $G = (V, E)$, vertices $s,t \in V$, and an integer $k \leq k_\mathrm{max}$, find $k$ pairwise disjoint $s$-$t$ mincuts in $G$. Here, $k_\mathrm{max}$ denotes the maximum number of disjoint $s$-$t$ mincuts in $G$. Our algorithm is significantly simpler than the previous best algorithm by Wagner \cite{wagner1990disjoint}, which uses a poly-logarithmic number of calls to any \textit{min-cost flow} algorithm. Our algorithm takes $O(F(m, n) + m\lambda)$ time, where $F(m, n)$ is the time required by a unit-capacity \textit{max-flow} computation, and $\lambda$ is the size of an $s$-$t$ mincut in the graph. By plugging in the running time of the current fastest deterministic max-flow algorithms of \cite{liu2020faster, kathuria2020potential}, we obtain the following time bounds. When $\lambda \leq m^{1/3 + o(1)}$, our algorithm improves upon the previous best runtime for this problem. 

\begin{theorem} \label{thm:3}
$k$-\textsc{Disjoint Minimum $s$-$t$ Cuts} can be solved in time $O(m^{4/3 + o(1)} + m \lambda)$. 
\end{theorem}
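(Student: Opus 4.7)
My plan is to exploit the Picard--Queyranne representation of the set of $s$-$t$ mincuts as the closed sets of a DAG, and then iteratively extract a maximum chain of pairwise disjoint cuts from it. First I would compute a maximum $s$-$t$ flow of value $\lambda$ using the fastest known deterministic unit-capacity max-flow algorithm, spending $O(m^{4/3+o(1)})$ time, and then construct the residual graph $G_f$ together with its strongly-connected-component condensation $D=(V_D,E_D)$ in $O(m+n)$ time. Let $S_s,S_t\in V_D$ denote the SCCs containing $s$ and $t$; these are distinct because $\lambda>0$.

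Next I would recall the well-known correspondence: the $s$-$t$ mincuts of $G$ are in bijection with the \emph{forward-closed} subsets $A\subseteq V_D$ satisfying $S_s\in A$ and $S_t\notin A$, where ``forward-closed'' means that $S\in A$ and $(S,S')\in E_D$ together imply $S'\in A$. The corresponding mincut $C(A)\subseteq E_G$ consists of the saturated original edges $(u,v)$ whose endpoints lie in SCCs $S\in A$ and $S'\notin A$; equivalently, its edges are exactly those whose \emph{backward} copies in $G_f$ realise the $E_D$-arcs entering $A$ from $V_D\setminus A$. The central combinatorial step I would then establish is a disjointness criterion: for forward-closed sets $A\subsetneq A'$, the mincuts $C(A)$ and $C(A')$ are edge-disjoint if and only if $A'\supseteq A\cup N^-(A)$, where $N^-(A)=\{u\in V_D\setminus A : \exists\, v\in A,\,(u,v)\in E_D\}$. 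The intuition is that an edge of $C(A)$ survives in $C(A')$ precisely when the boundary SCC on the complement side is still outside $A'$.

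The algorithm then reads off: set $A_1$ to be the smallest forward-closed set containing $S_s$ (equivalently, the SCCs of the vertices reachable from $s$ in $G_f$), and inductively let $A_{i+1}$ be the forward closure of $A_i\cup N^-(A_i)$; halt as soon as some $A_{i+1}$ would contain $S_t$. Each update is a BFS/DFS on $D$ in $O(m+n)$ time, and each cut $C(A_i)$ can be emitted in $O(m)$ time by scanning the edges of $G$. Since at most $k_\mathrm{max}\le\lambda$ iterations occur, the post-flow work is $O(m\lambda)$, giving the claimed bound once added to the max-flow cost. For optimality I would prove by induction on $i$ that every chain $A_1'\subsetneq A_2'\subsetneq\cdots$ of pairwise edge-disjoint mincuts satisfies $A_i'\supseteq A_i$: the base case uses minimality of $A_1$, and the inductive step combines the disjointness criterion with the forward-closedness of $A_{i+1}'$. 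Hence the greedy chain attains length $k_\mathrm{max}$.

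The main obstacle I anticipate is the proof of the disjointness criterion. It requires a careful analysis of how mincut edges of $G$ correspond to $E_D$-arcs through the backward residual edges, and of why expanding by $N^-$ is simultaneously necessary (any disjoint continuation must absorb these SCCs) and sufficient (the resulting cut is not only disjoint from $C(A_i)$ but, by nestedness and the fact that $A_i\subsetneq A_{i+1}\subsetneq\cdots$, also disjoint from every earlier $C(A_j)$). The remaining ingredients are routine DAG traversal and bookkeeping.
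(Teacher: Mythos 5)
Your high-level plan (greedy ``leftmost'' chain of nested closed sets in the Picard--Queyranne residual DAG) is a reasonable alternative to the paper's path-graph sweep, but your key lemma is false in the direction you need for optimality. For closed sets $A\subseteq A'$, an edge $(u,v)\in C(A)$ survives in $C(A')$ iff the SCC of $v$ lies outside $A'$; hence $C(A)\cap C(A')=\emptyset$ iff $A'$ contains the SCCs of the \emph{heads of the cut edges} of $C(A)$ (reached via backward residual arcs of saturated edges). Your set $N^-(A)$ additionally contains SCCs that send only \emph{zero-flow} (forward-residual) arcs into $A$; absorbing those is not necessary for disjointness, and forcing their forward closure can overshoot. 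Concrete counterexample: take the path $s\to a\to b\to c\to t$ plus a vertex $x$ with edges $x\to a$ and $x\to c$ (and no edge into $x$). Here $\lambda=1$ and the four path edges are pairwise disjoint mincuts, so $k_{\max}=4$. Your algorithm produces $A_1=\{s\}$, $A_2=\{s,a\}$, but then $N^-(A_2)=\{b,x\}$ and the forward closure of $A_2\cup N^-(A_2)$ drags in $c$ (via $x\to c$), so $C(A_3)=\{c\to t\}$ and the cut $\{b\to c\}$ is skipped: you output $3<k_{\max}$ cuts. Correspondingly, your induction ``every disjoint chain satisfies $A_i'\supseteq A_i$'' fails, witnessed by the chain $\{s\}\subset\{s,a\}\subset\{s,a,b\}\subset\{s,a,b,c\}$. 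The fix is to close only over $A_i$ together with the head-SCCs of $C(A_i)$, which is essentially the ``crossing/invalid edge'' bookkeeping the paper performs on its augmented $s$-$t$ path graph.

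There is a second, independent gap in the running-time analysis: you bound the number of iterations by ``$k_{\max}\le\lambda$'', which is false --- a single path of length $n-1$ has $\lambda=1$ and $k_{\max}=n-1$ (in general one only has $k_{\max}\lambda\le m$). With a full $O(m+n)$ BFS and an $O(m)$ edge scan per produced cut, your post-flow cost is $O(m\cdot k_{\max})$, which can be $\Theta(m^2/\lambda)$ and exceeds the claimed $O(m^{4/3+o(1)}+m\lambda)$ precisely in the small-$\lambda$ regime where the theorem is of interest. To meet the bound you must implement the sweep incrementally (each SCC, residual arc, and cut edge is touched $O(1)$ times over the entire run, using the disjointness of the emitted cuts), or otherwise charge the work as the paper does: it spends $O(F(m,n)+m\lambda)$ to build an augmented path graph whose non-path edges encode exactly the reachability information needed, and then a single $O(m\log n)$ left-to-right marking pass extracts the maximum disjoint family.
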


In Section \ref{sec.hardness}, we prove that the decision version of \textsc{Min-$k$-DMC} is already NP-hard when $k = 3$. The proof is split into three parts. First, we show that a variant of the \textit{constrained minimum vertex cover} problem on bipartite graphs (\textsc{Min-CVCB}) of Chen and Kanj \cite{chen2003constrained} is NP-hard. Then, we give a reduction from this problem to \textsc{2-Fixed 3-DMC}, a constrained version of \textsc{Min-$3$-DMC}. Finally, we provide a polynomial time reduction from \textsc{2-Fixed 3-DMC} to \textsc{Min-$3$-DMC}, which implies the hardness of the general problem.

\begin{theorem}
    The decision version of \textsc{Min-$k$-DMC} is NP-hard. 
\end{theorem}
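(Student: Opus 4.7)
The plan is to follow the three-step reduction chain suggested by the section overview. First, I would establish NP-hardness of the constrained minimum vertex cover variant on bipartite graphs (Min-CVCB). Chen and Kanj's constrained vertex cover problem takes a bipartite graph $H = (A \cup B, E_H)$ and integers $k_A, k_B$ and asks for a vertex cover using at most $k_A$ vertices from $A$ and at most $k_B$ from $B$; the ``Min'' variant we need presumably fixes the total size of the cover to $\tau(H)$, the minimum vertex cover size (which by König's theorem equals the maximum matching). I would obtain hardness by reducing from a classical problem such as \textsc{Clique} or \textsc{Balanced Biclique}, or by adapting Chen and Kanj's original reduction so that the split of the cover across the bipartition carries the NP-hard information while the total size is pinned to $\tau(H)$.

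Next, I would reduce Min-CVCB to 2-Fixed 3-DMC via the standard bipartite-matching flow construction: introduce source $s$ connected to every $a \in A$ with a unit-capacity edge, direct every edge of $H$ from $A$ to $B$ with infinite capacity, and connect every $b \in B$ to sink $t$ with a unit-capacity edge. Minimum $s$-$t$ cuts in this network then correspond bijectively to minimum vertex covers of $H$, each cut decomposing into its $A$-part (edges incident to $s$) and its $B$-part (edges incident to $t$). I would designate as the two fixed cuts $C_1, C_2$ the two extremal cuts consisting purely of $A$-side or purely of $B$-side edges (after padding $H$, if necessary, to ensure that both occur as minimum cuts). Because $|C_1 \triangle C_3|$ counts the $A$-side edges in $C_3$ together with the $B$-side edges in $C_3$ that are not already in $C_1$, the condition $d_{\min}(C_1, C_2, C_3) \geq D$ simultaneously lower-bounds the $A$-count and the $B$-count of $C_3$, which is exactly a constrained vertex cover condition on $H$.

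Finally, I would eliminate the ``two cuts fixed in the input'' hypothesis by reducing 2-Fixed 3-DMC to ordinary Min-3-DMC. The idea is to attach to the original flow network two disjoint auxiliary gadgets that force the designated $C_1$ and $C_2$ to appear in any optimal triple of minimum cuts. Concretely, I would add two parallel ``tails'' between $s$ and $t$ whose internal minimum cuts can be engineered to have large prescribed symmetric differences; by scaling these tails appropriately, the $d_{\min}$ objective is maximized only when two of the three chosen cuts coincide with $C_1$ and $C_2$ on the original network, leaving the third cut free to solve the 2-Fixed instance. I expect the main obstacle to be the middle step: engineering the correspondence so that the three pairwise Hamming distances exactly encode the two independent CVCB constraints, and in particular so that no ``cheating'' cut --- one not corresponding to a valid constrained vertex cover --- can meet the target value of $d_{\min}$.
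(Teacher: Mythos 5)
Your three-step skeleton matches the paper's, and your middle step (the $s$--$A$--$B$--$t$ flow network, with the two extremal cuts $E_A=\{(s,a):a\in A\}$ and $E_B=\{(b,t):b\in B\}$ as the fixed cuts) is essentially the paper's reduction. But two things in your plan do not hold up as stated. First, in step one you keep general, possibly unequal bounds $k_A,k_B$; however a single bottleneck threshold $D$ on $d_{\min}$ imposes the \emph{same} bound against $C_1$ and against $C_2$ (and note it \emph{upper}-bounds, not lower-bounds, the $A$-count and $B$-count, since $|C_1\triangle C_3|=2(m-|C_1\cap C_3|)$ for cuts of common size $m$ --- which is in fact what CVCB needs). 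So what you really need is hardness of the \emph{balanced} variant, $|V'\cap A|=|V'\cap B|=m/2$; the paper proves this separately (\textsc{Balanced Min-VCMB}) by exploiting that Chen--Kanj's instances have a perfect matching and then padding one side with dummy matched pairs $\{x_i,y_i\}$, with each $x_i$ joined to all of $B$, to force $k=\ell=|V|/4$. The perfect-matching property is also what guarantees that both $E_A$ and $E_B$ are genuinely \emph{minimum} cuts; your ``pad $H$ if necessary'' remark glosses over this, and one must also deal with minimum cuts that use middle $(a,b)$ edges, since in the unweighted cardinality setting there are no infinite-capacity edges to rule them out.

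The more serious gap is your final step. Two auxiliary ``tails'' that are \emph{disjoint} from the original network cannot force two of the three chosen cuts to coincide with $C_1$ and $C_2$ there: $d_{\min}$ only measures pairwise distances among the chosen cuts and gives no reward for agreeing with designated cuts, and a disjoint gadget contributes (at most) a uniform additive amount to every pairwise distance, so after subtracting it you are back to asking for \emph{any} three pairwise-far minimum cuts --- not for a triple containing the two fixed ones. The paper instead uses gadgets that \emph{interact} with the original graph: a path $s$-$u$-$w$-$t$ together with edges $(a_i,u)$ and $(w,b_i)$ for all $i$, which yields the key claim that any minimum cut containing $(s,u)$ must contain all of $E_A$ (and symmetrically for $(w,t)$ and $E_B$); and a balanced complete bipartite block $H_b$ attached to $s$ and $t$, so that every minimum cut must contain one of two edge sets of size exactly $m/2$, whence by pigeonhole some pair of the three cuts already intersects in $m/2$ edges, calibrating the threshold $\ell'=m/2$. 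The backward direction then proceeds by a case analysis on which of $(s,u),(u,w),(w,t)$ and which of the two $H_b$-blocks each cut contains, extracting from an \emph{arbitrary} feasible triple a cut $S$ of the original graph with $|E_A\cap S|=|E_B\cap S|=m/2$. Your disjoint-tail construction provides no mechanism for this extraction, so the ``only if'' direction of the reduction would fail; this forcing gadgetry, not the middle step, is where the real work lies.
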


\paragraph{Related Work.}
Many efforts have been devoted to finding diverse solutions in combinatorial problems. In their seminal paper \cite{kuo1993analyzing}, Kuo \textit{et al}. were the first to explore this problem from a complexity-theoretic perspective. They showed that the basic problem of maximizing a distance norm 
over a set of elements is already NP-hard. Since then, the computational complexity of finding diverse solutions in many other combinatorial problems has been studied. For instance, diverse variants of \textsc{Vertex Cover}, \textsc{Matching} and \textsc{Hitting Set} have been shown to be NP-hard, even when considering simple diversity measures like the pairwise-sum of Hamming distances, or the minimum Hamming distance between sets. This has motivated the study of these and similar problems from the perspective of \textit{fixed-parameter tractable} (FPT) algorithms \cite{BASTE2022103644, fomin2020diverse, baste2019fpt}. 

Along the same line, Hanaka et al.~\cite{hanaka2022framework} and Gao et al. \cite{gao2022obtaining} recently developed frameworks to design approximation algorithms for diverse variants of combinatorial problems. On the positive side, diverse variants of other classic problems are known to be polynomially solvable when considering certain set-based diversity measures, such as \textsc{Spanning Tree} \cite{hanaka2021finding} and \textsc{Shortest Path} \cite{zheng2007finding}, but not much is known about graph partitioning problems in light of diversity. 

The problem of finding multiple minimum cuts has received considerable attention \cite{wagner1990disjoint, picard1980structure, hanaka2022framework}. Picard and Queyranne \cite{picard1980structure} initiated the study of finding all minimum $s$-$t$ cuts in a graph, showing that these can be enumerated efficiently. They observe that the closures of a naturally-defined poset over the vertices of the graph, correspond bijectively to minimum $s$-$t$ cuts. An earlier work of Escalante \cite{escalante1972schnittverbande} already introduced an equivalent poset for minimum $s$-$t$ cuts, but contrary to Picard and Queyranne, no algorithmic implications were given. Nonetheless, Escalante shows that the set of $s$-$t$ mincuts in a graph, together with this poset, defines a distributive lattice. Similar structural results for \textit{stable matchings} and \textit{circulations} have been shown to have algorithmic implications \cite{gusfield1989stable, khuller1993lattice}, but as far as we know, the lattice structure of $s$-$t$ mincuts has been seldomly exploited in the algorithmic literature.\footnote{Bonsma \cite{bonsma2010most} does make implicit use of the lattice structure of minimum $s$-$t$ cuts to investigate the complexity of finding \textit{most balanced minimum cuts} and \textit{partially ordered knapsack} problems, but does not make this connection to lattice theory explicit.}

Wagner \cite{wagner1990disjoint} studied the problem of finding $k$ pairwise-disjoint $s$-$t$ cuts of \textit{minimum total cost} in an edge-weighted graph.\footnote{Notice that when the input graph is unweighted and $k \leq k_{max}$, Wagner's problem reduces to $k$-\textsc{Disjoint Minimum $s$-$t$ Cuts}.} He showed that this problem can be solved in polynomial time by means of a reduction to a \textit{transshipment} problem; where he raised the question of whether improved complexity bounds were possible by further exploiting the structure of the problem, as opposed to using a general purpose \textit{min-cost flow} algorithm for solving the transshipment formulation. In sharp contrast, Hanaka et al. \cite{hanaka2022framework} recently established that the problem of finding $k$ pairwise-disjoint \textit{global} minimum cuts in a graph is NP-hard (for $k$ part of the input). We are not aware of any algorithm for minimum $s$-$t$ cuts that runs in polynomial time with theoretical guarantees on diversity.

\section{Preliminaries}\label{sec.preliminaries}

\subsection{Distributive Lattices} \label{sec.distribLattices}
In this paper, we use properties of distributive lattices. Here we introduce basic concepts and results on posets and lattices while making an effort to minimize new terminology. For a more detailed introduction to lattice theory see e.g., \cite{birkhoff1937rings, davey2002introduction, gratzer2009lattice}. 

A \textit{partially ordered set (poset)} $P = (X, \preceq)$ is a ground set $X$ together with a binary relation $\preceq$ on $X$ that is reflexive, antisymmetric, and transitive. We use $\mathcal{D}(P)$ to denote the family of all ideals of $P$. When the binary operation $\preceq$ is clear from the context, we use the same notation for a poset and its ground set. Here, we consider the standard representation of a poset $P$ as a directed graph $G(P)$ containing a node for each element and edges from an element to its 
predecessors. For a poset $P = (X, \preceq)$, an \textit{ideal} is a set $U \subseteq X$ where $u \in U$ implies that $v \in U$ for all $v \preceq u$. In terms of $G(P) = (V, E)$, a subset $W$ of $V$ is an ideal if and only if there is no outgoing edge from $W$. 

A \textit{lattice} is a poset $L = (X, \preceq)$ in which any two elements $x, y \in X$ have a (unique) greatest lower bound, or \textit{meet}, denoted by $x \wedge y$, as well as a (unique) least upper bound, or \textit{join}, denoted by $x \vee y$. We can uniquely identify $L$ by the tuple $(X, \vee, \wedge)$. A lattice $L'$ is a \textit{sublattice} of $L$ if $L' \subseteq L$ and $L'$ has the same meet and join operations as $L$. In this paper we only consider \textit{distributive lattices}, which are lattices whose meet and join operations satisfy distributivity; that is, $x \vee (y \wedge z) = (x \vee y) \wedge (x \vee z)$ and $x \wedge (y \vee z) = (x \wedge y) \vee (x \wedge z)$, for any $x,y,z \in L$. Note that a sublattice of a distributive lattice is also distributive. 

Suppose we have a collection $L_1, \ldots, L_k$ of lattices $L_i = (X_i, \vee_i, \wedge_i)$ with $i \in [k]$.\footnote{Throughout, we use $[k]$ to denote the set $\{1,...,k\}$.} The \textit{(direct) product lattice} $L_1 \times \ldots \times L_k$ is a lattice with ground set $X = \{(x_1, \ldots, x_k) \, : \, x_i \in L_i\}$ and join $\vee$ and meet $\wedge$ operations acting component-wise; that is, $x \vee y = (x_1 \vee_1 y_1, \ldots, x_k \vee_k y_k)$ and $x \wedge y = (x_1 \wedge_1 y_1, \ldots, x_k \wedge_k y_k)$ for any $x, y \in X$. The lattice $L^k$ is the product lattice of $k$ copies of $L$, and is called the $k$th \textit{power} of $L$. If $L$ is a distributive lattice, then $L^k$ is also distributive. 

A crucial notion we will need is that of \textit{join-irreducibles}. An element $x$ of a lattice $L$ is called \textit{join-irreducible} if it cannot be expressed as the join of two elements $y, z \in L$ with $y, z \neq x$. In a lattice, any element is equal to the join of all join-irreducible elements lower than or equal to it. The set of join-irreducible elements of $L$ is denoted by $J(L)$. Note that $J(L)$ is a poset whose order is inherited from $L$. Due to Birkhoff's representation theorem---a fundamental tool for studying distributive lattices---every distributive lattice $L$ is isomorphic to the lattice $\mathcal{D}(J(L))$ of ideals of its poset of join-irreducibles, with union and intersection as join and meet operations. Hence, a distributive lattice $L$ can be uniquely recovered from its poset $J(L)$.

\begin{theorem}[Birkhoff's Representation Theorem \cite{birkhoff1937rings}]
Any distributive lattice $L$ can be represented as the poset of its join-irreducibles $J(L)$, with the order induced from $L$.
\end{theorem}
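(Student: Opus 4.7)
The plan is to construct an explicit isomorphism $\phi : L \to \mathcal{D}(J(L))$, where the target lattice has $\cup$ and $\cap$ as join and meet. Define
\[
\phi(x) \;:=\; \{\, j \in J(L) \,:\, j \preceq x \,\} \qquad \text{for each } x \in L.
\]
First I would verify that $\phi(x)$ is indeed an ideal of $J(L)$: if $j' \preceq j$ and $j \in \phi(x)$, then transitivity gives $j' \preceq x$, so $j' \in \phi(x)$.

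Next, I would show that $\phi$ preserves the lattice operations. The identity $\phi(x \wedge y) = \phi(x) \cap \phi(y)$ holds essentially by definition, since $j \preceq x$ and $j \preceq y$ if and only if $j \preceq x \wedge y$. The identity $\phi(x \vee y) = \phi(x) \cup \phi(y)$ is the subtle direction; the inclusion $\supseteq$ is immediate, so the work lies in $\subseteq$. Given $j \in \phi(x \vee y)$, distributivity yields
\[
j \;=\; j \wedge (x \vee y) \;=\; (j \wedge x) \vee (j \wedge y),
\]
and join-irreducibility of $j$ forces $j = j \wedge x$ or $j = j \wedge y$, i.e.\ $j \in \phi(x) \cup \phi(y)$. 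This is where the hypothesis that $L$ is distributive is essential.

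Finally, I would establish bijectivity. Injectivity follows from the fact (recalled in Section~\ref{sec.distribLattices}) that every element of $L$ equals the join of the join-irreducibles below it: $x = \bigvee \phi(x)$, so $\phi(x) = \phi(y)$ implies $x = y$. For surjectivity, take $I \in \mathcal{D}(J(L))$ and set $x := \bigvee I$. Clearly $I \subseteq \phi(x)$. Conversely, if $j \in \phi(x)$, then $j \preceq \bigvee I$, and iterating the distributivity-plus-irreducibility argument used above over the finitely many elements of $I$ shows that $j \preceq j'$ for some $j' \in I$; since $I$ is a down-set, $j \in I$. Combined with the homomorphism property, $\phi$ is then a lattice isomorphism, yielding the theorem.

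The main obstacle, I expect, is the distributivity-based step: showing that if $j$ is join-irreducible and $j \preceq y_1 \vee \cdots \vee y_m$, then $j \preceq y_i$ for some $i$. This is precisely where the theorem fails in the absence of distributivity, so it is the content-bearing part of the argument; everything else is bookkeeping once this implication is secured. Making the induction from two joinands to arbitrarily many precise, and handling the empty and singleton cases (where $\bigvee \emptyset$ is the bottom element of $L$ and corresponds to the empty ideal), is the technical crux.
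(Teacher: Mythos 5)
The paper does not prove this statement at all: it is quoted as a classical result and attributed to Birkhoff \cite{birkhoff1937rings}, so there is no in-paper argument to compare against. Your proposal is essentially the standard textbook proof of Birkhoff's representation theorem (as in Davey--Priestley), and it is correct: the map $\phi(x)=\{j\in J(L): j\preceq x\}$ is the right isomorphism, meet-preservation is definitional, join-preservation is exactly where distributivity plus join-irreducibility is needed, and bijectivity follows from the fact that every element is the join of the join-irreducibles below it. The one caveat worth making explicit is finiteness: the identity $x=\bigvee\phi(x)$, the reduction of $j\preceq\bigvee I$ to $j\preceq j'$ for a single $j'\in I$, and indeed the theorem in this form all rely on $L$ being finite (or at least satisfying a chain condition); for general infinite distributive lattices the representation requires different machinery. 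Since the lattice of minimum $s$-$t$ cuts of a finite graph, and its sublattice $L^*$ used later in the paper, are finite, this assumption is harmless here, but your write-up should state it, as your induction ``over the finitely many elements of $I$'' silently uses it.
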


For a distributive lattice $L$, this implies that there is a \textit{compact representation} of $L$ as the directed graph $G(L)$ that characterizes its set of join-irreducibles. (The graph $G(L)$ is unique if we remove transitive edges.) This is useful when designing algorithms, as the size of $G(L)$ is $O(|J(L)|^2)$, while $L$ can have as many as $2^{|J(L)|}$ elements. See Figure \ref{fig:compactRepresentation} for an illustration. 

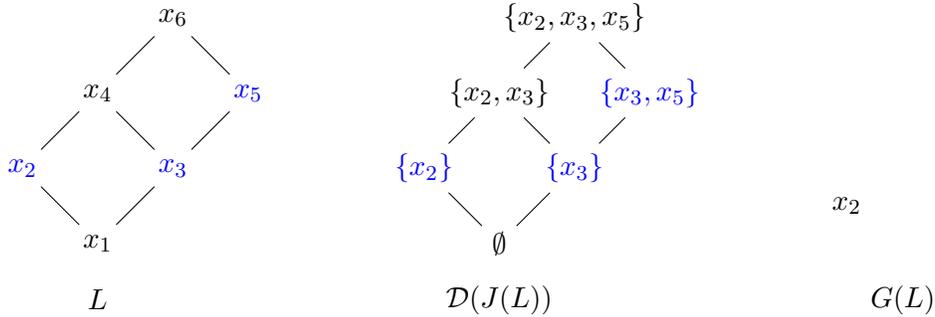
\begin{figure}
    \begin{center}
        \begin{tikzpicture}[scale=1.0, transform shape, baseline={(0,0)}]
    \tikzset{edge/.style = {->,> = latex}}
    
    \node[] (label) at (0, -0.75) {$L$};
    \node[] (n1) at (0, 0) {$x_1$};
    \node[] (n2) at (-1, 1) {\color{blue}{$x_2$}};
    \node[] (n3) at (1, 1) {\color{blue}{$x_3$}};
    \node[] (n4) at (0, 2) {$x_4$};
    \node[] (n5) at (2, 2) {\color{blue}{$x_5$}};
    \node[] (n6) at (1, 3) {$x_6$};
    
    \draw[] (n1) -- (n2);
    \draw[] (n1) -- (n3);
    \draw[] (n2) -- (n4);
    \draw[] (n3) -- (n4);
    \draw[] (n3) -- (n5);
    \draw[] (n4) -- (n6);
    \draw[] (n5) -- (n6);
    \end{tikzpicture}
    \hspace{1.25cm}
    \begin{tikzpicture}[scale=1.0, transform shape, baseline={(0,0)}]
    \tikzset{edge/.style = {->,> = latex}}
    
    \node[] (label) at (0, -0.75) {$\mathcal{D}(J(L))$};
    \node[] (n1) at (0, 0) {$\emptyset$};
    \node[] (n2) at (-1, 1) {\color{blue}{$\{x_2\}$}};
    \node[] (n3) at (1, 1) {\color{blue}{$\{x_3\}$}};
    \node[] (n4) at (0, 2) {$\{x_2, x_3\}$};
    \node[] (n5) at (2, 2) {\color{blue}{$\{x_3, x_5\}$}};
    \node[] (n6) at (1, 3) {$\{x_2, x_3, x_5\}$};
    
    \draw[] (n1) -- (n2);
    \draw[] (n1) -- (n3);
    \draw[] (n2) -- (n4);
    \draw[] (n3) -- (n4);
    \draw[] (n3) -- (n5);
    \draw[] (n4) -- (n6);
    \draw[] (n5) -- (n6);
    \end{tikzpicture}
    \hspace{1.25cm}
    \begin{tikzpicture}[scale=1.0, transform shape, baseline={(0,0)}]
    \tikzset{edge/.style = {->,> = latex}}
    
    \node[] (label) at (-0.75, -0.75) {$G(L)$};
    \node[] (n5) at (0, 2) {\color{black}{$x_5$}};
    \node[] (n2) at (-1.5, 0.5) {\color{black}{$x_2$}};
    \node[] (n3) at (0, 0.5) {\color{black}{$x_3$}};
    
    \draw[edge] (n5) -- (n3);
    \end{tikzpicture}
    \end{center}
    \vspace*{-4mm}
    \caption{Example of Birkhoff's representation theorem for distributive lattices. The left is a distributive lattice $L$, the middle is the isomorphic lattice $\mathcal{D}(J(L))$ of ideals of join-irreducibles of $L$, and the right shows the directed graph $G(L)$ representing $L$. The join irreducible elements of $L$ and $\mathcal{D}(J(L))$ are highlighted in blue.}
    \label{fig:compactRepresentation}
\end{figure}

\subsection{Submodular Function Minimization} \label{sec.sfmPreliminaries}
Let $f$ be a real-valued function on a lattice $L = (X, \preceq)$. We say that $f$ is \textit{submodular} on $L$ if 
\begin{equation} \label{eq:submodular}
    f(x \wedge y) + f(x \vee y) \leq f(x) + f(y), \quad \text{for all } x,y \in X\text{.}
\end{equation}
If $-f$ is submodular on $L$, then we say that $f$ is \textit{supermodular} in $L$ and just \textit{modular} if $f$ satisfies \eqref{eq:submodular} with equality. The \textit{submodular function minimization} problem (SFM) on lattices is, given a submodular function $f$ on $L$, to find an element $x \in L$ such that $f(x)$ is minimum. An important fact that we use in our work is that the sum of submodular functions is also submodular. Also, note 
that minimizing $f$ is equivalent to maximizing $-f$. 

Consider the special case of a lattice whose ground set $X \subseteq 2^U$ is a family of subsets of a set $U$, and meet and join are intersection and union of sets, respectively. It is known that any function $f$ satisfying \eqref{eq:submodular} on such a lattice can be minimized in polynomial time in $|U|$ \cite{grotschel2012geometric, iwata2001combinatorial, schrijver2000combinatorial}. This holds when assuming that for any $Y \subseteq U$, the value of $f(Y)$ is given by an \textit{evaluation oracle} that also runs in polynomial time in $|U|$. The current fastest algorithm for SFM on sets runs in $O(|U|^3 T_\mathrm{EO})$ time \cite{jiang2021minimizing}, where $T_\mathrm{EO}$ is the time required for one call to the evaluation oracle. 

Due to Birkhoff's theorem, the seemingly more general case of SFM on distributive lattices can be reduced to SFM on sets (see e.g. \cite[Sec. 3.1]{bolandnazar2015note} for details). 
%
Hence, any polynomial-time algorithm for SFM on sets can be used to minimize a submodular function $f$ defined on a distributive lattice $L$. 
An important remark is that the running time now depends on the size of the set $J(L)$ of join-irreducibles.

\begin{theorem}[{\cite[Note 10.15]{murota2003} and \cite[Thm.1]{markowsky2001overview}}] \label{theorem:sfmDistributiveLattices}
For any distributive lattice $L$, given by its poset of join-irreducibles $J(L)$, a submodular function $f: L \rightarrow \mathbb{R}$ can be minimized in polynomial time in $|J(L)|$, provided a polynomial time evaluation oracle for $f$.
\end{theorem}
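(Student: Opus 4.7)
The plan is to apply Birkhoff's representation theorem to reinterpret SFM on the distributive lattice $L$ as SFM over a family of subsets of $J(L)$, and then invoke a known polynomial-time SFM algorithm on such families.

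By Birkhoff's theorem, there is a lattice isomorphism $\phi : L \to \mathcal{D}(J(L))$, where $\mathcal{D}(J(L))$ is the family of order ideals of the poset $J(L)$, with inclusion as the order and union and intersection playing the roles of join and meet. Under $\phi$, the pullback $\tilde{f} := f \circ \phi^{-1}$ is a submodular function on the ring family $\mathcal{D}(J(L)) \subseteq 2^U$, where $U := J(L)$. Given any ideal $I \subseteq U$, the corresponding element $\phi^{-1}(I) = \bigvee_{x \in I} x$ can be computed in polynomial time in $|U|$ by iterated joins, so each call to the evaluation oracle for $f$ yields $\tilde{f}(I)$ in polynomial time. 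It therefore suffices to minimize $\tilde{f}$ over the ring family $\mathcal{D}(J(L))$.

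Classical SFM theory (e.g., \cite{grotschel2012geometric, schrijver2000combinatorial}) ensures that SFM over a ring family $\mathcal{F} \subseteq 2^U$ can be solved in polynomial time in $|U|$ given an evaluation oracle. One standard route (see, e.g., \cite{bolandnazar2015note}) extends $\tilde{f}$ to $\hat{f} : 2^U \to \mathbb{R} \cup \{+\infty\}$ by setting $\hat{f}(Y) = \tilde{f}(Y)$ for ideals $Y$ and $\hat{f}(Y) = +\infty$ otherwise; since $\mathcal{D}(J(L))$ is closed under $\cup$ and $\cap$, this extension is submodular in the extended-real sense. Polynomial-time SFM methods (such as the ellipsoid-based approach that separates over the submodular polyhedron, or Jiang's $O(|U|^3 T_{\mathrm{EO}})$ algorithm \cite{jiang2021minimizing} combined with the polyhedral reduction) then produce a minimizing ideal $I^*$, and mapping back via $\phi^{-1}(I^*)$ yields the desired minimizer of $f$ in $L$.

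The main obstacle is handling the effective-domain constraint ``$Y \in \mathcal{D}(J(L))$'' while preserving submodularity of the extension. A naive attempt such as $\hat{f}(Y) := \tilde{f}(\overline{Y})$, with $\overline{Y}$ denoting the downward closure of $Y$, fails in general because downward closure distributes over union but only satisfies containment with respect to intersection, so the induced $\hat{f}$ need not be submodular on $2^U$. Adopting the extended-real convention (or, equivalently, working directly on the submodular polyhedron of the ring family) sidesteps this pitfall and is the key reason the polynomial-time guarantee for SFM on $2^U$ transfers to SFM on distributive lattices with ground set parameter $|J(L)|$.
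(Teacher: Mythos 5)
Your proposal is correct and takes essentially the same route as the paper, which does not prove this statement itself but cites it (Murota; Markowsky) and sketches exactly your reduction: Birkhoff's isomorphism onto the ideal lattice $\mathcal{D}(J(L))$, pulling the objective back to a submodular function on the ring family over ground set $J(L)$, and invoking polynomial-time SFM on set families, with details deferred to the cited note of Bolandnazar et al. Your additional remarks---recovering lattice elements by iterated joins for the evaluation oracle, and handling the ring-family constraint via the extended-real or polyhedral treatment rather than a (non-submodular) down-closure composition---match the standard treatment in those references.
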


\subsection{Minimum Cuts}
Throughout this paper, we restrict our discussion to directed graphs. All our results can be extended to undirected graphs by means of well-known transformations. Likewise, we deal only with edge-cuts, although our approach can be adapted to vertex-cuts as well.

Let $G$ be a directed graph with vertex set $V(G)$ and edge set $E(G)$. As usual, we define $n \coloneqq |V(G)|$ and $m \coloneqq |E(G)|$. 
Given a \textit{source} $s \in V(G)$ and \textit{target} $t \in V(G)$ in $G$, we call a subset $X \subset E(G)$ an \textit{$s$-$t$ cut} if the removal of $X$ from the graph ensures that no path from $s$ to $t$ exists in $G \setminus X$. The size of a cut is the total number of edges it contains. If an $s$-$t$ cut in $G$ has smallest size $\lambda(G)$, we call it a minimum $s$-$t$ cut, or an $s$-$t$ mincut. Note that such a cut need not be unique (in fact, there can be exponentially many). To denote the set of all $s$-$t$ mincuts of $G$, we use the notation $\Gamma_G(s, t)$. 

A (directed) path starting in a vertex $u$ and ending in a vertex $v$ is called a $u$-$v$ path. By Menger's theorem, the cardinality of a minimum $s$-$t$ cut in $G$ is equal to the maximum number of internally edge-disjoint $s$-$t$ paths in the graph. Let $\mathcal{P}_{s, t}(G)$ denote a maximum-sized set of edge-disjoint paths from $s$ to $t$ in $G$. Any minimum $s$-$t$ cut in $G$ contains exactly one edge from each path in $\mathcal{P}_{s, t}(G)$.

For two distinct edges (resp. vertices) $x$ and $y$ in a $u$-$v$ path $p$, we say that $x$ is a path-predecessor of $y$ in $p$ and write $x \prec_p y$ if the path $p$ meets $x$ before $y$. We use this notation indistinctly for edges and vertices. It is easily seen that the relation $\prec_p$ extends uniquely to a non-strict partial order. We denote this partial order by $x \preceq_p y$. Consider now any subset $W \subseteq \Gamma_G(s, t)$ of $s$-$t$ mincuts in $G$, and let 
let $E(W) = \bigcup_{X \in W} X$. Two crucial notions in this work are those of \textit{leftmost} and \textit{rightmost} $s$-$t$ mincuts. The \textit{leftmost} $s$-$t$ mincut in $W$ consists of the set of edges $S_{\text{min}}(W) \subseteq E(W)$ such that, for every path $p \in \mathcal{P}(s, t)$, there is no edge $e \in E(W)$ satisfying $e \prec_p f$ for any $f \in S_{\text{min}}(W)$. Similarly for the \textit{rightmost} $s$-$t$ mincut $S_{\text{max}}(W) \subseteq E(W)$. 
Note that both $S_{\text{min}}(W)$ and $S_{\text{max}}(W)$ are also $s$-$t$ mincuts in $G$ (see proof of Claim \ref{claim:leftmostRightmost} in the appendix). When $W$ consists of the entire set of $s$-$t$ mincuts in $G$, we denote these extremal cuts by $S_\mathrm{min}(G)$ and $S_\mathrm{max}(G)$. 

On the set of $s$-$t$ cuts (not necessarily minimum), the following predecessor-successor relation defines a partial order: an $s$-$t$ cut $X$ is a predecessor of another $s$-$t$ cut $Y$, denoted by $X \leq Y$, if every path from $s$ to $t$ in $G$ meets an edge of $X$ at or before an edge of $Y$. 
It is known that the set of $s$-$t$ mincuts together with relation $\leq$ defines a distributive lattice $L$ \cite{escalante1972schnittverbande}. Moreover, a compact representation $G(L)$ can be constructed from a maximum flow in linear time \cite{picard1980structure}. These two facts play a crucial role in the proof of our main result in the next section.

\section{A Polynomial Time Algorithm for \textsc{SUM-k-DMC} and \textsc{COV-k-DMC}}\label{sec:sfm}
This section is devoted to proving Theorem \ref{thm:1} by reducing \textsc{SUM-$k$-DMC} and \textsc{COV-$k$-DMC} to SMF on distributive lattices. 
First, we show that the domain of solutions of \textsc{SUM-$k$-DMC} and \textsc{COV-$k$-DMC} can be restricted to the set of $k$-tuples that satisfy a particular order, as opposed to the set of $k$-sized multisets of $s$-$t$ mincuts (see Corollary \ref{corollary:solutionSpace} below). 
The reason for doing so is that the structure provided by the former set can be exploited to assess the ``modularity'' of 
the pairwise-sum and coverage objectives. We begin by introducing the notions of \textit{left-right order} and \textit{edge multiplicity}, which are needed throughout the section.

Consider a graph $G$ with specified $s, t \in V(G)$, and let $U^k$ be the set of all $k$-tuples over $\Gamma_G(s, t)$. An element $C \in U^k$ is a (ordered) collection or sequence $[X_1, \ldots, X_k]$ of cuts $X_i \in \Gamma_G(s, t)$, where $i$ runs over the index set $\{1, \ldots, k\}$. We say that $C$ 
is in \textit{left-right order} if $X_i \leq X_j$ for all $i < j$. Let us denote by $U_{\mathrm{lr}}^k \subseteq U^k$ the set of all $k$-tuples over $\Gamma_G(s, t)$ that are in left-right order. Then, for any two $C_1, C_2 \in U_{\mathrm{lr}}^k$, with $C_1 = [X_1, \ldots, X_k]$, $C_2 = [Y_1, \ldots, Y_k]$, we say that $C_1$ is a \textit{predecessor} of $C_2$ (and $C_2$ a \textit{successor} of $C_1$) if $X_i \leq Y_i$ for all $i \in [k]$, and denote this by $C_1 \preceq C_2$. Now, consider again a collection $C \in U^k$. The set of edges $\bigcup_{X \in C} X$ 
is denoted by $E(C)$. We define the \textit{multiplicity} of an edge $e \in E(G)$ with respect to $C$ as the number of cuts in $C$ that contain $e$ and denote it by $\mu_e(C)$. We say that an edge $e \in E(C)$ is a \textit{shared edge} if $\mu_e(C) \geq 2$. The set of shared edges in $C$ is denoted by $E_{\mathrm{shr}}(C)$. We make the following proposition, whose proof is deferred to Appendix \ref{appendix.two}.

\begin{restatable}{proposition}{propMultCons} \label{proposition:1}
For every $C \in U^k$ there exists $\hat{C} \in U_{\mathrm{lr}}^k$ such that $\mu_e(C) = \mu_{e}(\hat{C})$ for all $e \in E(G)$. 
\end{restatable}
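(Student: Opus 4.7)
My plan is to construct $\hat C$ explicitly as the tuple of lattice ``order statistics'' of $X_1,\ldots,X_k$ and verify separately (a) that it is in left-right order and (b) that it preserves edge multiplicities. Writing $(S_l,T_l)$ for the partition of $V$ associated with $X_l$, and using the standard fact that meet and join in the mincut lattice $L=\Gamma_G(s,t)$ act as intersection and union on source sides, I would define, for each $i\in[k]$,
\[
\hat X_i \;=\; \bigvee_{\substack{R\subseteq[k]\\ |R|=k-i+1}}\; \bigwedge_{l\in R} X_l,
\]
which is itself a mincut (since $L$ is closed under $\wedge,\vee$) and whose source side is $\hat S_i=\{v\in V:|\{l:v\in S_l\}|\ge k-i+1\}$. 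The left-right ordering $\hat X_i\le\hat X_{i+1}$ is then immediate: for any $R$ of size $k-i+1$ and any $j_0\in R$, setting $R'=R\setminus\{j_0\}$ gives $\bigwedge_{l\in R}X_l\le\bigwedge_{l\in R'}X_l\le \hat X_{i+1}$, so every term in the join defining $\hat X_i$ lies below $\hat X_{i+1}$.

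The heart of the argument is to show $\mu_e(C)=\mu_e(\hat C)$ for every edge $e=(u,v)\in E(G)$. Set $a=|\{l:u\in S_l\}|$, $b=|\{l:v\in T_l\}|$, $A=\{l:u\in S_l\}$, and $B=\{l:v\in T_l\}$, so that $\mu_e(C)=|A\cap B|\ge\max(0,a+b-k)$ by pigeonhole. I would prove the auxiliary formula
\[
\mu_e(X_1,\ldots,X_k) \;=\; \max(0,\,a+b-k),
\]
valid for any $k$-tuple of $s$-$t$ mincuts. If the inequality above were strict, one could pick $l\in A\cap B$ and $l'\in[k]\setminus(A\cup B)$, so that $u\in S_l\cap T_{l'}$ and $v\in T_l\cap S_{l'}$. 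A routine edge-accounting in the four-way vertex partition induced by $(S_l,T_l)$ and $(S_{l'},T_{l'})$ shows that
\[
|X_l|+|X_{l'}|-|X_l\wedge X_{l'}|-|X_l\vee X_{l'}|
\]
equals the total number of edges going between the ``mixed'' classes $S_l\cap T_{l'}$ and $T_l\cap S_{l'}$ (in either direction). The left-hand side equals $0$ since $X_l,X_{l'},X_l\wedge X_{l'},X_l\vee X_{l'}$ are all mincuts of size $\lambda$, so no such edges exist, contradicting the existence of $e$. This proves the formula. Applying it to $\hat C$ and observing that by construction $|\{i:u\in\hat S_i\}|=a$ and $|\{i:v\in\hat T_i\}|=b$ (since $u\in\hat S_i\iff i\ge k-a+1$) yields $\mu_e(\hat C)=\max(0,a+b-k)=\mu_e(C)$, as desired.

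The main obstacle is the edge-accounting identity coupled with the tightness of cut submodularity at pairs of minimum cuts: this is what forbids any edge from going between the mixed classes $S_l\cap T_{l'}$ and $T_l\cap S_{l'}$, and it is precisely this fact that drives the pigeonhole-with-contradiction step and, through it, the whole proposition. Once this is in hand, the order-statistics construction and the auxiliary formula combine cleanly to give both directions of the claim.
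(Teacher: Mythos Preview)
Your proof is correct and takes a genuinely different route from the paper's.

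The paper proceeds algorithmically: it runs a double loop that, for each pair $i<j$, replaces $(X_i,X_j)$ by $(X_i\wedge X_j,\,X_i\vee X_j)=(S_{\min}(X_i\cup X_j),\,S_{\max}(X_i\cup X_j))$. Each such swap preserves the multiset of edges (via the elementary fact $|S_{\min}(X\cup Y)\cap S_{\max}(X\cup Y)|=|X\cap Y|$), and a short invariant argument shows the final tuple is left--right ordered. Your approach instead writes down $\hat C$ in closed form as the lattice order statistics $\hat X_i=\bigvee_{|R|=k-i+1}\bigwedge_{l\in R}X_l$, whose source side is the threshold set $\hat S_i=\{v:|\{l:v\in S_l\}|\ge k-i+1\}$, and then proves the striking auxiliary identity $\mu_e(X_1,\ldots,X_k)=\max(0,a+b-k)$ for \emph{any} $k$-tuple of mincuts, where $a,b$ count how many source (resp.\ sink) sides contain the tail (resp.\ head) of $e$. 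The identity is exactly the ``no edges between the mixed blocks'' consequence of tight cut submodularity, and once it is in hand the multiplicity preservation for $\hat C$ is a one-line threshold count. Your argument is more structural and yields extra information (an explicit formula for $\hat C$ and for $\mu_e$ that depends only on $a,b$); the paper's swap-based argument is simpler to state and avoids the source-side representation, but gives less insight into the shape of $\hat C$. Both ultimately rest on the same equality case of submodularity for pairs of mincuts.
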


In other words, given a $k$-tuple of $s$-$t$ mincuts, there always exists a $k$-tuple on the same set of edges that is in left-right order; each edge occurring with the same multiplicity. Consider now the pairwise-sum and the coverage diversity measures first introduced in Section \ref{sec.introduction}. We can rewrite them directly in terms of the multiplicity of shared edges as 
\begin{align}
    & d_{\text{sum}}(C) = 2 \left[\lambda(G) \binom{k}{2} - \sum_{e \in E_{\mathrm{shr}}(C)} \binom{\mu_e(C)}{2} \right], \quad \text{and} \label{eq:dsumMultiplicity} \\
    & d_{\text{cov}}(C) = k \lambda(G) - \sum_{e \in E_{\mathrm{shr}}(C)} \left( \mu_e(C) - 1 \right), \label{eq:dcovMultiplicity}
\end{align}
where terms outside the summations are constant terms. Then, combining eq. 
\eqref{eq:dsumMultiplicity} (resp. \eqref{eq:dcovMultiplicity}) with Proposition \ref{proposition:1}, we obtain the following corollary. (For simplicity, we state this only for the $d_{\text{sum}}$ diversity measure, but an analogous claim holds for the $d_{\text{cov}}$ measure.)

\begin{corollary} \label{corollary:solutionSpace}
Let $C \in U^k$ such that $d_{\text{sum}}(C) = \max_{S \in U^k} d_{\text{sum}}(S)$. Then there exists $C' \in U_{\mathrm{lr}}^k$ such that $d_{\text{sum}}(C') = d_{\text{sum}}(C)$. 
\end{corollary}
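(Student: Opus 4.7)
The plan is to derive Corollary \ref{corollary:solutionSpace} as an almost immediate consequence of Proposition \ref{proposition:1} together with the multiplicity-based reformulation \eqref{eq:dsumMultiplicity} of $d_{\text{sum}}$. The point is that formula \eqref{eq:dsumMultiplicity} depends on the collection $C$ only through the constants $\lambda(G)$ and $k$, and through the per-edge multiplicities $\mu_e(C)$ restricted to the shared edges. Hence, any transformation that preserves all edge multiplicities preserves $d_{\text{sum}}$, and Proposition \ref{proposition:1} guarantees precisely such a transformation into $U_{\mathrm{lr}}^k$.

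Concretely, I would start with an arbitrary optimal collection $C \in U^k$ (one exists because $U^k$ is finite). First, invoke Proposition \ref{proposition:1} to obtain some $\hat{C} \in U_{\mathrm{lr}}^k$ with $\mu_e(C) = \mu_e(\hat{C})$ for every $e \in E(G)$. Next, observe that whether an edge $e$ lies in $E_{\mathrm{shr}}(\cdot)$ depends only on whether $\mu_e(\cdot) \geq 2$, so the equality of multiplicities immediately gives $E_{\mathrm{shr}}(C) = E_{\mathrm{shr}}(\hat{C})$. Plugging these identities into \eqref{eq:dsumMultiplicity} yields
\[
d_{\text{sum}}(\hat{C}) \;=\; 2\left[\lambda(G)\binom{k}{2} - \sum_{e \in E_{\mathrm{shr}}(\hat{C})} \binom{\mu_e(\hat{C})}{2}\right] \;=\; d_{\text{sum}}(C).
\]
Setting $C' := \hat{C}$ then completes the argument, since $C'$ lies in $U_{\mathrm{lr}}^k$ and achieves the same (hence maximum) diversity value as $C$.

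I do not foresee any real obstacle: the whole content is already packaged in Proposition \ref{proposition:1} and the reformulation \eqref{eq:dsumMultiplicity}, and the corollary is essentially a bookkeeping step that rephrases ``multiplicities determine diversity''. The only thing worth stating explicitly is that the shared-edge set is itself a function of the multiplicity vector, so that once multiplicities are preserved the entire right-hand side of \eqref{eq:dsumMultiplicity} is preserved. The analogous statement for $d_{\text{cov}}$ follows in exactly the same manner, replacing \eqref{eq:dsumMultiplicity} by \eqref{eq:dcovMultiplicity}.
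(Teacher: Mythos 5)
Your proposal is correct and follows the paper's own route exactly: the paper obtains Corollary \ref{corollary:solutionSpace} by combining the multiplicity-based reformulation \eqref{eq:dsumMultiplicity} (resp.\ \eqref{eq:dcovMultiplicity}) with Proposition \ref{proposition:1}, which is precisely your argument, including the observation that $E_{\mathrm{shr}}(\cdot)$ is determined by the multiplicity vector.
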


This corollary tells us that in order to solve \textsc{SUM-$k$-DMC} (resp. \textsc{COV-$k$-DMC}) we do not need to optimize over the set $U_k$ of $k$-element multisets of $\Gamma_G(s, t)$. Instead, we can look at the set $U_{\mathrm{lr}}^k \subseteq U^k$ of $k$-tuples that are in left-right order. 
Moreover, it follows from Eqs. \eqref{eq:dsumMultiplicity} and \eqref{eq:dcovMultiplicity} that the problem of maximizing $d_{\text{sum}}(C)$ and $d_{\text{cov}}(C)$ is equivalent to that of minimizing 
\begin{align}
    & \hat{d}_\mathrm{sum}(C) = \sum_{e \in E_{\mathrm{shr}}(C)} \binom{\mu_e(C)}{2}, \quad \text{and} \label{eq:dsumMultiplicityMin} \\
    & \hat{d}_\mathrm{cov}(C) = \sum_{e \in E_{\mathrm{shr}}(C)} \left( \mu_e(C) - 1 \right), \label{eq:dcovMultiplicityMin}
\end{align}
respectively. 
In turn, the submodularity of \eqref{eq:dsumMultiplicityMin} (resp. \eqref{eq:dcovMultiplicityMin}) 
implies the supermodularity of \eqref{eq:dsumMultiplicity} (resp. \eqref{eq:dcovMultiplicity}) 
and vice versa. In the remaining of the section, we shall only focus on the minimization objectives $\hat{d}_{\text{sum}}$ and $\hat{d}_{\text{cov}}$. 

We are now ready to show that both \textsc{SUM-$k$-DMC} and \textsc{COV-$k$-DMC} can be reduced to SFM. We first show that the poset $L^* = (U_{\mathrm{lr}}^k, \preceq)$ is a distributive lattice (Section \ref{sec.distributivity}). Next we prove that the diversity measures $\hat{d}_{\text{sum}}$ and $\hat{d}_{\text{cov}}$ are submodular functions on $L^*$ (Section \ref{sec.submodularity}). Lastly, we show that there is a compact representation of the lattice $L^*$ and that it can be constructed in polynomial time, concluding with the proof of Theorem \ref{thm:1} (Section \ref{sec.compactRepresentation}). 

\subsection{Proof of Distributivity} \label{sec.distributivity}
We use the following result of Escalante \cite{escalante1972schnittverbande} (see also \cite{meyer1982lattices, halin1993lattices}). Recall that $\leq$ denotes the predecessor-successor relation between two $s$-$t$ mincuts. 

\begin{lemma}[\cite{escalante1972schnittverbande}] \label{lemma:escalante}
The set $\Gamma_G(s, t)$ of $s$-$t$ mincuts of $G$ together with the binary relation $\leq$ forms a distributive lattice $L$. For any two cuts $X, Y \in L$, the join and meet operations are given respectively by 
\begin{align*}
    X \vee Y = & \; S_{\text{max}}(X \cup Y), \quad \text{and}\\
    X \wedge Y = & \; S_{\text{min}}(X \cup Y).
\end{align*}
\end{lemma}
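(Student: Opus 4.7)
My plan is to prove the three ingredients of the lemma---the poset property, the formulas for meet and join, and distributivity---in that order. The central structural fact I rely on is Menger's theorem: after fixing a maximum-sized set $\mathcal{P}_{s,t}(G)$ of edge-disjoint $s$-$t$ paths, every $s$-$t$ mincut contains exactly one edge on each path $p \in \mathcal{P}_{s,t}(G)$, and these exhaust its edges. Consequently a mincut is uniquely identified by, and can be compared with another mincut through, its ``coordinate'' on each $p$ under the linear order $\prec_p$. The relation $\leq$ is already asserted to be a partial order on all $s$-$t$ cuts in the preliminaries, so it restricts to one on $\Gamma_G(s,t)$, and Claim~\ref{claim:leftmostRightmost} directly gives $S_{\min}(X \cup Y), S_{\max}(X \cup Y) \in \Gamma_G(s,t)$, furnishing candidate meet and join.

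I would then verify that $S_{\max}(X \cup Y) = X \vee Y$. For the upper-bound direction, on each path $p \in \mathcal{P}_{s,t}(G)$ the unique edge of $S_{\max}(X \cup Y)$ on $p$ is by construction the $\prec_p$-rightmost between the edges of $X$ and $Y$ on $p$, and so weakly follows each of them; the resulting path-wise inequality then lifts to the definition of $\leq$ on arbitrary $s$-$t$ paths via the vertex-partition reformulation of a mincut, in which $\leq$ corresponds to inclusion of the $s$-sides. For minimality, if $Z \in \Gamma_G(s,t)$ upper-bounds both $X$ and $Y$, then on every path $p$ the edge of $Z$ weakly follows the edges of both $X$ and $Y$, hence weakly follows their $\prec_p$-max, which is the edge of $S_{\max}(X \cup Y)$ on $p$; this suffices to conclude $S_{\max}(X \cup Y) \leq Z$. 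The dual argument yields $X \wedge Y = S_{\min}(X \cup Y)$.

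Finally, for distributivity I would reduce the identity $X \wedge (Y \vee Z) = (X \wedge Y) \vee (X \wedge Z)$ to a coordinate-wise identity. On each path $p \in \mathcal{P}_{s,t}(G)$, the three mincuts contribute single edges $x_p, y_p, z_p$ totally ordered by $\prec_p$, and the lattice operations act on this coordinate as $\prec_p$-min and $\prec_p$-max. Since $\min$ and $\max$ distribute over each other in any totally ordered set, both sides of the distributivity identity produce the same edge on every coordinate, and therefore agree as mincuts; the dual identity follows by symmetry.

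The main obstacle I anticipate is the bridging step inside the upper-bound argument: the path-wise comparison is trivial on the paths of $\mathcal{P}_{s,t}(G)$, but the definition of $\leq$ quantifies over \emph{all} $s$-$t$ paths in $G$. The cleanest way around this is the vertex-partition view, where $X \leq Y$ becomes $S_X \subseteq S_Y$ and the meet/join become intersection/union of $s$-sides; their validity as $s$-$t$ mincuts follows from submodularity of the cut function (applied to two minimizers, submodularity forces the $\cap$- and $\cup$-partitions to also be minimizers). With that translation, distributivity even becomes the trivial distributivity of $\cap$ and $\cup$ on $2^V$, providing an alternative and arguably slicker proof of the entire lemma.
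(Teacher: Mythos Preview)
The paper does not prove Lemma~\ref{lemma:escalante}; it is quoted from Escalante~\cite{escalante1972schnittverbande} and used as a black box. The only related argument the paper supplies is Claim~\ref{claim:leftmostRightmost} in the appendix, which shows that $S_{\min}(X\cup Y)$ and $S_{\max}(X\cup Y)$ are again $s$-$t$ mincuts; it does not verify the lattice axioms or distributivity. So there is no ``paper's proof'' to compare against, only your own.

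Your proposal is sound as a sketch, but the structure is somewhat inverted. The coordinate-wise argument on a fixed $\mathcal{P}_{s,t}(G)$ is clean for constructing the candidate meet/join and for distributivity, but, as you yourself note, it does not directly yield the order $\leq$, which quantifies over \emph{all} $s$-$t$ paths. You patch this by invoking the vertex-partition view ($X\mapsto S_X$, with $\leq$ becoming $\subseteq$ and $\wedge,\vee$ becoming $\cap,\cup$). That is the standard route (it is essentially the Picard--Queyranne/Escalante argument), and once you take it, the entire lemma---lattice structure \emph{and} distributivity---follows immediately from the fact that the mincut $s$-sides form a ring family in $2^V$ (submodularity of the cut function applied to two minimizers). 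In other words, your ``alternative and arguably slicker proof'' at the end is really the main proof; the path-coordinate reasoning is then just a convenient description of what $\vee$ and $\wedge$ do, matching the paper's $S_{\max}$/$S_{\min}$ formulas. I would recommend leading with the vertex-partition argument and relegating the coordinate picture to the identification $S_{\max}(X\cup Y)=\delta^+(S_X\cup S_Y)$, $S_{\min}(X\cup Y)=\delta^+(S_X\cap S_Y)$.
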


By the definition of product lattice, we can extend this result to the relation $\preceq$ on the set $U_{\mathrm{lr}}^k$ of $k$-tuples of $s$-$t$ mincuts that are in left-right order. 

\begin{lemma} \label{lemma:distributivity}
The set $U_{\mathrm{lr}}^k$, together with relation $\preceq$, defines a distributive lattice $L^*$. For any two elements $C_1 = [X_1, \ldots, X_k]$ and $C_2 = [Y_1, \ldots, Y_k]$ in $L^*$, the join and meet operations are given respectively by
\begin{align*}
    C_1 \vee C_2 = & [S_{\mathrm{max}}(X_1 \cup Y_1), \ldots, S_{\mathrm{max}}(X_k \cup Y_k)], \quad \text{and}\\
    C_1 \wedge C_2 = & [S_{\mathrm{min}}(X_1 \cup Y_1), \ldots, S_{\mathrm{min}}(X_k \cup Y_k)].
\end{align*}
\end{lemma}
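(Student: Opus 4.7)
The plan is to view $L^*$ as a sublattice of the $k$th power $L^k$ of the distributive lattice $L = (\Gamma_G(s,t), \leq)$ from Lemma \ref{lemma:escalante}, and then invoke the fact (mentioned in Section \ref{sec.distribLattices}) that sublattices of distributive lattices are distributive.

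First, I would observe that, by Lemma \ref{lemma:escalante}, $L$ is a distributive lattice with join and meet given by $X \vee Y = S_{\mathrm{max}}(X \cup Y)$ and $X \wedge Y = S_{\mathrm{min}}(X \cup Y)$. Consequently, $L^k$ is a distributive lattice whose ground set is the set $U^k$ of all $k$-tuples over $\Gamma_G(s,t)$, whose order $\preceq_{L^k}$ is the componentwise order, and whose join and meet operations act componentwise. Note that $U_{\mathrm{lr}}^k \subseteq U^k$ and that the order $\preceq$ on $U_{\mathrm{lr}}^k$ is precisely the restriction of $\preceq_{L^k}$.

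Second, I would prove that $U_{\mathrm{lr}}^k$ is closed under the componentwise join and meet of $L^k$; this is the crux of the argument. Take any $C_1 = [X_1,\ldots,X_k]$ and $C_2 = [Y_1,\ldots,Y_k]$ in $U_{\mathrm{lr}}^k$, so $X_i \leq X_{i+1}$ and $Y_i \leq Y_{i+1}$ for every $i$. I need to verify that $X_i \vee Y_i \leq X_{i+1} \vee Y_{i+1}$ and $X_i \wedge Y_i \leq X_{i+1} \wedge Y_{i+1}$. These are instances of the general monotonicity of join and meet in any lattice: since $X_{i+1} \vee Y_{i+1}$ is an upper bound of both $X_i$ and $Y_i$ (using $X_i \leq X_{i+1} \leq X_{i+1} \vee Y_{i+1}$ and symmetrically for $Y_i$), it must dominate the least upper bound $X_i \vee Y_i$; the meet case is dual. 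Therefore $C_1 \vee_{L^k} C_2$ and $C_1 \wedge_{L^k} C_2$ both lie in $U_{\mathrm{lr}}^k$, which shows that $U_{\mathrm{lr}}^k$ is a sublattice of $L^k$, with the join and meet formulae stated in the lemma.

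Finally, since any sublattice of a distributive lattice is distributive, $L^* = (U_{\mathrm{lr}}^k, \preceq)$ is a distributive lattice.

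\paragraph{Main obstacle.}
There is no real obstacle beyond verifying closure of $U_{\mathrm{lr}}^k$ under componentwise join and meet; the only ingredient that is not purely formal is the monotonicity of $\vee$ and $\wedge$ in $L$, which follows immediately from their characterization as least upper bound and greatest lower bound. Everything else reduces to bookkeeping with the product-lattice construction and Lemma \ref{lemma:escalante}.
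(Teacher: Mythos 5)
Your proposal is correct and follows essentially the same route as the paper: view $L^*$ as a sublattice of the $k$th power $L^k$ of Escalante's lattice $L$ and conclude distributivity from the facts that powers and sublattices of distributive lattices are distributive. The only difference is that you explicitly verify the closure of $U_{\mathrm{lr}}^k$ under componentwise join and meet via monotonicity of $\vee$ and $\wedge$, a step the paper's proof leaves implicit when asserting that $L^*$ is a sublattice.
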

\begin{proof} 
This follows directly from Lemma \ref{lemma:escalante} and the definition of product lattice (see Section \ref{sec.distribLattices}). Let $L^k = (U^k, \preceq)$ be the $k$th power of the lattice $L = (\Gamma_G(s, t), \leq)$ of minimum $s$-$t$ cuts, and let $L^* = (U_{\mathrm{lr}}^k, \preceq)$ with $U_{\mathrm{lr}}^k \subseteq U^k$ be the sublattice of left-right ordered $k$-tuples of minimum $s$-$t$ cuts. We know from Section \ref{sec.preliminaries} that since $L$ is distributive, then so is the power lattice $L^k$. Moreover, any sublattice of a distributive lattice is also distributive. Hence, it follows that the lattice $L^*$ is also distributive.
\end{proof}

\subsection{Proof of Submodularity} \label{sec.submodularity}
Now we prove that the functions $\hat{d}_{\text{sum}}$ and $\hat{d}_{\text{cov}}$ are submodular on the lattice $L^*$. 
We start with 
two lemmas that establish useful properties of the multiplicity function $\mu_e(C)$ on $L^*$ (see the corresponding proofs in Appendix \ref{appendix.multiplicityModular} and \ref{appendix.multiplicityProperty}). 

\begin{lemma} \label{lemma:multiplicityModular}
The multiplicity function $\mu_e: U_{\mathrm{lr}}^k \rightarrow \mathbb{N}$ is modular on $L^*$.  
\end{lemma}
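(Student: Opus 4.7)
The plan is to reduce the claim to a coordinate-wise statement and then exploit the fact that every $s$-$t$ mincut contains exactly one edge from each path in $\mathcal{P}_{s,t}(G)$. Writing $C = [X_1,\ldots,X_k]$ and $\phi_e(X) \coloneqq \mathbbm{1}[e \in X]$, we have $\mu_e(C) = \sum_{i=1}^{k} \phi_e(X_i)$. Because by Lemma \ref{lemma:distributivity} the join and meet in $L^*$ act componentwise, it is enough to prove that $\phi_e$ is modular on the single-cut lattice $L = (\Gamma_G(s,t), \leq)$, i.e., for every $X, Y \in \Gamma_G(s,t)$,
\[
\phi_e(X) + \phi_e(Y) \;=\; \phi_e(X \vee Y) + \phi_e(X \wedge Y).
\]
Summing this identity over the $k$ coordinates of two left-right ordered tuples then yields the lemma.

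To establish the single-coordinate identity, I use the fact stated in Section \ref{sec.preliminaries} that every $s$-$t$ mincut meets each path of $\mathcal{P}_{s,t}(G)$ in exactly one edge. Let $p$ be the (unique) path in $\mathcal{P}_{s,t}(G)$ that contains $e$, and let $f_X$ and $f_Y$ denote the unique edges of $X$ and $Y$ that lie on $p$. By Lemma \ref{lemma:escalante}, $X \vee Y = S_{\mathrm{max}}(X \cup Y)$ and $X \wedge Y = S_{\mathrm{min}}(X \cup Y)$. Since these are themselves $s$-$t$ mincuts, each meets $p$ in exactly one edge, and by the extremal definition of $S_{\mathrm{max}}$ and $S_{\mathrm{min}}$, the edge of $X \vee Y$ on $p$ is the $\prec_p$-later of $f_X$ and $f_Y$, while the edge of $X \wedge Y$ on $p$ is the $\prec_p$-earlier of them. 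Hence, as multisets of edges on $p$,
\[
\{X \cap p,\; Y \cap p\} \;=\; \{(X \vee Y) \cap p,\; (X \wedge Y) \cap p\}.
\]

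Since $e$ lies only on the path $p$ among all paths in $\mathcal{P}_{s,t}(G)$, an $s$-$t$ mincut $Z$ contains $e$ iff $Z \cap p = \{e\}$. Combining this with the multiset equality above gives
\[
\phi_e(X) + \phi_e(Y) \;=\; \mathbbm{1}[e = f_X] + \mathbbm{1}[e = f_Y] \;=\; \phi_e(X \vee Y) + \phi_e(X \wedge Y),
\]
which is precisely the required modularity. Applying this to each coordinate $i \in [k]$ of $C_1 = [X_1,\ldots,X_k]$ and $C_2 = [Y_1,\ldots,Y_k]$, and summing, yields $\mu_e(C_1) + \mu_e(C_2) = \mu_e(C_1 \vee C_2) + \mu_e(C_1 \wedge C_2)$.

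The main obstacle is pinning down the path-edge interpretation of $S_{\mathrm{min}}$ and $S_{\mathrm{max}}$: one must verify from the definition that they select, on each path $p$, respectively the $\prec_p$-earliest and $\prec_p$-latest edge of $E(W) \cap p$. This is essentially the content of Claim \ref{claim:leftmostRightmost} in the appendix and follows from the fact that these extremal cuts are themselves $s$-$t$ mincuts, so they contain exactly one edge from $p$, together with the minimality/maximality condition in their definition.
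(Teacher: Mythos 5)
Your proof is correct and takes essentially the same route as the paper's: both arguments reduce to the per-coordinate behavior of $e$ under the componentwise meet and join, using the fact that $S_{\mathrm{min}}(X_i \cup Y_i)$ and $S_{\mathrm{max}}(X_i \cup Y_i)$ select, on the unique path of $\mathcal{P}_{s,t}(G)$ containing $e$, the $\prec_p$-earlier and $\prec_p$-later of the two cut edges. The paper phrases this as a four-way partition of the index set $\{1,\ldots,k\}$ according to containment of $e$ in $X_i$ and $Y_i$, whereas you phrase it as modularity of the indicator $\phi_e$ on $L$ summed over coordinates; your multiset identity on the path $p$ simply makes explicit the counting the paper dismisses as ``by definition'' (the only case you leave implicit, $e$ lying on no path of $\mathcal{P}_{s,t}(G)$, is trivial since then $e$ belongs to no $s$-$t$ mincut).
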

\begin{restatable}[]{lemma}{multiplicityProperty} \label{lemma:multiplicityProperty}
For any two $C_1, C_2 \in L^*$ and $e \in E(C_1) \cup E(C_2)$, it holds that $\max(\mu_e(C_1 \vee C_2), \mu_e(C_1 \wedge C_2)) \leq \max(\mu_e(C_1), \mu_e(C_2))$.
\end{restatable}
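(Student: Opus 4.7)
The plan is to reduce the claim to a one-dimensional combinatorial statement about monotone sequences by fixing a decomposition $\mathcal{P}_{s,t}(G)$ of $G$ into $\lambda$ internally edge-disjoint $s$-$t$ paths and tracking, for each mincut, its unique intersection with the path containing $e$.

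First, I would note that since $e \in E(C_1) \cup E(C_2)$, the edge $e$ belongs to some $s$-$t$ mincut. A double-count (each of the $\lambda$ paths must meet a given mincut $Z$ of size $\lambda$ in at least one edge, while the paths are edge-disjoint) shows that every edge of every mincut lies on exactly one path of $\mathcal{P}_{s,t}(G)$, and each mincut intersects every path in exactly one edge. Let $p$ be the unique path of $\mathcal{P}_{s,t}(G)$ through $e$, let $\gamma$ be the position of $e$ along $p$, and, for each mincut $Z$, let $\pi(Z)$ denote the position along $p$ of the unique edge in $Z \cap p$. Writing $C_1 = [X_1, \ldots, X_k]$ and $C_2 = [Y_1, \ldots, Y_k]$, and setting $\alpha_i := \pi(X_i)$ and $\beta_i := \pi(Y_i)$, the left-right orderings of $C_1$ and $C_2$ translate into $\alpha_1 \leq \ldots \leq \alpha_k$ and $\beta_1 \leq \ldots \leq \beta_k$.

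Next, I would observe that $\pi$ intertwines the lattice operations of Lemma \ref{lemma:distributivity} with coordinate-wise max and min on $p$: the rightmost (resp.\ leftmost) mincut inside $X_i \cup Y_i$ must on path $p$ pick the later (resp.\ earlier) of the edges $X_i \cap p$ and $Y_i \cap p$, so $\pi(S_{\mathrm{max}}(X_i \cup Y_i)) = \max(\alpha_i, \beta_i)$ and $\pi(S_{\mathrm{min}}(X_i \cup Y_i)) = \min(\alpha_i, \beta_i)$. Consequently,
\begin{align*}
\mu_e(C_1) &= |\{i : \alpha_i = \gamma\}|, & \mu_e(C_2) &= |\{i : \beta_i = \gamma\}|, \\
\mu_e(C_1 \vee C_2) &= |\{i : \max(\alpha_i, \beta_i) = \gamma\}|, & \mu_e(C_1 \wedge C_2) &= |\{i : \min(\alpha_i, \beta_i) = \gamma\}|,
\end{align*}
so the lemma reduces to the following claim: for non-decreasing integer sequences $(\alpha_i)$ and $(\beta_i)$ and any value $\gamma$, both $|\{i : \max(\alpha_i, \beta_i) = \gamma\}|$ and $|\{i : \min(\alpha_i, \beta_i) = \gamma\}|$ are at most $\max(|A|, |B|)$, where $A := \{i : \alpha_i = \gamma\}$ and $B := \{i : \beta_i = \gamma\}$.

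To finish, I would prove the ``max'' case (the ``min'' case being symmetric). Let $M := \{i : \max(\alpha_i, \beta_i) = \gamma\}$; by monotonicity, $M$ is an interval. Each $i \in M$ satisfies $\alpha_i, \beta_i \leq \gamma$ and $\max(\alpha_i, \beta_i) = \gamma$, hence $i \in A \cup B$. If $i \in M \cap A$ and $j \in M$ with $j > i$, then $\gamma = \alpha_i \leq \alpha_j \leq \gamma$ (the last inequality because $j \in M$), forcing $\alpha_j = \gamma$ and $j \in A$; hence $M \cap A$ is a suffix of $M$, and symmetrically so is $M \cap B$. Since $M = (M \cap A) \cup (M \cap B)$ is the union of two suffixes of the interval $M$, $|M|$ equals the length of the longer suffix, yielding $|M| \leq \max(|A|, |B|)$, as required. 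The main obstacle is the bookkeeping in the first step: pinning down that $\pi$ is well-defined on every mincut and that it turns the lattice join/meet of Lemma \ref{lemma:distributivity} into coordinate-wise max/min on $p$. Once this translation is in place, the combinatorial finishing argument is elementary.
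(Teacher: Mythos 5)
Your proof is correct, and it reaches the statement by a cleaner, more uniform route than the paper. The paper's argument (Appendix \ref{appendix.multiplicityProperty}) first establishes that the indices of cuts containing $e$ form a contiguous interval (Proposition \ref{proposition.interval}), then splits into cases ($e$ in only one of $E(C_1), E(C_2)$, or in both, with subcases for disjoint, overlapping, and nested intervals), invoking the modularity of $\mu_e$ (Lemma \ref{lemma:multiplicityModular}) for the one-sided cases and identifying $I_e(C_1 \vee C_2)$, $I_e(C_1 \wedge C_2)$ ``by definition of join and meet'' in the two-sided cases. You instead make that last step explicit and central: restricting to the unique path $p \in \mathcal{P}_{s,t}(G)$ through $e$ (legitimate, since every mincut meets every path in exactly one edge), you show the componentwise join and meet of Lemma \ref{lemma:distributivity} act as pointwise $\max$ and $\min$ on the position sequences $\alpha_i = \pi(X_i)$, $\beta_i = \pi(Y_i)$ — exactly the per-path behavior of $S_{\mathrm{max}}$ and $S_{\mathrm{min}}$ used in the proof of Claim \ref{claim:leftmostRightmost} — and then everything reduces to one elementary statement about monotone sequences, proved via the suffix (resp.\ prefix) structure of $M \cap A$ and $M \cap B$. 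What your route buys is uniformity and self-containment: no case analysis on interval configurations, no reliance on Lemma \ref{lemma:multiplicityModular}, and a fully justified version of the ``by definition of join/meet'' step. What the paper's route buys is reuse: the interval representation and the modularity lemma are set up anyway because they are needed again for Claims \ref{claim:binomialSubmodular} and \ref{claim:edgeSetSizesInequality}, so the case-based proof comes almost for free in that context. One small remark: your claim that $M$ is an interval is never actually used — the suffix argument alone already gives $|M| = \max(|M \cap A|, |M \cap B|)$ — so you could drop it.
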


Lemma \ref{lemma:multiplicityProperty} plays an important role in the submodularity of $\hat{d}_{\text{sum}}$ and $\hat{d}_{\text{cov}}$. In contrast to Lemma \ref{lemma:multiplicityModular}, it does not hold on the $k$th power lattice of the distributive lattice $L$ of Lemma \ref{lemma:escalante}. 

\paragraph{Submodularity of $\hat{d}_\mathrm{sum}$.} 
Recall the definition of $\hat{d}_\mathrm{sum}(C)$ in eq. \eqref{eq:dsumMultiplicityMin}, and let $B_e: U_{\mathrm{lr}}^k \rightarrow \mathbb{N}$ be the function defined by $B_e(C) = \binom{\mu_e(C)}{2}$. We can rewrite eq. \eqref{eq:dsumMultiplicityMin} as $\hat{d}_\mathrm{sum}(C) = \sum_{e \in E_{\mathrm{shr}}(C)} B_e(C)$. The following is a consequence of Lemmas \ref{lemma:multiplicityModular} and \ref{lemma:multiplicityProperty} (see proof in Appendix \ref{appendix.binomialSubmodular}).

\begin{claim} \label{claim:binomialSubmodular}
For any two $C_1, C_2 \in L^*$ and $e \in E(G)$, we have 
$B(C_1 \vee C_2) + B(C_1 \wedge C_2) \leq B(C_1) + B(C_2)$.
\end{claim}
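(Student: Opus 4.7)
The plan is to reduce the claim to a short convexity argument about $\binom{\cdot}{2}$, fed by the two preceding lemmas on the multiplicity function. Concretely, I would fix $e \in E(G)$ and set $a := \mu_e(C_1)$, $b := \mu_e(C_2)$, $a' := \mu_e(C_1 \vee C_2)$, and $b' := \mu_e(C_1 \wedge C_2)$. Lemma \ref{lemma:multiplicityModular} gives $a + b = a' + b'$, and Lemma \ref{lemma:multiplicityProperty} gives $\max(a', b') \leq \max(a, b)$. Combined with the sum equality, these two facts immediately force $\min(a', b') \geq \min(a, b)$, so both $a'$ and $b'$ lie in the interval $[\min(a, b), \max(a, b)]$. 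In the language of majorization, $(a, b)$ majorizes $(a', b')$.

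The remaining step is to deduce $\binom{a'}{2} + \binom{b'}{2} \leq \binom{a}{2} + \binom{b}{2}$. Since $\binom{x}{2} = (x^2 - x)/2$ and the linear terms cancel (same sum), this inequality is equivalent to $(a')^2 + (b')^2 \leq a^2 + b^2$, i.e.\ $a'b' \geq ab$. Assuming WLOG $a \geq b$ and $a' \geq b'$, I would write $a' = a - \delta$ and $b' = b + \delta$ with $\delta \geq 0$, where additionally $\delta \leq a - b$ because $a' \geq b'$. A direct computation then gives
\[
a'b' - ab \;=\; (a - \delta)(b + \delta) - ab \;=\; \delta(a - b) - \delta^2 \;=\; \delta\bigl((a - b) - \delta\bigr) \;\geq\; 0,
\]
which finishes the proof. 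Equivalently, this is the standard fact that every convex $f$ satisfies $f(a) + f(b) \geq f(a') + f(b')$ whenever $(a, b)$ majorizes $(a', b')$, applied to the convex function $x \mapsto \binom{x}{2}$.

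I do not foresee any real obstacle here, since essentially all the content sits in Lemmas \ref{lemma:multiplicityModular} and \ref{lemma:multiplicityProperty}. The only subtlety is the WLOG step: in the alternative case $a' < b'$ one simply relabels by swapping $a'$ and $b'$, which leaves both sides of the target inequality unchanged. Thus the whole argument is a two-line calculation, and the real work of the section lies in the two multiplicity lemmas that feed into it.
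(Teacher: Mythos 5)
Your proof is correct, and it takes a more direct route than the paper's. The paper's own argument (Appendix \ref{appendix.binomialSubmodular}) proceeds by case distinction on how $e$ sits in $E(C_1)\cup E(C_2)$, reusing the interval representation and the subcase structure from the proof of Lemma \ref{lemma:multiplicityProperty}; only in the final (nested-interval) subcase does it invoke, in a footnote, the arithmetic fact that $a+b=c+d$ and $\max(a,b)\leq\max(c,d)$ imply $\binom{a}{2}+\binom{b}{2}\leq\binom{c}{2}+\binom{d}{2}$. You observe that the statements of Lemmas \ref{lemma:multiplicityModular} and \ref{lemma:multiplicityProperty} already supply exactly the hypotheses of that fact, so a single majorization/convexity computation (your $\delta$-calculation, which is correct, including the bound $\delta\leq a-b$ coming from $a'\geq b'$) disposes of all cases at once; you use the two lemmas purely as black boxes and need neither the interval representation nor any case analysis. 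What the paper's longer route buys is explicit information about when equality holds (in the non-nested subcases the multiset of multiplicities is unchanged), but for the claim as stated your argument is shorter and cleaner. One small point to add for completeness: the claim quantifies over all $e\in E(G)$, whereas Lemma \ref{lemma:multiplicityProperty} is stated only for $e\in E(C_1)\cup E(C_2)$; for the remaining edges, modularity together with nonnegativity forces all four multiplicities to be zero, so the inequality holds trivially---the paper records this in its closing sentence, and your write-up should too.
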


In other words, the function $B_e(C)$ is submodular in the lattice $L^*$. 
Now, recall that the sum of submodular functions is also submodular. Then, taking the sum of $B_e(C)$ over all edges $e \in E(G)$ results in a submodular function. 
From here, notice that $B_e(C) = 0$ for unshared edges; that is, when $\mu_e(C) < 2$. This means that such edges do not contribute to the sum. It follows that, for any two $C_1, C_2 \in L^*$, we have
\begin{equation*}
    \sum_{e \in E_\mathrm{shr}(C_1 \vee C_2)}B_e(C_1 \vee C_2) + \sum_{e \in E_\mathrm{shr}(C_1 \wedge C_2)}B_e(C_1 \wedge C_2) \leq \sum_{e \in E_\mathrm{shr}(C_1)}B_e(C_1) + \sum_{e \in E_\mathrm{shr}(C_2)}B_e(C_2). 
\end{equation*}
Each sum in this inequality corresponds to the definition of $\hat{d}_{sum}$ applied to the arguments $C_1 \vee C_2$, $C_1 \wedge C_2$, $C_1$ and $C_2$, respectively. Hence, by definition of submodularity, we obtain our desired result. 

\begin{theorem} \label{theorem:submodularityTotalSum}
The function $\hat{d}_{\text{sum}}: U_{\mathrm{lr}}^k \rightarrow \mathbb{N}$ is submodular on the lattice $L^*$.
\end{theorem}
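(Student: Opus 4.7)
The plan is to derive submodularity of $\hat{d}_\mathrm{sum}$ directly from Claim \ref{claim:binomialSubmodular} by summing the per-edge inequalities over all edges of $G$. The only subtlety is that the definition of $\hat{d}_\mathrm{sum}(C)$ in \eqref{eq:dsumMultiplicityMin} restricts the summation to the shared-edge set $E_\mathrm{shr}(C)$, which itself depends on $C$. To align the index sets of the four terms $\hat{d}_\mathrm{sum}(C_1\vee C_2)$, $\hat{d}_\mathrm{sum}(C_1\wedge C_2)$, $\hat{d}_\mathrm{sum}(C_1)$, $\hat{d}_\mathrm{sum}(C_2)$, I would first rewrite the functional to sum over a fixed ground set.

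Concretely, since $B_e(C) = \binom{\mu_e(C)}{2} = 0$ whenever $\mu_e(C) \leq 1$, for every $C \in U_{\mathrm{lr}}^k$ we have
\[
\hat{d}_\mathrm{sum}(C) \;=\; \sum_{e \in E_\mathrm{shr}(C)} B_e(C) \;=\; \sum_{e \in E(G)} B_e(C).
\]
With this common index set $E(G)$ in place, I would apply Claim \ref{claim:binomialSubmodular} edge-by-edge: for each $e \in E(G)$ and each $C_1, C_2 \in L^*$,
\[
B_e(C_1 \vee C_2) + B_e(C_1 \wedge C_2) \;\leq\; B_e(C_1) + B_e(C_2).
\]
Summing this inequality over all $e \in E(G)$ and invoking the identity above four times yields
\[
\hat{d}_\mathrm{sum}(C_1 \vee C_2) + \hat{d}_\mathrm{sum}(C_1 \wedge C_2) \;\leq\; \hat{d}_\mathrm{sum}(C_1) + \hat{d}_\mathrm{sum}(C_2),
\]
which is precisely the submodularity condition \eqref{eq:submodular} on the lattice $L^*$.

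The main obstacle in the overall argument is not in this theorem but in Claim \ref{claim:binomialSubmodular}, whose proof relies on the nontrivial properties of the multiplicity function established in Lemmas \ref{lemma:multiplicityModular} and \ref{lemma:multiplicityProperty}; once those are in hand, the present theorem is a routine aggregation: the per-edge submodularity lifts to a global one simply because the class of submodular functions is closed under non-negative sums, and the padding by zeros on unshared edges causes no harm since $\binom{0}{2} = \binom{1}{2} = 0$. Hence no additional structural property of $L^*$ or of $s$-$t$ mincuts needs to be invoked at this stage.
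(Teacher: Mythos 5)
Your proposal is correct and follows essentially the same route as the paper: both arguments rest on the per-edge submodularity of $B_e$ from Claim \ref{claim:binomialSubmodular}, sum over all of $E(G)$, and use the fact that $B_e(C)=0$ on unshared edges to identify this sum with $\hat{d}_\mathrm{sum}(C)$. Your explicit rewriting of $\hat{d}_\mathrm{sum}$ over the fixed ground set $E(G)$ is just a slightly tidier presentation of the paper's observation that unshared edges contribute nothing.
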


\paragraph{Submodularity of $\hat{d}_\mathrm{cov}$.} Consider the function $F_e(C) : U_{\mathrm{lr}}^k \rightarrow \mathbb{N}$ defined by $F_e(C) = \mu_e(C)-1$. It is an immediate corollary of Lemma \ref{lemma:multiplicityModular} that $F_e(C)$ is modular in $L^*$. Then, the sum $\sum_e F_e(C)$ taken over all edges $e \in E(G)$ is still a modular function. Notice that only shared edges in $C$ contribute positively to the sum, while the contribution of unshared edges can be neutral or negative. We can split this sum into two parts: the sum over shared edges $e \in E_\mathrm{shr}(C)$, and the sum over $e \in E(G) \setminus E_\mathrm{shr}(C)$. The latter sum can be further simplified to $|E(C)| - |E(G)|$ by observing that only the edges $e \in E(G) \setminus E(C)$ make a (negative) contribution. Therefore, we can write
\begin{equation} \label{eq:sumCoverage}
     \sum\nolimits_{e \in E(G)} F_e(C) = \left( \sum\nolimits_{e \in E_\mathrm{shr}(C)} (\mu_e(C) - 1)\right) + |E(C)| - |E(G)|.
\end{equation}
We know $\sum_e F_e(C)$ to be a modular function on $L^*$, hence for any two $C_1, C_2 \in L^*$ we have
\begin{equation*}
    \sum_{e \in E(G)} F_e(C_1 \vee C_2) + \sum_{e \in E(G)} F_e(C_1 \wedge C_2) = \sum_{e \in E(G)} F_e(C_1) + \sum_{e \in E(G)} F_e(C_2),
\end{equation*}
which, by equation \eqref{eq:sumCoverage}, is equivalent to
\begin{align} \label{eq:coverageEquality}
    & \left( \sum_{e \in E_\mathrm{shr}(C_1 \vee C_2)} (\mu_e(C_1 \vee C_2) - 1)  + \sum_{e \in E_\mathrm{shr}(C_1 \wedge C_2)} (\mu_e(C_1 \wedge C_2) - 1) \right) + |E(C_1 \vee C_2)| + |E(C_1 \wedge C_2)| \notag \\ 
    & = \left( \sum_{e \in E_\mathrm{shr}(C_1)} (\mu_e(C_1) - 1) + \sum_{e \in E_\mathrm{shr}(C_2)} (\mu_e(C_2) - 1) \right) + |E(C_1)| + |E(C_2)|. 
\end{align}
Now, from Lemmas \ref{lemma:multiplicityModular} and \ref{lemma:multiplicityProperty}, we observe the following property (see proof in Appendix \ref{appendix.edgeSetSizesInequality}). 

\begin{claim} \label{claim:edgeSetSizesInequality}
For any two $C_1, C_2 \in L^*$ we have $|E(C_1 \vee C_2)| + |E(C_1 \wedge C_2)| \geq |E(C_1)| + |E(C_2)|$.
\end{claim}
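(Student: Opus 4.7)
\textbf{Proof plan for Claim \ref{claim:edgeSetSizesInequality}.} The plan is to argue the inequality edge by edge. Writing $|E(C)| = \sum_{e \in E(G)} \mathbbm{1}[\mu_e(C) \geq 1]$, it suffices to show that for every fixed $e \in E(G)$,
\begin{equation*}
\mathbbm{1}[\mu_e(C_1 \vee C_2) \geq 1] + \mathbbm{1}[\mu_e(C_1 \wedge C_2) \geq 1] \;\geq\; \mathbbm{1}[\mu_e(C_1) \geq 1] + \mathbbm{1}[\mu_e(C_2) \geq 1],
\end{equation*}
and then sum over $e$.

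To verify the pointwise inequality, I fix an edge $e$ and set $a = \mu_e(C_1)$, $b = \mu_e(C_2)$, $c = \mu_e(C_1 \vee C_2)$, $d = \mu_e(C_1 \wedge C_2)$. The two tools I will use are: (i) modularity of $\mu_e$ on $L^*$ (Lemma \ref{lemma:multiplicityModular}), which gives $c + d = a + b$; and (ii) the max-bound of Lemma \ref{lemma:multiplicityProperty}, which gives $\max(c, d) \leq \max(a, b)$. The case analysis is then short: if $a = b = 0$, both sides are $0$. If exactly one of $a, b$ is positive, the right-hand side is $1$, and since $c + d = a + b \geq 1$ with $c, d \geq 0$, at least one of $c, d$ is positive, so the left-hand side is also $\geq 1$. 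The main case is $a, b \geq 1$, where I need to show $c \geq 1$ and $d \geq 1$. Assume without loss of generality $c \geq d$; then $c \geq 1$ is immediate from $c+d = a+b \geq 2$. For $d$, the max-bound yields $c \leq \max(a, b)$, so
\begin{equation*}
d \;=\; a + b - c \;\geq\; a + b - \max(a, b) \;=\; \min(a, b) \;\geq\; 1,
\end{equation*}
which finishes the pointwise inequality.

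The only step I see as potentially delicate is making sure the case $a, b \geq 1$ really forces both $c$ and $d$ to be at least $1$, because without Lemma \ref{lemma:multiplicityProperty} one could in principle have $c = a + b$ and $d = 0$; it is precisely the max-bound that rules this out. Once the pointwise statement is in hand, summing over all $e \in E(G)$ converts indicator inequalities into edge-set cardinalities and yields $|E(C_1 \vee C_2)| + |E(C_1 \wedge C_2)| \geq |E(C_1)| + |E(C_2)|$, as required.
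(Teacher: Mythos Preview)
Your proof is correct and follows essentially the same route as the paper: an edge-by-edge argument driven by the modularity of $\mu_e$ (Lemma~\ref{lemma:multiplicityModular}) and the max-bound (Lemma~\ref{lemma:multiplicityProperty}). Your presentation via indicator functions and the inequality $d = a + b - c \geq \min(a,b)$ is in fact tidier than the paper's version, which reaches the same conclusion through a set-theoretic decomposition into $A\setminus B$, $B\setminus A$, $A\cap B$ and a somewhat more roundabout bookkeeping of an auxiliary set~$X$; but the substance is identical.
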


Given Claim \ref{claim:edgeSetSizesInequality}, it is clear that to satisfy equality in equation \eqref{eq:coverageEquality} it must be that: 
\begin{align*} 
    \sum_{e \in E_\mathrm{shr}(C_1 \vee C_2)} (\mu_e(C_1 \vee C_2) - 1) & + \sum_{e \in E_\mathrm{shr}(C_1 \wedge C_2)} (\mu_e(C_1 \wedge C_2) - 1) \\ 
    & \leq \sum_{e \in E_\mathrm{shr}(C_1)} (\mu_e(C_1) - 1) + \sum_{e \in E_\mathrm{shr}(C_2)} (\mu_e(C_2) - 1),
\end{align*}
from which the submodularity of $\hat{d}_\mathrm{cov}$ immediately follows.

\begin{theorem} \label{theorem:submodularityCoverage}
The function $\hat{d}_\mathrm{cov}: U_{\mathrm{lr}}^k \rightarrow \mathbb{N}$ is submodular on the lattice $L^*$.
\end{theorem}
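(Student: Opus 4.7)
The plan is to obtain the submodular inequality for $\hat{d}_{\mathrm{cov}}$ by leveraging the modularity of edge multiplicities (Lemma~\ref{lemma:multiplicityModular}) together with Claim~\ref{claim:edgeSetSizesInequality}, following the $F_e$-bookkeeping set up just before the theorem. The strategy is to embed $\hat{d}_{\mathrm{cov}}$ into a modular function on $L^*$ and then control the discrepancy by the edge-support inequality.

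First I would fix an edge $e\in E(G)$ and set $F_e(C):=\mu_e(C)-1$. Since $\mu_e$ is modular on $L^*$, so is $F_e$ for every $e$, and hence the total sum $\Phi(C):=\sum_{e\in E(G)} F_e(C)$ is itself modular on $L^*$. The key observation is that $\Phi(C)$ can be re-expressed purely in terms of $\hat{d}_{\mathrm{cov}}(C)$ and the support size $|E(C)|$: edges with $\mu_e(C)\ge 2$ contribute $\mu_e(C)-1$ (summing to $\hat{d}_{\mathrm{cov}}(C)$), edges with $\mu_e(C)=1$ contribute $0$, and each of the $|E(G)|-|E(C)|$ edges with $\mu_e(C)=0$ contributes $-1$. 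This yields precisely Equation~\eqref{eq:sumCoverage}, namely
\[
\Phi(C) \;=\; \hat{d}_{\mathrm{cov}}(C) + |E(C)| - |E(G)|.
\]

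Next I would write out modularity of $\Phi$ at $C_1,C_2\in L^*$ and substitute this identity on both sides. The $|E(G)|$ terms cancel, and after rearranging I obtain
\[
\hat{d}_{\mathrm{cov}}(C_1\vee C_2) + \hat{d}_{\mathrm{cov}}(C_1\wedge C_2) \;=\; \hat{d}_{\mathrm{cov}}(C_1) + \hat{d}_{\mathrm{cov}}(C_2) + \Delta,
\]
where $\Delta := |E(C_1)| + |E(C_2)| - |E(C_1\vee C_2)| - |E(C_1\wedge C_2)|$. Claim~\ref{claim:edgeSetSizesInequality} asserts exactly that $\Delta\le 0$, so the desired submodular inequality $\hat{d}_{\mathrm{cov}}(C_1\vee C_2) + \hat{d}_{\mathrm{cov}}(C_1\wedge C_2) \le \hat{d}_{\mathrm{cov}}(C_1) + \hat{d}_{\mathrm{cov}}(C_2)$ follows.

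Beyond this assembly, which is essentially bookkeeping, the only nontrivial ingredient is Claim~\ref{claim:edgeSetSizesInequality}, and that is where I expect the real obstacle to lie. My plan for it would be edgewise: write $|E(C)| = \sum_{e\in E(G)} \mathbbm{1}[\mu_e(C)\ge 1]$ and compare, for each fixed $e$, the pair $(\mathbbm{1}[\mu_e(C_1\vee C_2)\ge 1], \mathbbm{1}[\mu_e(C_1\wedge C_2)\ge 1])$ with $(\mathbbm{1}[\mu_e(C_1)\ge 1], \mathbbm{1}[\mu_e(C_2)\ge 1])$. Lemma~\ref{lemma:multiplicityModular} pins down the total multiplicity on each side, while Lemma~\ref{lemma:multiplicityProperty} forbids the only configuration in which the $\vee/\wedge$ side could lose support relative to the original pair; together they yield the per-edge inequality, which sums to the claim.
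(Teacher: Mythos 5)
Your proposal matches the paper's proof essentially step for step: you use the modularity of $F_e(C)=\mu_e(C)-1$ (Lemma~\ref{lemma:multiplicityModular}), sum over all edges, rewrite the total via the identity $\sum_e F_e(C)=\hat{d}_{\mathrm{cov}}(C)+|E(C)|-|E(G)|$ (the paper's Eq.~\eqref{eq:sumCoverage}), and then invoke Claim~\ref{claim:edgeSetSizesInequality} to turn the modular equality into the submodular inequality, exactly as the paper does. Your closing per-edge sketch for Claim~\ref{claim:edgeSetSizesInequality} (indicator comparison, with Lemmas~\ref{lemma:multiplicityModular} and~\ref{lemma:multiplicityProperty} ruling out the lone bad configuration) is also sound and in fact a bit cleaner than the paper's case analysis in the appendix, but the main argument is the same.
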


\subsection{Finding the Set of Join-Irreducibles} \label{sec.compactRepresentation}
We now turn to the final part of the reduction to SFM. By Lemma \ref{lemma:distributivity}, we know that the lattice $L^*$ of left-right ordered collections of $s$-$t$ mincuts is distributive. And it follows from Theorems \ref{theorem:submodularityTotalSum} and \ref{theorem:submodularityCoverage} that the objective functions $\hat{d}_\mathrm{sum}$ and $\hat{d}_\mathrm{cov}$ are submodular in $L^*$. As discussed in Section \ref{sec.sfmPreliminaries}, it only remains to find an appropriate (compact) representation of $L^*$ in the form of its poset of join-irreducibles $J(L^*)$. 

Recall the distributive lattice $L$ of $s$-$t$ mincuts of a graph $G$, defined in Lemma \ref{lemma:escalante}. The leftmost cut $S_\mathrm{min}(G)$ can be seen as the meet of all elements in $L$. In standard lattice notation, this smallest element is often denoted by $0_L := \bigvee_{x \in L} x$. We use the following result of Picard and Queyranne.

\begin{lemma}[see \cite{picard1980structure}] \label{lemma:picard}
    Let $L$ be the distributive lattice of $s$-$t$ mincuts in a graph $G$, there is a compact representation $G(L)$ of $L$ with the following properties: 
    \begin{enumerate}
        \item The vertex set is $J(L) \cup 0_L$,
        \item $|G(L)| \leq |V(G)|$,
        \item Given $G$ as input, $G(L)$ can be constructed in $F(n, m) + O(m)$ time. 
    \end{enumerate}
\end{lemma}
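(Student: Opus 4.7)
The plan is to invoke the classical max-flow-based construction of Picard and Queyranne. First, compute a maximum $s$-$t$ flow $f$ in $G$ using any max-flow algorithm; this takes $F(n,m)$ time. Then build the residual graph $G_f$ in $O(m)$ time, placing a forward arc $(u,v)$ whenever $f(u,v)$ is not saturating and a backward arc $(v,u)$ whenever $f(u,v)>0$.

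Next, compute the strongly connected components of $G_f$ (e.g.\ via Tarjan's algorithm in $O(n+m)$ time) and let $D$ be the resulting condensation, a DAG whose nodes are the SCCs of $G_f$. The crucial claim is that there is a bijection between $s$-$t$ mincuts of $G$ and down-closed unions of SCCs that contain the SCC of $s$ but not the SCC of $t$. Indeed, by the max-flow min-cut theorem, a vertex partition $(A, V\setminus A)$ with $s \in A$, $t \notin A$ yields a mincut exactly when every edge from $A$ to $V\setminus A$ is saturated and every edge from $V\setminus A$ to $A$ carries zero flow; equivalently, there is no residual arc leaving $A$ in $G_f$. Such a set $A$ must be a union of SCCs (any SCC straddling the boundary would produce a residual arc leaving $A$), and the resulting family of SCCs forms an ideal of the appropriately oriented $D$.

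Identify the SCC of $s$ with the bottom element $0_L$ (which realises the leftmost mincut $S_{\mathrm{min}}(G)$) and delete the SCC of $t$ (whose inclusion is forbidden in any mincut); the DAG on the remaining SCCs together with $0_L$ is the claimed $G(L)$. Its ideals are in order-preserving bijection with the elements of $L$ by the previous paragraph, so by Birkhoff's representation theorem the non-bottom vertices of $G(L)$ are precisely the join-irreducibles $J(L)$, establishing item~1. Item~2 follows because SCCs partition $V(G)$, so the total number of SCCs is at most $|V(G)|$. Item~3 is obtained by adding the running times above: $F(n,m)$ for the max flow plus $O(m)$ for building $G_f$, computing the SCCs, and forming $D$.

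The main technical obstacle is the bijection asserted above: one must carefully verify that residual-closed subsets of $V(G)$ correspond exactly to the source-sides of $s$-$t$ mincuts, and that the partial order $\leq$ on $\Gamma_G(s,t)$ (a cut is a predecessor of another if every $s$-$t$ path meets the former no later than the latter) translates into containment of the corresponding source-side vertex sets. Once that is in place, the lattice operations on $L$ match union and intersection of ideals in $D$, so the DAG produced by the construction is indeed a compact representation in the sense of Birkhoff.
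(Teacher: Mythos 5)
Your overall route is the right one (the paper itself gives no proof here; it simply cites Picard--Queyranne, whose max-flow/residual-graph/SCC construction you are reconstructing), but the final step of your construction is wrong, and it is exactly the step that makes items~1 and the ideal--mincut bijection true. You identify only the SCC of $s$ with $0_L$ and delete only the SCC of $t$. However, the source side of the leftmost mincut is not the SCC of $s$: it is the set $R_s$ of \emph{all} vertices reachable from $s$ in the residual graph $G_f$, and symmetrically the set $R_t$ of all vertices that can reach $t$ in $G_f$ is excluded from every closed set. In Picard--Queyranne's construction one must contract all of $R_s$ into the bottom node and discard all of $R_t$; the remaining ``free'' SCCs are precisely the join-irreducibles. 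With your version, an SCC in $R_s \setminus \mathrm{SCC}(s)$ is contained in every closed set (so it is a redundant vertex, not a join-irreducible), and an SCC in $R_t \setminus \mathrm{SCC}(t)$ lies in no closed set, yet after deleting only $\mathrm{SCC}(t)$ it generates ideals of your DAG that correspond to no mincut. Concretely: take $G$ with edges $s\to a$, $a\to t$, $b\to t$ (unit capacities, $b$ with no incoming edge). The max flow saturates $s\to a\to t$; all SCCs of $G_f$ are singletons, $L$ has two elements and $|J(L)|=1$, but your graph keeps the vertex $b$, and the set $\{s,b\}$ is an ideal of your truncated DAG that is not residual-closed (the arc $b\to t$ leaves it) and corresponds to no mincut. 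So your claims that ``the ideals are in order-preserving bijection with the elements of $L$'' and that ``the non-bottom vertices are precisely $J(L)$'' fail as stated, and with them item~1.

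The fix is standard and preserves your running-time accounting: after computing $G_f$, compute $R_s$ and $R_t$ by two graph searches in $O(m)$ time, contract $R_s$ to the node $0_L$, delete $R_t$, and take the condensation of what remains (oriented so that closed sets become ideals). Then closed sets containing $s$ and avoiding $t$ are exactly $R_s$ together with an ideal of the poset of free SCCs, the lattice of such sets is $\mathcal{D}$ of that poset, and Birkhoff identifies the free SCCs with $J(L)$; items~2 and~3 follow as you argue, since contraction only decreases the vertex count and all post-flow work is $O(n+m)$. Separately, note that you yourself defer the key bijection and the compatibility of $\leq$ with inclusion of source sides as ``the main technical obstacle''; in a complete proof these must be carried out (they are exactly the content of Picard--Queyranne's theorem), not just asserted.
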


In other words, the set $J(L)$ is of size $O(n)$ and can be recovered from $G$ in the time of a single max-flow computation. Moreover, each element of $J(L)$ corresponds to an $s$-$t$ mincut in $G$. In view of this lemma, we obtain the following for the poset of join-irreducibles $J(L^*)$. 

\begin{lemma} \label{lemma:joinIrreduciblesLStar}
    The set of join-irreducibles of $L^*$ is of size $O(kn)$ and is given by
    \begin{center}  
        $J(L^*) = \bigcup_{i = 1}^k J_i$, where $J_i := \{(\underbrace{0_L, \ldots, 0_L}_{i-1 \text{ times}}, \underbrace{p, \ldots, p}_{k-i+1 \text{ times}})\, : \, p \in J(L)\}$.
    \end{center}
\end{lemma}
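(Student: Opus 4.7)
The plan is to verify both inclusions of $J(L^*) = \bigcup_{i=1}^k J_i$ directly using the componentwise join formula from Lemma~\ref{lemma:distributivity}, and then bound the size using Lemma~\ref{lemma:picard}. For brevity, let $C_{i,p}$ denote the tuple $(\underbrace{0_L,\ldots,0_L}_{i-1},\underbrace{p,\ldots,p}_{k-i+1})$ for $i \in [k]$ and $p \in J(L)$; since $0_L \leq p$ in $L$, every $C_{i,p}$ is in left-right order and hence lies in $L^*$.

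For the containment $\bigcup_i J_i \subseteq J(L^*)$, I will show each $C_{i,p}$ is join-irreducible by supposing $C_{i,p} = C_1 \vee C_2$ with $C_1 = [X_1,\ldots,X_k]$ and $C_2 = [Y_1,\ldots,Y_k]$ in $L^*$ and arguing componentwise. The coordinates $j < i$ force $X_j = Y_j = 0_L$, while for $j \geq i$ the join-irreducibility of $p$ in $L$ together with $X_j \vee Y_j = p$ gives $X_j = p$ or $Y_j = p$. The left-right order of $C_1$ then makes the set $A = \{j \geq i : X_j = p\}$ an up-set of $\{i,\ldots,k\}$ (if $X_{j_0} = p$, monotonicity and $X_j \leq p$ force $X_j = p$ for $j \geq j_0$), and analogously for $B$ defined from $C_2$. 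Since $A \cup B = \{i,\ldots,k\}$ and both are up-sets of a chain, one of them must equal $\{i,\ldots,k\}$, yielding $C_1 = C_{i,p}$ or $C_2 = C_{i,p}$.

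For the reverse inclusion, let $C = [X_1,\ldots,X_k] \in L^*$ be join-irreducible with $C \neq 0_{L^*}$, and let $i$ be the smallest index with $X_i \neq 0_L$. I will prove that $X_i = \cdots = X_k$ and the common value $p$ lies in $J(L)$, which places $C \in J_i$. If some smallest $j^* > i$ satisfies $X_{j^*} \neq X_i$, then $C = C_1 \vee C_2$ for $C_1 := [0_L,\ldots,0_L,X_i,\ldots,X_i]$ (value $X_i$ from position $i$ onwards) and $C_2 := [0_L,\ldots,0_L,X_{j^*},\ldots,X_k]$ (zeros up to $j^*-1$, original values thereafter); both lie in $L^*$ and are strictly below $C$, contradicting join-irreducibility. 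Otherwise $X_i = \cdots = X_k = q$, and if $q = q_1 \vee q_2$ with $q_1, q_2 < q$ in $L$, then the analogous constant-tail decomposition of $C$ into tuples with values $q_1$ and $q_2$ starting at position $i$ again contradicts join-irreducibility, so $q \in J(L)$.

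Finally, the sets $J_i$ are pairwise disjoint, since each element of $J_i$ is characterized by having exactly the first $i-1$ coordinates equal to $0_L$, and $|J_i| = |J(L)| = O(n)$ by Lemma~\ref{lemma:picard}, yielding $|J(L^*)| = k|J(L)| = O(kn)$. The main technical point is the componentwise argument in the first direction: join-irreducibility of $p$ in $L$ alone does not force one factor to coincide with $C_{i,p}$, and one must crucially combine it with the global left-right order via the up-set structure above.
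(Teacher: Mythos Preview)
Your proof is correct. Both directions are argued cleanly: the up-set argument for $\bigcup_i J_i \subseteq J(L^*)$ is sound (two up-sets of a chain whose union is the whole chain force one of them to be the whole chain), and your constructive decompositions in the reverse direction are easily verified to lie in $L^*$, to join to $C$, and to be strictly below $C$.

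Your route differs from the paper's in both directions. For the inclusion $\bigcup_i J_i \subseteq J(L^*)$, the paper invokes the finite-lattice characterization ``join-irreducible $\Leftrightarrow$ unique immediate predecessor'' and explicitly identifies the predecessor $Q(i,q,p)$ of each $C_{i,p}$; you instead work directly from the definition $C_{i,p}=C_1\vee C_2 \Rightarrow C_1=C_{i,p}$ or $C_2=C_{i,p}$, which avoids appealing to that characterization but requires the extra up-set observation to pass from the coordinatewise conclusion ``$X_j=p$ or $Y_j=p$'' to a global one. For the reverse inclusion, the paper does a case analysis on how many distinct values from $J(L)$ appear in the tuple and exhibits two immediate predecessors in each case; your argument is more structural---first force a constant tail, then force that constant to lie in $J(L)$---and is arguably tighter, since it sidesteps the somewhat informal case split in the paper. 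Both approaches ultimately hinge on the componentwise join of Lemma~\ref{lemma:distributivity} and the left-right order, so neither is deeper than the other; yours is a bit more self-contained, while the paper's predecessor-based viewpoint makes the lattice picture (and the resulting compact representation $G(L^*)$) slightly more explicit.
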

\begin{proof}
We know that for an element $x \in L^*$ such that $x \neq 0_L$, by definition of join-irreducible, $x \in J(L^*)$ if and only if $x$ has a single immediate predecessor in $L^*$. To prove our claim, we show that (i) the $k$-tuples $J_i$, with $1 \leq i \leq k$, are in $L^*$ and satisfy this property, and (ii) that no other tuple in $L^*$ satisfies it. 
    
    For (i), let $C(i, p)$ denote the $k$-tuple $(0_L, \ldots, 0_L, p, \ldots, p) \in J_i$, where the first $i-1$ entries contain $0_L$ and the remaining $k-i+1$ entries contain the element $p$, with $i \in [k]$ and $p \in J(L)$. It is clear that $C(i, p) \in L^*$since each entry in $C(i, p)$ is an $s$-$t$ mincut, and $0_L \leq X$ for any $X \in \Gamma_{s, t}(G)$. 
    Consider now the arbitrary element $p \in J(L)$, and let $q$ denote the immediate predecessor of $p$ in $J(L)$ (with $q = 0_L$ if $p$ has no predecessors). We claim that the $k$-tuple $Q(i, q, p) := (0_L, \ldots, 0_L, q, p, \ldots, p)$ obtained from $C(i, p)$ by replacing its $i$th entry with element $q$, is the unique immediate predecessor of $C(i, p)$. This follows because: (a) replacing any other entry of $C(i, p)$ with $q$ results in a tuple that violates the left-right order, (b) any other choice of $q$ either violates the order or has the tuple $Q(i, q, p)$ as a successor, and (c) replacing any subsequence of $p$s by $q$s in $C(i, p)$ has the same consequences as (b).\footnote{There is also the case where all $p$s are replaced by a $q$ such that $q > p$, but it is clear that no such tuple can be a predecessor of $C(i, p)$.} Since this holds for all $i\in [k]$ and arbitrary $p$, it follows that each tuple in $J(L^*)$ has a single immediate predecessor. 

    It remains to show (ii); that is, that there is no tuple in $L^* \setminus \bigcup_{i = 1}^k J_i$ which is also a join-irreducible of $L^*$. For the sake of contradiction, assume that such a tuple $T$ exists in $L^*$. There are two possibilities for $T$: (1) $T$ contains more than 2 elements from the set $J(L)$, and (2) $T$ contains no elements from $J(L)$. 
    
    Consider case (2) first, and let $\gamma$ be the $k$th entry in $T$. Since $\gamma \not\in J(L)$, then it has more than one immediate predecessor in $L$. Let $\alpha$ and $\beta$ be two such predecessors (notice that $\alpha$ and $\beta$ are incomparable). Then, we can construct two distinct tuples $T_1 \in L^*$ and $T_2 \in L^*$ from $T$ by replacing $\gamma$ by $\alpha$ and $\beta$, respectively. But $T_1$ and $T_2$ are both immediate predecessors of $T$ in $L^*$, which gives the necessary contradiction. 

    Case (1) follows a similar argument. Suppose $a, b, c \in J(L)$ are the last three entries in tuple $T$; where $a < b < c$. Let $p(c), p(b) \in J(L)$ be the immediate predecessors of elements $c$ and $b$, respectively. Notice that $a \leq p(b)$ and $b \leq p(c)$. Then, like before, we can construct two distinct tuples $T_1 \in L^*$ and $T_2 \in L^*$ from $T$ by replacing $c$ by $p(c)$ and $b$ by $p(b)$, respectively. It is clear that $T_1$ and $T_2$ are both immediate predecessors of $T$ in $L^*$, which once more results in a contradiction. 

    From (i) and (ii) above, we have thus shown that the set of join-irreducibles $J(L^*)$ is given by $\bigcup_{i = 1}^k J_i$. To conclude the proof, we look at the size of $J(L^*)$. First, observe that the index $i$ runs from $1$ to $k$. Also, by Lemma \ref{lemma:picard} we know that $|J(L)| = O(n)$. It then follows that $|J(L^*)| = O(kn)$.
\end{proof}

Given Lemma \ref{lemma:joinIrreduciblesLStar}, a compact representation of the lattice $L^*$ can be obtained as the directed graph $G(L^*)$ that characterizes its poset of join-irreducibles $J(L^*)$ in polynomial time (since $|J(L^*)|$ is polynomial). It is also clear that the functions $\hat{d}_\mathrm{sum}$ and $\hat{d}_\mathrm{cov}$ can be computed in polynomial time. Then, by Theorem \ref{theorem:sfmDistributiveLattices}, together with Theorems \ref{lemma:distributivity}, \ref{theorem:submodularityTotalSum} and \ref{theorem:submodularityCoverage}, the reduction to SFM is complete. 

\mainTheorem*

To give a precise running time bound, we can use Jiang's algorithm \cite{jiang2021minimizing} for minimizing a submodular function on sets. The total running time of our algorithm is $O(|U|^3 T_\mathrm{EO})$, where $|U| = O(k n)$ is the size of the ground set $J(L^*)$, and $T_\mathrm{EO} = O(k^2 n^2)$ is the time required to evaluate the analogue function on $\mathcal{D}(J(L^*))$ of the function $\hat{d}_\mathrm{sum}$ (resp. $\hat{d}_\mathrm{cov}$) on $L^*$. The graph representation of the poset $J(L^*)$ can be constructed within the same time bounds since $|G(L^*)| = O(k^2n^2)$. Thus, we get the following result (see Appendix \ref{appendix.runningTime} for a detailed derivation of the time bound.)

\begin{restatable}[]{theorem}{preciseruntime}
    \textsc{Sum-$k$-DMC} and \textsc{Cov}-$k$-\textsc{DMC} can be solved in $O(k^5n^5)$ time.
\end{restatable}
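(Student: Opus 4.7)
The plan is to combine the structural results already established in this section—the distributive-lattice structure of $L^*$ (Lemma~\ref{lemma:distributivity}), the submodularity of $\hat{d}_\mathrm{sum}$ and $\hat{d}_\mathrm{cov}$ on $L^*$ (Theorems~\ref{theorem:submodularityTotalSum} and~\ref{theorem:submodularityCoverage}), and the explicit description of $J(L^*)$ given in Lemma~\ref{lemma:joinIrreduciblesLStar}—with Jiang's $O(|U|^3 T_\mathrm{EO})$ algorithm for SFM on sets, and then to carry out a careful running-time bookkeeping.

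First I would build the compact representation. By Lemma~\ref{lemma:picard}, the Picard--Queyranne graph $G(L)$ of the lattice $L$ of $s$-$t$ mincuts is produced in $F(m,n)+O(m)$ time with $|J(L)| = O(n)$. Lemma~\ref{lemma:joinIrreduciblesLStar} then yields $|J(L^*)| = O(kn)$, since $J(L^*) = \bigcup_{i=1}^k J_i$ with each $|J_i| = |J(L)| = O(n)$. The Hasse diagram $G(L^*)$ can be assembled directly from $G(L)$ by enumerating, for each pair $(i,p) \in [k] \times J(L)$, the cover relations in $L^*$ that arise either by replacing $p$ with its predecessor in $J(L)$ or by shifting the starting position $i$; this fits within $O(|J(L^*)|^2) = O(k^2 n^2)$ time.

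Next I would specify the evaluation oracle. By Birkhoff's theorem, SFM on $L^*$ is equivalent to SFM on the set lattice $\mathcal{D}(J(L^*))$ with ground set of size $|U| = O(kn)$. Given an ideal $I \subseteq J(L^*)$, the corresponding tuple $C(I) = [X_1,\ldots,X_k] \in L^*$ has as its $i$-th coordinate the join in $L$ of those $p \in J(L)$ for which $(0_L,\ldots,0_L,p,\ldots,p) \in J_j \cap I$ for some $j \leq i$. Since each $X_i$ is an $s$-$t$ mincut of size $\lambda \leq n$, recovering the entire $k$-tuple takes $O(k n^2)$ time (e.g.\ by maintaining a cumulative set of ``active'' join-irreducibles of $L$ as $i$ increases). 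Once $C(I)$ is in hand, tallying the multiplicities $\mu_e(C(I))$ over the $O(kn)$ edges in $E(C(I))$ costs $O(k^2 n)$ time, and evaluating the sums in \eqref{eq:dsumMultiplicityMin} or \eqref{eq:dcovMultiplicityMin} is then linear in $|E(C(I))|$. Altogether, $T_\mathrm{EO} = O(k^2 n^2)$.

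Finally, plugging $|U| = O(kn)$ and $T_\mathrm{EO} = O(k^2 n^2)$ into Jiang's $O(|U|^3 T_\mathrm{EO})$ bound yields $O((kn)^3 \cdot k^2 n^2) = O(k^5 n^5)$, as claimed; the preprocessing cost of building $G(L^*)$ and the initial max-flow call are absorbed by this term. The main obstacle I anticipate is the oracle analysis: one must verify that the recovery of $C(I)$ from an ideal and the subsequent multiplicity tally both fit inside the $O(k^2 n^2)$ budget, which forces the implementation to exploit the monotone nature of the cumulative active set across positions $i=1,\ldots,k$ rather than recomputing each coordinate from scratch.
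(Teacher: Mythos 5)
Your proposal is correct and follows essentially the same route as the paper: build the compact representation $G(L^*)$ of size $O(k^2n^2)$ from Lemmas~\ref{lemma:picard} and~\ref{lemma:joinIrreduciblesLStar}, implement an evaluation oracle by recovering the tuple in $L^*$ from an ideal of $J(L^*)$ and computing $\hat{d}_\mathrm{sum}$ (resp.\ $\hat{d}_\mathrm{cov}$) on it within $T_\mathrm{EO} = O(k^2n^2)$, and plug $|U| = O(kn)$ into Jiang's $O(|U|^3 T_\mathrm{EO})$ bound to get $O(k^5n^5)$. Your slightly sharper $O(kn^2)$ accounting for recovering $C(I)$ via the cumulative active set (versus the paper's $O(k^2n^2)$ join-of-all-elements argument) is a minor implementation refinement that does not change the final bound.
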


\section{A Simple Algorithm for Finding Disjoint Minimum s-t Cuts}\label{sec.disjointDMC}

In the previous section, we looked at the problem of finding the $k$ most diverse minimum $s$-$t$ cuts in a graph. Here, we consider a slightly different problem. Observe that for diversity measures $d_{\mathrm{sum}}$ and $d_{\mathrm{cov}}$, the maximum diversity is achieved when the elements of a collection are all pairwise disjoint. Thus, it is natural to ask for a maximum cardinality collection of $s$-$t$ mincuts that are pairwise disjoint; i.e., that are as diverse as possible. We call this problem \textsc{Maximum Disjoint Minimum $s$-$t$ Cuts} (or \textsc{Max-Disjoint MC} for short). 

\begin{extthm}[\textsc{Max-Disjoint MC}] \label{problem:MaxDisjointMC}
Given a graph $G = (V, E)$ and vertices $s,t \in V(G)$, find a set $S \subseteq \Gamma_G(s, t)$ such that $X \cap Y = \emptyset$ for all $X, Y \in S$, and $|S|$ is as large as possible. 
\end{extthm}

Observe that a solution to \textsc{Max-Disjoint MC} immediately yields a solution to $k$-\textsc{Disjoint Minimum $s$-$t$ Cuts}. In this section, we prove Theorem \ref{thm:3} by giving an algorithm for \textsc{Max-Disjoint MC} that runs in $O(F(m, n) + \lambda(G)m)$ time, where $F(m, n)$ is the time required by a max-flow computation. First, we look at a restricted case when the input graph can be decomposed into a collection of edge-disjoint $s$-$t$ paths and (possibly) some additional edges---we refer to such a graph as an \textit{$s$-$t$ path graph}---and devise an algorithm that handles such graphs. Then, we use this algorithm as a subroutine to obtain an algorithm that makes no assumption about the structure of the input graph. 

\subsection{When the input is an s-t path graph}



\begin{definition}
Let $H_{s,t}$ be a graph with designated vertices $s$ and $t$. We call $H_{s,t}$ an \emph{$s$-$t$ path graph} (or \emph{path graph} for short) if there is a collection $P$ of edge-disjoint $s$-$t$ paths such that $P$ covers all vertices in $V(H_{s,t})$. The \emph{height} of $H_{s,t}$, denoted by $\lambda(H_{s,t})$, is the maximum number of edge-disjoint $s$-$t$ paths in the graph. For fixed $P$, we call the edges of $H_{s,t}$ in $P$ \textit{path edges} and edges of $H_{s,t}$ not in $P$ \textit{non-path edges}. Two vertices in $H_{s,t}$ are \textit{path neighbors} if they are joined by a path edge, and \textit{non-path neighbors} if they are joined (exclusively) by a non-path edge. 
See Figure \ref{fig:pathGraph} for an illustration.
\end{definition}

\begin{figure}
    \centering
    \begin{tikzpicture}[scale=0.9, transform shape]
    \tikzset{edge/.style = {->,> = latex}}
    
    \node[label=left:{$s$},draw=black,circle] (s) at (-0.5, 0.5) {};
    \node[draw=black,circle] (p11) at (1, 2) {};
    \node[draw=black,circle] (p21) at (1, 1) {};
    \node[draw=black,circle] (p31) at (1, 0) {};
    \node[draw=black,circle] (p41) at (1, -1) {};
    
    \node[draw=black,circle] (p122) at (2, 1.5) {};
    \node[draw=black,circle] (p32) at (2, 0) {};
    \node[draw=black,circle] (p42) at (2, -1) {};
    
    \node[draw=black,circle] (p23) at (3, 2) {};
    \node[draw=black,circle] (p133) at (3, 1) {};
    \node[draw=black,circle] (p43) at (3, -1) {};
    
    \node[draw=black,circle] (p24) at (4, 2) {};
    \node[draw=black,circle] (p34) at (4, 0) {};
    \node[draw=black,circle] (p44) at (4, -1) {};
    
    \node[label=right:{$t$},draw=black,circle] (t) at (5.5, 0.5) {};
    
    \draw[edge] (s) -- node[above,yshift=0.25em,xshift=-0.25em] {$1$} (p11);
    \draw[edge] (s) -- node[above] {$2$} (p21);
    \draw[edge] (s) -- node[above] {$3$} (p31);
    \draw[edge] (s) -- node[below,yshift=-0.25em,xshift=-0.25em] {$4$} (p41);
    
    \draw[edge] (p11) -- node[above] {$1$} (p122);
    \draw[edge] (p21) -- node[below] {$2$} (p122);
    \draw[edge] (p31) -- node[above] {$3$} (p32);
    \draw[edge] (p41) -- node[below] {$4$} (p42);
    
    \draw[edge] (p122) -- node[above,xshift=-0.25em] {$2$} (p23);
    \draw[edge] (p122) -- node[below,xshift=-0.25em] {$1$} (p133);
    \draw[edge] (p32) -- node[below,xshift=0.25em] {$3$} (p133);
    \draw[edge] (p42) -- node[below] {$4$} (p43);
    
    \draw[edge] (p23) -- node[above] {$2$} (p24);
    \draw[edge] (p133) -- node[above,xshift=-0.25em] {$1$} (t);
    \draw[edge] (p133) -- node[below,xshift=-0.25em] {$3$} (p34);
    \draw[edge] (p43) -- node[below] {$4$} (p44);
    
    \draw[edge] (p24) -- node[above,xshift=0.25em] {$2$} (t);
    \draw[edge] (p34) -- node[below,xshift=-0.5em] {$3$} (t);
    \draw[edge] (p44) -- node[below,yshift=-0.25em,xshift=0.25em] {$4$} (t);
    
    \draw[edge,lightgray] (s) to [out=310,in=210] (p32.south);
    \draw[edge, lightgray] (p21) to (p32);
    \draw[edge, lightgray] (p23) to (t);
    \draw[edge, lightgray] (p31) to (p42);
    \draw[edge, lightgray] (p43) to (p34);
    \end{tikzpicture}
    \caption{Example of an $s$-$t$ path graph of height 4. Edges are labeled by integers corresponding to the path they belong to. Path edges are drawn in black and non-path edges in gray.}
    \label{fig:pathGraph}
\end{figure}
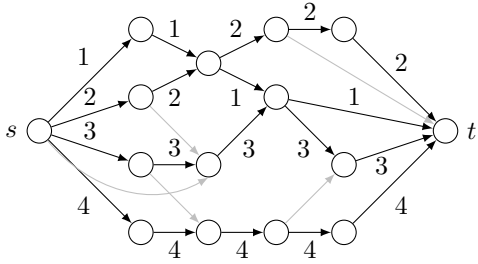

Two remarks are in order. The first is that, by Menger's theorem, the size of a minimum $s$-$t$ cut in an $s$-$t$ path graph coincides with its height. The second remark is that, from a graph $G$, one can easily obtain a path graph $H_{s,t}$ of height $\lambda(G)$ by finding a maximum-sized set $\mathcal{P}_{s,t}(G)$ of edge-disjoint $s$-$t$ paths in $G$ and letting $H_{s,t}$ be the induced subgraph of their union. Recall that, by Menger's theorem, a minimum $s$-$t$ cut in $G$ must contain exactly one edge from each path $p \in \mathcal{P}_{s,t}(G)$. Thus, every minimum $s$-$t$ cut of $G$ is in $H_{s,t}$. However, the reverse is not always true. In the above construction, there could be multiple new minimum $s$-$t$ cuts introduced in $H_{s, t}$ that arise from ignoring the reachability between 
vertices of $\mathcal{P}_{s,t}(G)$ in $G$. 
We will come back to this issue when discussing the general case in Section \ref{sec.generalCase}. 

\paragraph{The algorithm.}
The goal in this subsection is to find a maximum cardinality collection $\hat{C}$ of pairwise disjoint $s$-$t$ mincuts in a path graph $H_{s, t}$. 
We now explain the main ideas behind the algorithm. Without loss of generality, assume that the underlying set $\mathcal{P}_{s, t}(H_{s,t})$ of edge-disjoint $s$-$t$ paths that define $H_{s,t}$ is of maximum cardinality. 
To simplify notation, we sometimes drop the argument and denote such set by $\mathcal{P}_{s, t}$. 


Let $X$ be an $s$-$t$ mincut in $H_{s,t}$, and suppose we are interested in finding an $s$-$t$ mincut $Y$ disjoint from $X$ such that $X < Y$. Consider any two edges $e = (u, u')$ and $f = (v, v')$ in $X$, and let $g = (w, w')$ be a path successor of $f$; that is $f \prec_p  g$ with $p \in \mathcal{P}_{s,t}$. If there is a non-path edge $h = (u', z)$ such that $w' \leq z$, we say that $h$ is \textit{crossing} with respect to $g$, and that $g$ is \textit{invalid} with respect to $X$ (see Figure \ref{fig:crossingInvalidEdges} for an illustration). The notions of crossing and invalid edges provide the means to identify the edges that cannot possibly be contained in $Y$. Let $E_\mathrm{inv}(X)$ denote the set of invalid edges with respect to $X$. We make the following observation.

\begin{figure}
    \centering
    \begin{tikzpicture}[scale=0.9, transform shape]
    \tikzset{edge/.style = {->,> = latex}}

    \node[draw=black,circle,inner sep=2.5pt,minimum size=1pt] (u1) at (1,1) {};
    \node[draw=black,circle,inner sep=2.5pt,minimum size=1pt] (u2) at (3,1) {};

    \node[draw=black,circle,inner sep=2.5pt,minimum size=1pt] (v1) at (-1,0) {};
    \node[draw=black,circle,inner sep=2.5pt,minimum size=1pt] (v2) at (0,0) {};
    \node (dots1) at (1, 0) {$\dots$};
    \node[draw=black,circle,inner sep=2.5pt,minimum size=1pt] (w1) at (2,0) {};
    \node[draw=black,circle,inner sep=2.5pt,minimum size=1pt] (w2) at (3,0) {};
    \node (dots2) at (4, 0) {$\dots$};
    \node[draw=black,circle,inner sep=2.5pt,minimum size=1pt] (z) at (5,0) {};
        
    \draw[edge] (-1,1) node[left] {$\dots$} -- (u1);
    \draw[edge,blue] (u1) -- node[above] {$e$} (u2);
    \draw[edge] (u2) -- (5,1) node[right] {$\dots$};

    \draw[edge] (-2,0) node[left] {$\dots$} -- (v1);
    \draw[edge,blue] (v1) -- node[above] {$f$} (v2);
    \draw[edge] (v2) -- (dots1);
    \draw[edge] (dots1) -- (w1);
    \draw[edge] (w1) -- node[above] {$g$} (w2);
    \draw[edge] (w2) -- (dots2);
    \draw[edge] (dots2) -- (z);
    \draw[edge] (z) -- (6,0) node[right] {$\dots$};

    \draw[edge,lightgray] (u2) -- node[above] {$h$} (z);
    \end{tikzpicture}
    \caption{Example illustrating the 
    notions of crossing and invalid edges for an $s$-$t$ mincut $X$. Path edges are drawn in black and non-path edges in gray. Edges $e, f \in X$ are highlighted in blue. The edge $g$ is invalid with respect to $X$ since the edge $h$ is crossing with respect to it.}
    \label{fig:crossingInvalidEdges}
\end{figure}
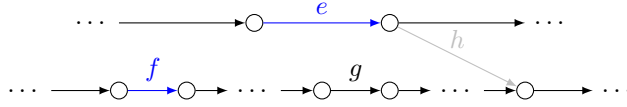 
 
\begin{observation} \label{observation:invalidBasic}
Let $Y > X$. Then $Y$ cannot contain an edge from $E_\mathrm{inv}(X)$.
\end{observation}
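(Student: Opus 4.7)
My plan is to argue by contradiction, producing an $s$-$t$ walk in $H_{s,t} \setminus Y$ whenever some invalid edge of $X$ belongs to $Y$. Implicit in the surrounding discussion is the restriction that $Y$ be disjoint from $X$; this hypothesis will enter at a key step, and without it the statement can indeed fail.

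Suppose, for contradiction, that $g = (w, w') \in Y$ is invalid with respect to $X$, with witnesses $e = (u, u') \in X$, $f = (v, v') \in X$ satisfying $f \prec_p g$ on some path $p \in \mathcal{P}_{s,t}$, and a non-path edge $h = (u', z)$ with $w' \leq z$. Because $Y$ is an $s$-$t$ mincut, Menger's theorem (together with the remark already recalled in the Preliminaries) implies that $Y$ consists of exactly one edge per path in $\mathcal{P}_{s,t}$, and every edge of $Y$ is a path edge. In particular, $g$ is the unique edge of $Y$ on $p$, so following $p$ from $w'$ to $t$ stays inside $H_{s,t} \setminus Y$; since $w' \leq z$ (inherited from the path-order $\preceq_p$) places $z$ on this $Y$-free tail of $p$, the vertex $z$ can reach $t$ in $H_{s,t} \setminus Y$. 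Next, let $p'$ be the path carrying $e$ and let $e_Y$ be the unique edge of $Y$ on $p'$; by $X \leq Y$ we have $e \preceq_{p'} e_Y$, and by $X \cap Y = \emptyset$ we have $e \neq e_Y$, so $e \prec_{p'} e_Y$. Thus both endpoints of $e$, including $u'$, are reachable from $s$ in $H_{s,t} \setminus Y$. Finally, since $h$ is a non-path edge, $h \notin Y$, and the concatenated walk
\[
s \rightsquigarrow u \xrightarrow{e} u' \xrightarrow{h} z \rightsquigarrow t
\]
lies entirely in $H_{s,t} \setminus Y$, contradicting that $Y$ separates $s$ from $t$.

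The main obstacle I anticipate is pinning down the precise meaning of $w' \leq z$: the path-order $\preceq_p$ is defined only on elements of a single path $p$, so one must either argue that $z$ lies on the same path $p$ containing $g$, or adopt the natural DAG-reachability extension of $\preceq_p$ and verify that $z$ can still be reached from $w'$ along edges that avoid $Y$. Once the correct reading is fixed, the rest is a routine reachability chase using Menger's theorem and the disjointness $X \cap Y = \emptyset$; the latter is essential, since if $e \in Y$ then $u'$ need not lie on the $s$-side of $Y$ and the chain from $s$ to $z$ through $e$ and $h$ collapses.
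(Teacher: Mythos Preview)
Your proof is correct and follows essentially the same route as the paper's: both arguments build an $s$-$t$ walk via the crossing edge $h$ that avoids $Y$ and derive a contradiction. The paper phrases it as ``$Y$ must cut the auxiliary path $p_3$, but every candidate edge leads to a contradiction with $Y>X$,'' while you phrase it as ``each segment of the walk is $Y$-free''; the underlying reachability chase is identical.

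Two remarks. First, you are right that the disjointness hypothesis $X\cap Y=\emptyset$ is doing real work: the paper states only $Y>X$ in the observation, but its own proof (like yours) tacitly needs $e=(u,u')\notin Y$ to rule out the case where $Y$ agrees with $X$ on the path $p'$ carrying $e$. Your explicit counterexample sketch is valid. Second, your worry about $w'\leq z$ is well placed but easily resolved from context: in the paper's definition of ``crossing'' (and in Figure~3), the vertex $z$ is understood to lie on the same path $p$ as $g$, so $w'\leq z$ is just $w'\preceq_p z$, and no DAG-reachability extension is needed.
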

\begin{proof}
    For the sake of contradiction, suppose there exists an edge $g=(w,w')$ in $E_\mathrm{inv}(X)\cap Y$. Consider the path $p_1 \in \mathcal{P}_{s,t}$, and let $f$ be the predecessor of $g$ on $p_1$ that is in $X$. Since $g \in E_\mathrm{inv}(X)$, there is a crossing edge $h=(u',z)$ with respect to $g$. Let $p_2 \in \mathcal{P}_{s,t}$ be the path containing $u'$, and let $(u,u')$ be the edge of $p_2$ that is in $X$. Let $p_3$ be the s-t path that follows $p_2$ from $s$ to $u$, then follows the crossing edge $h$, and then continues along $p_1$ to $t$. Since $Y$ is an $s$-$t$ cut it must contain an edge from this path. Since $Y$ must contain exactly one edge from each path in $\mathcal{P}_{s,t}$, it cannot contain $h$. Moreover, $Y$ already contains edge $g$ from $p_1$. Then Y must contain an edge from the part of $p_2$ from $s$ to $u'$. But this contradicts that $Y>X$. 
\end{proof}

\begin{algorithm}
\caption[Caption for LOF]{Obtain a Maximum Set of Disjoint Minimum $s$-$t$ Cuts.} \label{algo.kdisjointMSP}
\vspace{.5em}
\textbf{Input:} Path graph $H_{s, t}$. \\
\textbf{Output:} A maximum set $\hat{C}$ of disjoint $s$-$t$ mincuts in $H_{s, t}$. 
\vspace{.5em}
\begin{algorithmic}[1]
\newcommand\NoDo{\renewcommand\algorithmicdo{}}
\newcommand\ReDo{\renewcommand\algorithmicdo{\textbf{do}}}
\algrenewcommand\algorithmiccomment[1]{\hfill {\color{blue} \(\triangleright\) #1}}

\State Initialize collection $\hat{C} \leftarrow \emptyset$ and set $M \leftarrow \{s\}$. 
\While{$t$ is unmarked} \Comment{Traverse the graph from left to right.}
    \While{$M$ is not empty} \Comment{Marking step.}
        \For{each vertex $v \in M$}
            \For{each path $p \in \mathcal{P}_{s,t}$} \Comment{Identify invalid edges.}
                \State Identify the rightmost neighbor $u \in p$ of $v$ reachable by a non-path edge. 
                \If{$u$ is unmarked
                }
                    \State Mark $u$ and all (unmarked) vertices that are path-predecessors of $u$.
                \EndIf
            \EndFor
        \EndFor
        \State Set $M$ to the set of newly marked vertices.
    \EndWhile
    \State $X \leftarrow \bigcup \{(x, y) \in \mathcal{P}_{s,t} \, : \, x \text{ is marked, } y \text{ is unmarked}\}$. 
    \Comment{Cut-finding step.}
    \State $\hat{C} \leftarrow \hat{C} \cup \{X\}$. 
    \For{each $(x, y) \in X$} \Comment{Mark the head node of cut edges.}
        \State Mark $y$.
    \EndFor
    \State $M \leftarrow \bigcup_{(x,y) \in X} y$. \Comment{Newly marked vertices.}
\EndWhile
\State Return $\hat{C}$.
\end{algorithmic}
\vspace{.5em}
\end{algorithm}

If we extend the definition of $E_\mathrm{inv}(X)$ to also include all the edges that are path predecessors of edges in $X$, we obtain that, for any $s$-$t$ path $p \in \mathcal{P}_{s,t}$, the set of invalid edges along $p$ is a prefix of the path. As a result, 
%
%
if we can identify the (extended) set $E_{\mathrm{inv}}(X)$, then we can restrict our search of cut $Y$ to only the set of valid edges $E_\mathrm{val}(X):= E(H_{s,t}) \setminus E_\mathrm{inv}(X)$. 
This motivates the following iterative algorithm for finding a pairwise disjoint collection of $s$-$t$ mincuts: (1) Find the leftmost $s$-$t$ mincut $X$ in $H_{s,t}$, (2) identify the set $E_{\mathrm{inv}}(X)$ and find the leftmost $s$-$t$ mincut $Y$ amongst $E_\mathrm{val}(X)$, (3) set $X = Y$ and repeat step (2) until $E_\mathrm{val}(X) \cap p = \emptyset$ for any one path $p \in \mathcal{P}_{s,t}$, and finally (4) output the union of identified cuts as the returned collection $\hat{C}$. Informally, notice that the $s$-$t$ mincut identified at iteration $i$ is a successor of the mincuts identified at iterations $j < i$. Hence, the returned collection will consist of left-right ordered and pairwise disjoint $s$-$t$ mincuts. Moreover, picking the leftmost cut at each iteration prevents the set of invalid edges from growing unnecessarily large, which allows for more iterations and thus, a larger set returned. Next, we give a more formal description of the algorithm, the details of which are presented in Algorithm \ref{algo.kdisjointMSP}.


The algorithm works by traversing the graph from left to right in iterations while marking the vertices it visits. Initially, all vertices are unmarked, except for $s$. Each iteration consists of two parts: a \textit{marking} step, and a \textit{cut-finding} step. In the marking step (Lines 3-9), the algorithm identifies currently invalid edges by marking the non-path neighbors---and their path-predecessors---of currently marked vertices. (Observe that a path edge becomes invalid if both of its endpoints are marked.) In Algorithm \ref{algo.kdisjointMSP}, this is realized by a variable $M$ that keeps track of the vertices that have just been marked as a consequence of the marking of vertices previously present in $M$. 
In the cut-finding step (Lines 10-14), the algorithm then finds the leftmost minimum $s$-$t$ cut amongst valid path edges. Notice that, for each $s$-$t$ path in $\mathcal{P}_{s,t}$, removing its first valid edge prevents $s$ from reaching $t$ via that path. This means that our leftmost cut of interest is simply the set of all path edges that have exactly one of their endpoints marked. Following the identification of this cut, the step concludes by marking the head vertices of the identified cut edges. 
Finally, the algorithm terminates when the target vertex $t$ is visited and marked. See Figure \ref{fig:sweepingAlgorithm} for an example execution of the algorithm. 

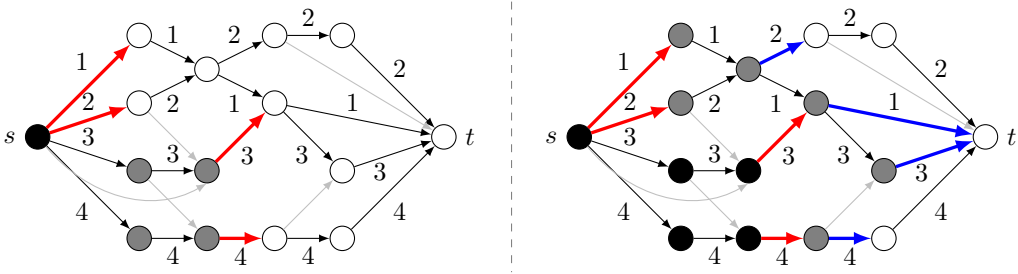
\begin{figure}
    \centering
    \begin{tikzpicture}[scale=0.9, transform shape]
    \tikzset{edge/.style = {->,> = latex}}
    
    \node[label=left:{$s$},draw=black,circle,fill=black] (s) at (-0.5, 0.5) {};
    \node[draw=black,circle] (p11) at (1, 2) {};
    \node[draw=black,circle] (p21) at (1, 1) {};
    \node[draw=black,circle,fill=black!50] (p31) at (1, 0) {};
    \node[draw=black,circle,fill=black!50] (p41) at (1, -1) {};
    
    \node[draw=black,circle] (p122) at (2, 1.5) {};
    \node[draw=black,circle,fill=black!50] (p32) at (2, 0) {};
    \node[draw=black,circle,fill=black!50] (p42) at (2, -1) {};
    
    \node[draw=black,circle] (p23) at (3, 2) {};
    \node[draw=black,circle] (p133) at (3, 1) {};
    \node[draw=black,circle] (p43) at (3, -1) {};
    
    \node[draw=black,circle] (p24) at (4, 2) {};
    \node[draw=black,circle] (p34) at (4, 0) {};
    \node[draw=black,circle] (p44) at (4, -1) {};
    
    \node[label=right:{$t$},draw=black,circle] (t) at (5.5, 0.5) {};
    
    \draw[edge,red,line width=1.25pt] (s) -- node[above,yshift=0.25em,xshift=-0.25em] {\color{black}$1$} (p11);
    \draw[edge,red,line width=1.25pt] (s) -- node[above] {\color{black}$2$} (p21);
    \draw[edge] (s) -- node[above] {$3$} (p31);
    \draw[edge] (s) -- node[below,yshift=-0.25em,xshift=-0.25em] {$4$} (p41);
    
    \draw[edge] (p11) -- node[above] {$1$} (p122);
    \draw[edge] (p21) -- node[below] {$2$} (p122);
    \draw[edge] (p31) -- node[above] {$3$} (p32);
    \draw[edge] (p41) -- node[below] {$4$} (p42);
    
    \draw[edge] (p122) -- node[above,xshift=-0.25em] {$2$} (p23);
    \draw[edge] (p122) -- node[below,xshift=-0.25em] {$1$} (p133);
    \draw[edge,red,line width=1.25pt] (p32) -- node[below,xshift=0.25em] {\color{black}$3$} (p133);
    \draw[edge,red,line width=1.25pt] (p42) -- node[below] {\color{black}$4$} (p43);
    
    \draw[edge] (p23) -- node[above] {$2$} (p24);
    \draw[edge] (p133) -- node[above,xshift=-0.25em] {$1$} (t);
    \draw[edge] (p133) -- node[below,xshift=-0.25em] {$3$} (p34);
    \draw[edge] (p43) -- node[below] {$4$} (p44);
    
    \draw[edge] (p24) -- node[above,xshift=0.25em] {$2$} (t);
    \draw[edge] (p34) -- node[below,xshift=-0.5em] {$3$} (t);
    \draw[edge] (p44) -- node[below,yshift=-0.25em,xshift=0.25em] {$4$} (t);
    
    \draw[edge,lightgray] (s) to [out=310,in=210] (p32.south);
    \draw[edge, lightgray] (p21) to (p32);
    \draw[edge, lightgray] (p23) to (t);
    \draw[edge, lightgray] (p31) to (p42);
    \draw[edge, lightgray] (p43) to (p34);
    
    \coordinate (a) at (6.5,2.5);
    \coordinate (b) at (6.5,-1.5);
    
    \draw[dashed,gray] (a) -- (b);
    
    \node[label=left:{$s$},draw=black,circle,fill=black] (s) at (7.5, 0.5) {};
    \node[draw=black,circle,fill=black!50] (p11) at (9, 2) {};
    \node[draw=black,circle,fill=black!50] (p21) at (9, 1) {};
    \node[draw=black,circle,fill=black] (p31) at (9, 0) {};
    \node[draw=black,circle,fill=black] (p41) at (9, -1) {};
    
    \node[draw=black,circle,fill=black!50] (p122) at (10, 1.5) {};
    \node[draw=black,circle,fill=black] (p32) at (10, 0) {};
    \node[draw=black,circle,fill=black] (p42) at (10, -1) {};
    
    \node[draw=black,circle] (p23) at (11, 2) {};
    \node[draw=black,circle,fill=black!50] (p133) at (11, 1) {};
    \node[draw=black,circle,fill=black!50] (p43) at (11, -1) {};
    
    \node[draw=black,circle] (p24) at (12, 2) {};
    \node[draw=black,circle,fill=black!50] (p34) at (12, 0) {};
    \node[draw=black,circle] (p44) at (12, -1) {};
    
    \node[label=right:{$t$},draw=black,circle] (t) at (13.5, 0.5) {};
    
    \draw[edge,red,line width=1.25pt] (s) -- node[above,yshift=0.25em,xshift=-0.25em] {\color{black}$1$} (p11);
    \draw[edge,red,line width=1.25pt] (s) -- node[above] {\color{black}$2$} (p21);
    \draw[edge] (s) -- node[above] {$3$} (p31);
    \draw[edge] (s) -- node[below,yshift=-0.25em,xshift=-0.25em] {$4$} (p41);
    
    \draw[edge] (p11) -- node[above] {$1$} (p122);
    \draw[edge] (p21) -- node[below] {$2$} (p122);
    \draw[edge] (p31) -- node[above] {$3$} (p32);
    \draw[edge] (p41) -- node[below] {$4$} (p42);
    
    \draw[edge,blue,line width=1.25pt] (p122) -- node[above,xshift=-0.25em] {\color{black}$2$} (p23);
    \draw[edge] (p122) -- node[below,xshift=-0.25em] {$1$} (p133);
    \draw[edge,red,line width=1.25pt] (p32) -- node[below,xshift=0.25em] {\color{black}$3$} (p133);
    \draw[edge,red,line width=1.25pt] (p42) -- node[below] {\color{black}$4$} (p43);
    
    \draw[edge] (p23) -- node[above] {$2$} (p24);
    \draw[edge,blue,line width=1.25pt] (p133) -- node[above,xshift=-0.25em] {\color{black}$1$} (t);
    \draw[edge] (p133) -- node[below,xshift=-0.25em] {$3$} (p34);
    \draw[edge,blue,line width=1.25pt] (p43) -- node[below] {\color{black}$4$} (p44);
    
    \draw[edge] (p24) -- node[above,xshift=0.25em] {$2$} (t);
    \draw[edge,blue,line width=1.25pt] (p34) -- node[below,xshift=-0.5em] {\color{black}$3$} (t);
    \draw[edge] (p44) -- node[below,yshift=-0.25em,xshift=0.25em] {$4$} (t);
    
    \draw[edge,lightgray] (s) to [out=310,in=210] (p32.south);
    \draw[edge, lightgray] (p21) to (p32);
    \draw[edge, lightgray] (p23) to (t);
    \draw[edge, lightgray] (p31) to (p42);
    \draw[edge, lightgray] (p43) to (p34);
    
    \end{tikzpicture}
    \caption{Example illustrating the first two iterations of Algorithm \ref{algo.kdisjointMSP} on a path graph of height 4. The black- and gray-shaded vertices represent vertices marked at the previous and current iterations, respectively. The red edges correspond to the $s$-$t$ mincut found at the end of the first (left) iteration. Similarly, the blue edges correspond to the $s$-$t$ mincut found at the second (right) iteration.}
    \label{fig:sweepingAlgorithm}
\end{figure} 

We now make the following claim about the complexity of the algorithm, followed by an analysis of its correctness. 

\begin{claim} \label{claim:sweepingAlgoComplexity}
The complexity of Algorithm \ref{algo.kdisjointMSP} on an $m$-edge, $n$-vertex path graph is $O(m \log n)$. 
\end{claim}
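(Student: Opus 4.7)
The plan is to amortize the work of Algorithm~\ref{algo.kdisjointMSP} across the entire execution by exploiting the monotonicity of the marking: once a vertex is marked it stays marked, and therefore each vertex enters the set $M$ at most once. This immediately bounds the total number of executions of the ``for each $v \in M$'' loop (Line~4) by $O(n)$ across all passes of the outer while loop, and it is the backbone of the analysis.

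For the marking step (Lines~3--9), I would maintain, for each path $p \in \mathcal{P}_{s,t}$, a balanced binary search tree storing the positions of the currently unmarked vertices on $p$. When a vertex $v \in M$ is processed, I iterate over the non-path out-edges of $v$, bucketed by target path, and identify the rightmost non-path neighbor $u$ on each contributing path in time proportional to the number of non-path out-edges of $v$. The subsequent step of marking $u$ together with all unmarked path-predecessors of $u$ is implemented by repeatedly deleting the largest element at or below position$(u)$ from the BST of $p$; each deletion costs $O(\log n)$ and removes a distinct vertex, so the total deletion cost summed over the whole algorithm is $O(n \log n)$. Non-path edge inspections summed over all vertex processings contribute $O(m)$, since each such edge is examined exactly when its tail is processed. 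Hence the marking step runs in $O(m + n \log n) = O(m \log n)$ time.

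For the cut-finding step (Lines~10--14), rather than recomputing $X$ from scratch, I would maintain a \emph{frontier} $F$ consisting of those path edges $(x,y)$ with $x$ marked and $y$ unmarked, updated incrementally whenever a vertex becomes marked: at most one incoming path edge leaves $F$ and at most one outgoing path edge enters, for $O(1)$ work per marking event and $O(n)$ in total. Because the marking step always pulls in the path-predecessors along with a newly marked vertex, the marked set restricted to each path is always a prefix, so $F$ has exactly $\lambda$ edges at the start of every outer iteration and equals the cut $X$ appended to $\hat{C}$. Outputting $X$ costs $O(\lambda)$ per outer iteration, summing to $O(m)$ across all iterations since the returned cuts are pairwise edge-disjoint. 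Combining both steps yields the claimed $O(m \log n)$ bound.

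The main obstacle is verifying that the various data-structure operations---the BST deletions used to mark path-predecessors, the incremental frontier updates, and the bucketed lookups of non-path edges---all amortize cleanly against the ``each vertex marked at most once'' invariant, and that the invariant itself is preserved through both the inner and outer while loops. Once this invariant is stated precisely and confirmed by induction on the inner loop, the rest of the analysis is a bookkeeping exercise.
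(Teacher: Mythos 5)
Your proof is correct and reaches the claimed $O(m\log n)$ bound, but it implements and charges the work differently from the paper. The paper's argument is static: it observes that a vertex $v$ is visited at most $\deg(v)$ times, equips each vertex in a preprocessing phase with a sorted list $A_v$ of (path, rightmost-neighbor) pairs so that Line~6 is a single $O(\log n)$ query, and charges $O(\log n)$ per visit, giving $O(m\log n)$ including the $O(m\log n)$ preprocessing. You instead amortize against the invariant that each vertex is marked (and hence enters $M$) exactly once, maintain dynamic per-path search trees of \emph{unmarked} positions so that marking a prefix is a sequence of deletions each charged to a distinct marking event, and maintain the cut as an incrementally updated frontier rather than recomputing it; this dispenses with the paper's per-vertex rightmost-neighbor tables and makes the cut-finding step explicitly $O(1)$ amortized per marking. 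Your route is arguably cleaner about where the $\log$ factor comes from, and it isolates the one structural fact both proofs need (markings are monotone and per-path marked sets are prefixes). One caveat: the paths in $\mathcal{P}_{s,t}$ are edge-disjoint but not vertex-disjoint, so a vertex may lie on several paths; it then has more than one incoming/outgoing path edge (so the frontier update is $O(\#\text{adjacent path edges})$ rather than $O(1)$ per marking), and a newly marked vertex must be deleted from the search tree of every path containing it, so the deletion total is governed by vertex--path incidences, which is $O(m)$ rather than $O(n)$. This only changes your intermediate totals from $O(n\log n)$ to $O(m\log n)$ and leaves the final bound intact; the paper glosses over the same point in its $path(u)$, $pos(u)$ bookkeeping.
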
 
\begin{proof}
Let $H_{s,t}$ be our input path graph. First, notice that each vertex $v \in H_{s, t}$ is visited at most $deg(v)$ times by the algorithm. This follows from the fact that $v$ is only visited whenever one of three cases occurs: (i) $v$ is reachable by a marked vertex via a non-path edge (Line 6), (ii) $v$ is a predecessor of a marked vertex $u$ on a path $p \in \mathcal{P}_{s,t}$ (Line 8), or (iii) $v$ is the head node of an identified minimum $s$-$t$ cut (Line 12). 
We know that $v$ can be the endpoint of at most $deg(v) - 2$ non-path edges. Similarly, $v$ can be the endpoint of at most 2 path edges. Since a vertex cannot be reached again by a previously traversed edge, the remark follows. 

Now, observe that each time a vertex is visited, the algorithm performs only $O(1)$ work, except for the step in Line 6 where each currently marked vertex $v \in M$ must identify its rightmost neighbor on each path in $\mathcal{P}_{s,t}$. We can assume that each vertex $v \in H_{s,t}$ is equipped with a data structure $A_v$ that given a query path $p \in P(H_{s, t})$, can answer in $O(\log n)$ time which is the rightmost neighbor $u$ of $v$ in $p$.\footnote{If we are willing to forego worst-case complexity for amortized complexity, we can assume a data structure with constant insert and query complexity via hash tables.} Therefore, as the algorithm performs $O(\log n)$ work each time it visits a vertex $v \in H_{s,t}$, and it does so at most $deg(v)$ times, the claim would follow. It only remains to analyze the preprocessing time of equipping the graph with such data structures. 

We claim that the graph can be preprocessed in $O(m \log n)$ time as follows. Assume that each node $u \in H_{s,t}$ has two variables $path(u)$ and $pos(u)$ which store the path to which it belongs and its position in said path, respectively. 
First, for each vertex $v \in H_{s,t}$ we initialize an empty list $A_v$ of tuples of the form $(a, b)$. Then, for each neighbor $u$ of $v$, query the list $A_v$ for the tuple $(x, y)$ such that $x = path(u)$. If it exists and $pos(u) > y$, set $y = pos(u)$. If it does not exist, then create the tuple $(path(u), pos(u))$ and insert it in $A_v$ in sorted order (by path). Since $A_v$ can be of size at most $\lambda(H_{s, t})$, it is clear that querying and inserting can be implemented in $O(\log (H_{s, t}))$ time by binary search. Equipping each vertex with these lists then requires $O(deg(v) \cdot \log (H_{s. t}))$ time per vertex. Thus, the total preprocessing time is $O(m \log (\lambda(H_{s,t})))$, which can be simplified to $O(m \log n)$.
\end{proof}

\paragraph{Correctness of Algorithm \ref{algo.kdisjointMSP}.} 
We note an important property of collections of $s$-$t$ mincuts. 
(We use $d(C)$ to denote any of $d_\mathrm{sum}(C)$ or $d_\mathrm{cov}(C)$.)

\begin{claim} \label{claim:leftmostGraph}
Let $C$ be a left-right ordered collection of minimum $s$-$t$ cuts in a graph $G$, the collection $\tilde{C}$ obtained by replacing $S_{\mathrm{min}}(\bigcup_{X \in C} X)$ (resp. $S_\mathrm{max}(\bigcup_{X \in C} X)$) with $S_\mathrm{min}(G)$ (resp. $S_\mathrm{max}(G)$) has cost $d(\tilde{C}) \leq d(C)$. 
\end{claim}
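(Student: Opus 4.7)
The plan is to handle the $S_{\min}$ case in detail; the $S_{\max}$ case follows by a completely symmetric argument (reversing the order on mincuts). Let $C = [X_1,\ldots,X_k]$ be left-right ordered with $X_1 = S_{\min}(\bigcup_{X\in C} X)$, set $X_1^{\ast} := S_{\min}(G)$, and write $\tilde{C} = [X_1^{\ast}, X_2, \ldots, X_k]$. First I would note that $X_1^{\ast} \leq X_1$ by definition of the globally leftmost mincut, so $\tilde{C}$ is again in left-right order and thus a legitimate element of $U_{\mathrm{lr}}^k$.

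Next, I would translate the comparison between $d(\tilde{C})$ and $d(C)$ into a statement about edge multiplicities, using the reformulations in Eqs.~\eqref{eq:dsumMultiplicity} and \eqref{eq:dcovMultiplicity}. Writing $\gamma := X_1 \setminus X_1^{\ast}$ and $\delta := X_1^{\ast} \setminus X_1$, only edges in $\gamma \cup \delta$ see a change in multiplicity: $\mu_e(\tilde{C}) = \mu_e(C) - 1$ on $\gamma$, $\mu_e(\tilde{C}) = \mu_e(C) + 1$ on $\delta$, and $\mu_e$ is unchanged otherwise. It therefore suffices to control the aggregate change of $\sum_e \binom{\mu_e}{2}$ (for $d_{\mathrm{sum}}$) and $\sum_e (\mu_e - 1)$ over shared edges (for $d_{\mathrm{cov}}$) induced by the swap.

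The main obstacle, and heart of the proof, would be to show that $\mu_e(C) = 0$ for every $e \in \delta$. I would prove this via Menger's theorem: fix a maximum family $\mathcal{P}_{s,t}$ of edge-disjoint $s$-$t$ paths in $G$, so that every mincut of $G$ contains exactly one edge from each $p \in \mathcal{P}_{s,t}$. Each $e \in \delta$ lies on a unique $p \in \mathcal{P}_{s,t}$ and, since $e \in X_1^{\ast}$, it is precisely the edge of $X_1^{\ast}$ on $p$. Denoting by $e_1, e_i$ the edges of $X_1, X_i$ on $p$, the relation $X_i \geq X_1 \geq X_1^{\ast}$ gives $e \preceq_p e_1 \preceq_p e_i$; and $e \notin X_1$ forces $e \neq e_1$, so $e \prec_p e_1 \preceq_p e_i$, which in turn implies $e \neq e_i$ and hence $e \notin X_i$. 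Consequently $\mu_e(C) = 0$ and $\mu_e(\tilde{C}) = 1$, which contributes nothing to either $\binom{\mu_e}{2}$ or $\max(\mu_e - 1, 0)$ before or after the swap.

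Combining these observations, the contribution from $\delta$ to both summations is unchanged, while the contribution from $\gamma$ can only weakly decrease (each multiplicity there drops by one, and a term either vanishes or shrinks strictly). Summing over edges and substituting back into \eqref{eq:dsumMultiplicity}/\eqref{eq:dcovMultiplicity} yields the desired cost comparison for $d$. The only nontrivial step is the Menger-based argument producing $\mu_e(C) = 0$ for $e \in \delta$, which crucially exploits that $S_{\min}(G)$ is globally leftmost rather than merely leftmost within $E(C)$; the remaining multiplicity bookkeeping is then immediate.
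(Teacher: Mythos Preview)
Your argument is correct and follows essentially the same line as the paper's own proof: both observe that $S_{\min}(G)$ lies weakly to the left of every edge appearing in $C$, so swapping it in cannot create new intersections with $X_2,\ldots,X_k$ while possibly destroying some old ones. Your version is simply more explicit---you spell out via Menger's theorem exactly why each edge of $\delta = X_1^{\ast}\setminus X_1$ has multiplicity $0$ in $C$, whereas the paper asserts in one sentence that ``no edge of $\bigcup_{X\in C}X$ lies to the left of $S_{\min}(G)$'' and concludes directly that pairwise intersections can only decrease.
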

\begin{proof}
We prove this only for $S_\mathrm{min}(\cdot)$ as the proof for $S_\mathrm{max}(\cdot)$ is analogous. 
For simplicity, let us denote $S_{\mathrm{min}}(C) := S_{\mathrm{min}}(\bigcup_{X \in C} X)$. By definition, we know that no edge of $\bigcup_{X \in C} X$ lies to the left of $S_{min}(G)$. Then replacing $S_{\mathrm{min}}(C)$ with $S_\mathrm{min}(G)$ can only decrease the number of pairwise intersections previously present between $S_{\mathrm{min}}(C)$ and the cuts in $C \setminus S_{\mathrm{min}}(C)$. Notice that our measures of diversity only penalize edge intersections. Hence, the cost of collection $\tilde{C}$ cannot be greater than that of $C$.
\end{proof}

Now, consider an arbitrary collection of $k$ edge-disjoint $s$-$t$ mincuts in a path graph $H_{s,t}$. Corollary \ref{corollary:solutionSpace} implies that there also exists a collection of $k$ edge-disjoint $s$-$t$ mincuts in $H_{s,t}$ that is in left-right order. In particular, this is true for a collection of maximum cardinality $k_{\textbf{max}}$. Together with Claim \ref{claim:leftmostGraph}, this means that there always exists a collection $\hat{C}$ of edge-disjoint $s$-$t$ mincuts in $H_{s, t}$ with the following properties: 
\begin{enumerate}[label=(\roman*),leftmargin=2\parindent]
    \item $\hat{C}$ has size $k_{\textbf{max}}$, \label{property.I}
    \item $\hat{C}$ is in left-right order, \label{property.II}
    \item $\hat{C}$ contains the leftmost $s$-$t$ mincut of $H_{s, t}$, and  \label{property.III}
    \item The set $S_\mathrm{max}(\hat{C}) \cap S_\mathrm{max}(H_{s,t})$ is not empty. \label{property.IV} 
\end{enumerate}
We devote the rest of the subsection to proving the following lemma, which serves to prove the correctness of Algorithm \ref{algo.kdisjointMSP}. 

\begin{lemma} \label{lemma.algorithmCorrectness}
Algorithm \ref{algo.kdisjointMSP} returns a collection of edge-disjoint minimum $s$-$t$ cuts that satisfies Properties \ref{property.I}--\ref{property.IV}.
\end{lemma}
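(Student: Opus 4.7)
The plan is to prove the lemma by establishing the four properties in turn, with maximality being the main challenge. I would maintain the following invariant throughout the outer loop: after the $i$-th execution of the cut-finding step, the partial collection $[X_1,\ldots,X_i]$ consists of pairwise edge-disjoint minimum $s$-$t$ cuts, is in left-right order, and $X_i$ is the leftmost $s$-$t$ mincut strictly greater than $X_{i-1}$ that uses only edges in $E_{\mathrm{val}}(X_{i-1})$ (taking $X_0$ as a vacuous anchor).

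First I would verify soundness of a single iteration. When Line 10 is reached with marked set $M$ and unmarked set $U = V(H_{s,t}) \setminus M$, the marking loop enforces the property that if a path edge has its tail marked, then every path-predecessor of that edge also has its tail marked. Consequently, the set $X = \{(x,y) \in \mathcal{P}_{s,t} : x \in M, y \in U\}$ contains exactly one edge from each path in $\mathcal{P}_{s,t}$. Since $s$ is always marked and $t$ is unmarked during the cut-finding step, $X$ is an $s$-$t$ cut of size $\lambda(H_{s,t})$ and thus a mincut by Menger's theorem. Properties (ii) and pairwise edge-disjointness then follow directly from the construction: the marking at the start of iteration $i+1$ starts from the heads of the edges of $X_i$, so every edge in $X_{i+1}$ sits strictly after the corresponding edge of $X_i$ along its path, giving both $X_i < X_{i+1}$ and $X_i \cap X_{i+1} = \emptyset$.

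Next I would handle Property (iii) by observing that at iteration $1$ the marking procedure starts from $M=\{s\}$ and marks precisely those vertices lying on the source side of the leftmost mincut: a vertex $v$ becomes marked if and only if it is reachable from $s$ by alternating forward non-path edges with backward path edges, which is exactly the closure condition that characterises the source side of $S_{\min}(H_{s,t})$. Hence $X_1 = S_{\min}(H_{s,t})$. For Property (iv), termination happens precisely when $t$ is marked, which occurs via a cut edge whose head is $t$ (or via an invalid-edge chain reaching $t$ during the marking step). A short case analysis shows that the final cut $X_k$ must share at least one edge with $S_{\max}(H_{s,t})$, since otherwise the valid-edge subgraph would still permit a further mincut strictly to the right.

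The main obstacle is Property (i), maximality, for which I would use an exchange argument. Assume for contradiction that there exists an edge-disjoint collection $C^{*}=[Y_1,\ldots,Y_{k^{*}}]$ of $s$-$t$ mincuts with $k^{*}>k$; by Corollary \ref{corollary:solutionSpace} we may assume $C^{*}$ is in left-right order. I would prove by induction on $i$ that $X_i \leq Y_i$ for every $1 \leq i \leq k$. The base case follows from Property (iii), since $X_1 = S_{\min}(H_{s,t})$ is the minimum element of the lattice $L$ of $s$-$t$ mincuts. For the inductive step, $Y_{i+1}$ satisfies $Y_{i+1} > Y_i \geq X_i$, so by Observation \ref{observation:invalidBasic} we have $Y_{i+1} \subseteq E_{\mathrm{val}}(X_i)$; the invariant identifies $X_{i+1}$ as the leftmost such mincut, hence $X_{i+1} \leq Y_{i+1}$. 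In particular, $Y_{k+1}$ would be an $s$-$t$ mincut strictly greater than $X_k$ lying entirely in $E_{\mathrm{val}}(X_k)$, which contradicts the termination of the algorithm after iteration $k$: the existence of such $Y_{k+1}$ keeps $t$ separated from $s$ in the valid-edge subgraph after iteration $k$, so the marking step of a putative iteration $k+1$ would not reach $t$ and the outer loop would continue. The delicate step is formalising the equivalence between $t$ being marked and the non-existence of any further disjoint mincut on the right; this requires tracing how the closure under "marked-head implies predecessor marked" interacts with the valid-edge structure to ensure that any post-$X_k$ disjoint mincut must block some unmarked $s$-to-$t$ route in the valid subgraph.
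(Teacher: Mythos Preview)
Your proposal is correct and, for disjointness and Properties~\ref{property.II}--\ref{property.IV}, follows essentially the same closure/marking reasoning as the paper. The meaningful difference lies in how you establish Property~\ref{property.I}.

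For maximality, the paper argues indirectly. It first states Claim~\ref{claim:leftmostCuts} (no mincut disjoint from $\hat{C}_{i-1}$ has an edge path-preceding an edge of $X_i$), and then proves a separate Observation~\ref{obs.new}: if $|\hat{C}| < k_{\max}$, a pigeonhole-style count over a left-right ordered optimal collection $C_{\max}$, combined with Property~\ref{property.IV}, forces some $Y\in C_{\max}$ to be squeezed strictly between consecutive $X_i$ and $X_{i+1}$ with an edge to the left of $X_{i+1}$. This directly contradicts Claim~\ref{claim:leftmostCuts}. Your route is the standard ``greedy stays ahead'' induction $X_i \le Y_i$: your invariant is exactly Claim~\ref{claim:leftmostCuts} rephrased, the inductive step is immediate from Observation~\ref{observation:invalidBasic}, and the existence of $Y_{k+1}$ strictly beyond $X_k$ inside $E_{\mathrm{val}}(X_k)$ contradicts termination. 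This bypasses Observation~\ref{obs.new} and its counting entirely, at the cost of making the termination contradiction explicit rather than absorbing it into a structural observation. Both arguments ultimately hinge on the same leftmost property and on the same ``the algorithm should not have stopped'' endgame; your organisation is arguably cleaner, while the paper's makes the role of Property~\ref{property.IV} more visible. You rightly flag that showing $t$ cannot become marked while a valid disjoint mincut $Y_{k+1}$ remains is the one step needing care; the paper is equally terse here, deferring to the same closure reasoning used for Property~\ref{property.III}.
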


Let $\hat{C}$ denote the solution returned by the algorithm. First, we show that $\hat{C}$ contains only disjoint cuts. This follows from the fact that a cut can only be found amongst valid edges at any given iteration, and once an edge has been included in a cut, it becomes invalid at every subsequent iteration. Similarly, Properties \ref{property.II} and \ref{property.III} are consequences of the notion of invalid edges. We start by proving the latter. Let $X_1$ denote the leftmost cut in $\hat{C}$. For the sake of contradiction, assume there is a minimum $s$-$t$ cut $Y$ such that $e \prec_p  f$. Here, $e \in Y$, $f \in X_1$ and \textit{w.l.o.g.} $p$ is an $s$-$t$ path from any arbitrary maximum collection of $s$-$t$ paths in $H_{s,t}$. For the algorithm to pick edge $f = (u, u')$ as part of $X_1$ it must be that vertex $u$ is marked and $u'$ is not. We know that the predecessors of marked vertices must also be marked. Hence we know that both endpoints of edge $e$ are marked. But by definition, this means that edge $e$ is invalid, and cannot be in a minimum $s$-$t$ cut. This gives us the necessary contradiction, and $X_1$ must be the leftmost cut in the graph. 

We continue with Property \ref{property.II}. This property follows from the fact that, at any given iteration, the posets of invalid path-edges on each path of $H_{s,t}$ are ideals of the set of path edges. This means that the edges in the cut found by the algorithm at iteration $i$ are all path predecessors of an edge in the cut found at iteration $i+1$. Carrying on with Property \ref{property.IV}, we prove that it follows from the fact that the algorithm terminates when the target node $t$ is marked. Suppose, for the sake of contradiction, that the cuts $S_\mathrm{max}(\hat{C})$ and $S_\mathrm{max}(H_{s,t})$ do not intersect. Then, given that $S_\mathrm{max}(\hat{C})$ is the last cut found by our algorithm, to mark node $t$ there must exist a non-path edge connecting the endpoint $v$ of some edge $e = (u, v) \in S_\mathrm{max}(\hat{C})$ to $t$. But this implies that no path-successor of edge $e$ can be in an $s$-$t$ mincut, which makes $e$ the rightmost edge on its path that belongs to an $s$-$t$ mincut. Therefore, $e$ must also be contained in $S_\mathrm{max}(H_{s,t})$, a contradiction. 

It only remains to show Property \ref{property.I}, which states that the collection $\hat{C}$ is of maximum cardinality $k_{\textbf{max}}$. For this, we make the following claim, whose proof is analogous to the proof of Property \ref{property.III}. Let $\hat{C}_i$ be the collection of $s$-$t$ mincuts maintained by the algorithm at the end of iteration $i$. 

\begin{claim} \label{claim:leftmostCuts}
Consider set $\hat{C}_{i-1}$ and let $X_i$ be the minimum $s$-$t$ cut found by the algorithm at iteration $i$. Then, there is no minimum $s$-$t$ cut $Y$ such that: (i) $Y$ is disjoint from each $X \in \hat{C}_{i-1}$, and (ii) $Y$ contains an edge that is a path predecessor of an edge of $X_i$.
\end{claim}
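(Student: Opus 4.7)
The plan is to proceed by contradiction, closely mirroring the proof of Property III. Assume a mincut $Y$ satisfying (i) and (ii) exists, and let $e=(v,v')\in Y$ be an edge that is a path predecessor of some edge $f=(u,u')\in X_i$ on a path $p\in\mathcal{P}_{s,t}$, so that $e\prec_p f$.

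The first step is identical to that of Property III: at the moment the algorithm selects $X_i$, both $v$ and $v'$ are already marked. Indeed, for $f$ to be chosen as a cut edge, its tail $u$ must be marked and its head $u'$ must be unmarked, and the key invariant of the marking procedure (every path-predecessor of a marked vertex is itself marked) forces $v$ and $v'$ to be marked because they are path-predecessors of $u$ on $p$. Thus $e$ is a path edge both of whose endpoints are marked when $X_i$ is chosen.

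In the second step I adapt the invalidity argument of Property III to show that such an $e$ cannot belong to any minimum $s$-$t$ cut disjoint from $\hat{C}_{i-1}$. I trace the reason $v'$ is marked. If $v'$ is the head of a cut edge $e^\ast\in X_j$ with $j\le i-1$ lying on path $p$, then uniqueness of the path edge on $p$ ending at $v'$ yields $e^\ast=e$, so $e\in X_j$, contradicting $Y\cap X_j=\emptyset$. If $v'$ is first marked through the propagation rules in the marking step of some iteration $j\le i$, then there is a previously marked vertex $w$ and a non-path edge $h=(w,u^\ast)$ with $v'\preceq_{p''}u^\ast$ on a path $p''\in\mathcal{P}_{s,t}$ containing $v'$. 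Concatenating the prefix of the path $p_w\in\mathcal{P}_{s,t}$ containing $w$ from $s$ to $w$, the edge $h$, and the suffix of $p''$ from $u^\ast$ to $t$ yields an $s$-$t$ path $q$. Since $|Y|=\lambda(H_{s,t})=|\mathcal{P}_{s,t}|$, $Y$ contains exactly one edge on each path of $\mathcal{P}_{s,t}$ and no non-path edges, so $Y$ must cut $q$ using either an edge on $p_w[s,w]$ or an edge on $p''[u^\ast,t]$; iterating the trace back to the marking of $w$ (which terminates either at $s$ or at a previously found cut edge) eventually forces $Y$ to share an edge with some cut in $\hat{C}_{i-1}$, contradicting disjointness.

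The main obstacle is the remaining subcase in which $v'$ is the head of a cut edge $e^\ast\in X_j$ that lies on a path $p'\ne p$, so that $v'$ is a vertex shared by $p$ and $p'$ and one cannot directly equate $e$ with an edge in $X_j$. I plan to handle this by induction on $i$---with Property III providing the base case $i=1$---together with a comparison of $X_j$'s edge on $p$ (call it $e_j^p$) with $e$: if $e\prec_p e_j^p$, the inductive hypothesis applied at iteration $j$ rules out $e\in Y$ directly; if instead $e_j^p\prec_p e$, then the marking that caused $v'$ to be marked via $e^\ast$ must have been completed on $p$ by a later non-path-edge propagation, reducing to the second case of the previous paragraph. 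Combining all the cases gives the required contradiction and establishes the claim.
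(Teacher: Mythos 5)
Your overall skeleton---contradiction, the marking invariant, tracing back the reason $v'$ is marked, and induction on $i$ with Property (iii) as base case---is in the spirit of what the paper intends (the paper itself only says the proof is ``analogous to the proof of Property (iii)''), but two branches of your case analysis are not actually closed. First, in the propagation case you only observe that $Y$ must cut the spliced path $q$ either on $p_w[s,w]$ or on $p''[u^*,t]$, and your iteration only proceeds in the first branch; when $p''\neq p$ you never exclude that $Y$'s unique edge on $p''$ lies in the suffix $p''[u^*,t]$, in which case your trace simply stops without any contradiction. That branch is in fact impossible, but it needs an argument you do not give: if $Y$'s edge on $p''$ had its tail at or after $u^*\succeq_{p''} v'$, then the $s$-$t$ walk that follows $p''$ from $s$ to $v'$ and then $p$ from $v'$ to $t$ would contain no edge of $Y$ at all ($Y$'s edge on $p''$ lies beyond $v'$, its edge on $p$ is $e$, which ends at $v'$, and edge-disjointness of the paths plus ``one edge per path, no non-path edges'' excludes every other edge of $Y$), contradicting that $Y$ is a cut. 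This is the same style of crossing argument as Observation 1, and it must appear explicitly.

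Second, in the subcase you flag as the main obstacle ($v'$ is the head of $e^*\in X_j$ on a path $p'\neq p$), the branch $e_j^p\prec_p e$ is closed only by the assertion that ``the marking that caused $v'$ to be marked via $e^*$ must have been completed on $p$ by a later non-path-edge propagation.'' That is unjustified: $v'$ is marked in the cut-finding step of iteration $j$ precisely because it is the head of $e^*$, and no later marking event on $p$ need ever involve $v'$, so there is nothing to reduce to. The comparison that actually closes this subcase is not between $e$ and $X_j$'s edge on $p$, but between $Y$'s edge on $p'$ and $e^*$: by disjointness they differ; if $Y$'s edge on $p'$ strictly precedes $e^*$, the inductive hypothesis at iteration $j$ applies; if it strictly succeeds $e^*$, the same splice through $v'$ (follow $p'$ from $s$ to $v'$, then $p$ to $t$) is an $s$-$t$ path missed by $Y$. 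The same repair is needed at the terminal step of your trace: when the traced vertex is the head of an earlier cut edge, the conclusion is not that ``$Y$ shares an edge with $\hat{C}_{i-1}$'' but that $Y$'s edge on that path either coincides with the earlier cut edge (disjointness violated) or strictly precedes it (invoke the hypothesis at that earlier iteration). With these additions your induction goes through; as written, the proof has genuine holes at exactly the cases that distinguish this claim from Property (iii).
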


In other words, as the algorithm makes progress, no minimum $s$-$t$ cut---that is disjoint from the ones found so far by the algorithm---has edges to the left of the minimum $s$-$t$ cut found by the algorithm at the present iteration. Next, we show that this implies the maximality of the size of the solution returned by the algorithm. 

Let $C_\mathrm{max}$ be a maximum-sized collection of $s$-$t$ mincuts in the graph. Without loss of generality, assume that $C_\mathrm{max}$ is in left-right order (otherwise, by Corollary \ref{corollary:solutionSpace} we can always obtain an equivalent collection that is left-right ordered) and that $S_\mathrm{min}(H_{s, t}) \in C_\mathrm{max}$ and $S_\mathrm{max}(H_{s, t}) \in C_\mathrm{max}$. For the sake of contradiction, suppose that the collection $\hat{C}$ returned by our algorithm is of cardinality $|\hat{C}| = \ell < k_\mathrm{max}$. 

\begin{observation} \label{obs.new}
    There exists at least one minimum $s$-$t$ cut $Y \in C_\mathrm{max}$ such that $X_{i} < Y$ and $Y$ contains at least one edge that is a path predecessor of an edge in $X_{i+1}$, with $X_{i}$ and $X_{i+1}$ two consecutive cuts in $\hat{C}$.
\end{observation}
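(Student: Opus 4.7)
My plan is to exploit the size gap $k_\mathrm{max} > \ell$ via a pigeonhole argument on a monotone map between the indices of $C_\mathrm{max}$ and $\hat{C}$. For each $j \in \{1, \dots, k_\mathrm{max}\}$, I define $\phi(j)$ to be the largest $i \in \{1, \dots, \ell\}$ such that $X_i \leq Z_j$ in the lattice order. This map is well-defined since $X_1 = Z_1 = S_\mathrm{min}(H_{s,t}) \leq Z_j$ for every $j$. Because $C_\mathrm{max}$ is left-right ordered, $\phi$ is monotone: $j < j'$ gives $Z_j \leq Z_{j'}$, so any $X_i \leq Z_j$ also satisfies $X_i \leq Z_{j'}$, yielding $\phi(j) \leq \phi(j')$. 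Moreover $\phi(1) = 1$ and $\phi(k_\mathrm{max}) = \ell$, the latter because $Z_{k_\mathrm{max}} = S_\mathrm{max}(H_{s,t})$ is the top of the lattice $L$, so $X_\ell \leq Z_{k_\mathrm{max}}$.

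The crux of the argument is to prove that $|\phi^{-1}(\ell)| = 1$. By Property (iv), $X_\ell = S_\mathrm{max}(\hat{C})$ shares some edge $e^*$ with $S_\mathrm{max}(H_{s,t}) = Z_{k_\mathrm{max}}$; let $p^* \in \mathcal{P}_{s,t}(H_{s,t})$ be the path containing $e^*$. Suppose $\phi(j) = \ell$, i.e., $X_\ell \leq Z_j$. Then $X_\ell \leq Z_j \leq Z_{k_\mathrm{max}}$, which forces the unique $p^*$-edge of $Z_j$ to lie between the $p^*$-edges of $X_\ell$ and $Z_{k_\mathrm{max}}$ in the path order $\preceq_{p^*}$; since both of these edges equal $e^*$, the $p^*$-edge of $Z_j$ must also be $e^*$. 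If $\phi^{-1}(\ell)$ contained two distinct indices $j_1 < j_2$, the cuts $Z_{j_1}$ and $Z_{j_2}$ would both contain $e^*$, contradicting the pairwise disjointness of $C_\mathrm{max}$. Combined with $\phi(k_\mathrm{max}) = \ell$, this forces $\phi^{-1}(\ell) = \{k_\mathrm{max}\}$.

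Consequently, the remaining $k_\mathrm{max} - 1 \geq \ell$ indices are distributed among the $\ell - 1$ preimage classes $\phi^{-1}(1), \dots, \phi^{-1}(\ell - 1)$, so by pigeonhole some $i \in \{1, \dots, \ell - 1\}$ admits indices $j < j'$ with $\phi(j) = \phi(j') = i$. I take $Y := Z_{j'}$. Since $\phi(j) = i$, we have $X_i \leq Z_j$, and because $Z_j \neq Z_{j'}$ are comparable cuts of the pairwise-disjoint chain $C_\mathrm{max}$ with $j < j'$, $Z_j < Z_{j'}$ strictly, yielding $X_i < Y$. Finally, $\phi(j') = i < \ell$ implies $X_{i+1} \not\leq Z_{j'}$, so on some path $q \in \mathcal{P}_{s,t}(H_{s,t})$ the $q$-edge of $Z_{j'}$ is a strict path-predecessor of the $q$-edge of $X_{i+1}$. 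In particular, $Y$ contains an edge that is a path predecessor of an edge of $X_{i+1}$, as required.

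The main obstacle of this plan is establishing $|\phi^{-1}(\ell)| = 1$: without it, the surplus of $C_\mathrm{max}$ over $\hat{C}$ could in principle accumulate above $X_\ell$, where no consecutive pair $X_i, X_{i+1}$ is available to witness the claim. Property (iv) of $\hat{C}$, in combination with the pairwise disjointness of $C_\mathrm{max}$, is precisely what rules out that concentration and pushes the pigeonhole pressure onto the useful range $\{1, \dots, \ell-1\}$.
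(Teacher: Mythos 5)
Your proof is correct, and it reaches the observation by a noticeably different route than the paper, while using the same raw ingredients (left-right order of both collections, Properties \ref{property.III} and \ref{property.IV}, pairwise disjointness of $C_\mathrm{max}$, and the size gap $k_\mathrm{max} > \ell$). The paper argues by contradiction: assuming no witness $Y$ exists, it reasons informally that the surplus cuts of $C_\mathrm{max}$ must all sit at or beyond $X_\ell$ ("located in or before the gap") and then invokes Property \ref{property.IV} to rule this out. You instead give a direct pigeonhole argument: the monotone index map $\phi$, the key claim that $\phi^{-1}(\ell)=\{k_\mathrm{max}\}$ (any $Z_j$ with $X_\ell \le Z_j \le Z_{k_\mathrm{max}}$ must contain the shared edge $e^*$, so disjointness allows only one such cut), and then a collision $\phi(j)=\phi(j')=i<\ell$ forced by $k_\mathrm{max}>\ell$, from which $Y=Z_{j'}$ is read off. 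This buys precision exactly where the paper is hand-wavy, and it isolates the single place where Property \ref{property.IV} is needed. Two points you leave implicit are harmless but worth noting: (a) you repeatedly use that for two $s$-$t$ mincuts $X \le Y$ holds if and only if, on every path $p \in \mathcal{P}_{s,t}$, the unique $p$-edge of $X$ is a (non-strict) path-predecessor of that of $Y$ --- this justifies both the "between" step pinning the $p^*$-edge of $Z_j$ to $e^*$ and the final extraction of a path-predecessor edge from $X_{i+1}\not\le Y$; it follows from Lemma \ref{lemma:escalante} together with the definitions of $S_\mathrm{min}$ and $S_\mathrm{max}$ (the paper relies on the same fact just as implicitly); (b) the pigeonhole step needs $\ell \ge 2$, which is automatic here: if $\ell = 1$, then $X_1 = S_\mathrm{min}(H_{s,t})$ meets $S_\mathrm{max}(H_{s,t})$ by Property \ref{property.IV}, and your own $e^*$-argument then forces every cut of $C_\mathrm{max}$ to contain $e^*$, i.e.\ $k_\mathrm{max}=1$, contradicting the standing assumption $\ell < k_\mathrm{max}$.
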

\begin{proof}
    Let $C_\mathrm{max} = \{Y_1, Y_2, \ldots, Y_{k_\mathrm{max}}\}$, where $Y_1 = S_\mathrm{min}(H_{s, t})$ and $Y_{k_\mathrm{max}} = S_\mathrm{max}(H_{s, t})$, and let $\hat{C} = \{X_1, \ldots, X_\ell\}$. We know by Property \ref{property.III} that $X_1 = S_\mathrm{min}(H_{s, t})$. Hence, there is always an $s$-$t$ mincut in $C_\mathrm{max}$ that is a strict successor of a cut in $\hat{C}$, namely $Y_2 > X_1$. For the sake of contradiction, suppose that the observation is false. Then, every cut $Y \in C_\mathrm{max}$ that is a strict successor of a cut $X_i \in \hat{C}$ is also a (not necessarily strict) successor of the cut $X_{i+1} \in \hat{C}$, for $i\in \{1, \ldots, \ell -1\}$. Let this be true for the first $\ell -1$ cuts of $\hat{C}$. Then, the last $k_\mathrm{max} - \ell$ cuts of $C_\mathrm{max}$ must be disjoint from the first $\ell -1$ cuts of $\hat{C}$. The last cut $X_\ell$ of $\hat{C}$ must then be located in or before the gap between the first $\ell$ cuts in $C_\mathrm{max}$ and its remaining $k - \ell$ cuts. But we know by Property \ref{property.IV} that $X_\ell \cap Y_{k_\mathrm{max}} \neq \emptyset$, which gives the necessary contradiction.      
\end{proof}

Observation \ref{obs.new} stands in contrast with Claim \ref{claim:leftmostCuts}, which states that such a cut $Y$ cannot exist. Hence, we obtain a contradiction, and the collection $\hat{C}$ returned by the algorithm must be of maximum cardinality. This completes the proof of Lemma \ref{lemma.algorithmCorrectness}. 

\subsection{Handling the general case} \label{sec.generalCase}

We now consider \textsc{Max-Disjoint MC} in general graphs. Recall from the previous subsection that, from a graph $G$, one can construct a path graph $H_{s, t}$ such that every minimum $s$-$t$ cut in $G$ is also a minimum $s$-$t$ cut in $H_{s, t}$. We could propose to use Algorithm \ref{algo.kdisjointMSP} in $H_{s, t}$ to solve \textsc{Max-Disjoint MC} in $G$. But, as we argued previously, the path graph $H_{s,t}$ may not have the same set of $s$-$t$ mincuts as $G$. We can, however, solve this challenge by augmenting $H_{s,t}$ with edges such that its minimum $s$-$t$ cuts correspond bijectively to those in $G$.



\begin{definition} \label{definition:stPathGraph}
\normalfont An \textit{augmented $s$-$t$ path graph} of $G$ is a path graph $H_{s,t}(G)$ of height $\lambda(G)$, with additional non-path edges between any two vertices $u, v \in V(H_{s,t}(G))$ such that $v$ is reachable from $u$ in $G$ by a path whose internal vertices are exclusively in $V(G) \setminus V(H_{s,t}(G))$. 
\end{definition}

In view of this definition, the following claim and lemma serve as the correctness and complexity proofs of the proposed algorithm for the general case.

\begin{claim} \label{claim:pathGraphSameSet}
An augmented $s$-$t$ path graph of $G$ has the same set of $s$-$t$ mincuts as $G$. 
\end{claim}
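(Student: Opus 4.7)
The plan is to establish both inclusions between the set of minimum $s$-$t$ cuts of $G$ and that of $H_{s,t}(G)$. The unifying observation is that, by Menger's theorem, any minimum $s$-$t$ cut in either graph must contain exactly one edge from each of the $\lambda(G)$ edge-disjoint paths in $\mathcal{P}_{s,t}(G)$; in particular, such a cut is made up entirely of path edges and uses no shortcut edges introduced by the augmentation. Note also that $\lambda(H_{s,t}(G)) \geq \lambda(G)$ is immediate since $\mathcal{P}_{s,t}(G)$ sits inside $H_{s,t}(G)$; the matching upper bound will drop out of the forward inclusion.

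For the forward inclusion, I would take a minimum $s$-$t$ cut $C$ of $G$ and argue by contradiction: if $C$ failed to separate $s$ from $t$ in $H_{s,t}(G)$, there would be an $s$-$t$ path $Q$ in $H_{s,t}(G)$ avoiding $C$. I would expand each shortcut edge along $Q$ into the corresponding $G$-path whose internal vertices lie in $V(G) \setminus V(H_{s,t}(G))$, as guaranteed by Definition \ref{definition:stPathGraph}. Every edge of such an expansion has at least one endpoint outside $V(H_{s,t}(G))$, hence lies outside $E(H_{s,t}) \supseteq C$; combining with the non-shortcut edges of $Q$ (which avoid $C$ by assumption) yields an $s$-$t$ walk in $G$ avoiding $C$, a contradiction. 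Thus $C$ cuts $H_{s,t}(G)$, and $|C| = \lambda(G)$ forces both $\lambda(H_{s,t}(G)) \leq \lambda(G)$ and the minimality of $C$ in $H_{s,t}(G)$.

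For the reverse inclusion, I would take a minimum $s$-$t$ cut $D$ of $H_{s,t}(G)$; since $\lambda(H_{s,t}(G)) = \lambda(G)$, the same Menger argument applied to $\mathcal{P}_{s,t}(G) \subseteq H_{s,t}(G)$ forces $D$ to consist of exactly $\lambda(G)$ path edges and no shortcut edges. Given an arbitrary $s$-$t$ path $R$ in $G$, I would partition $R$ at its transitions in and out of $V(H_{s,t}(G))$ and replace every maximal subwalk through $V(G) \setminus V(H_{s,t}(G))$ by the corresponding shortcut edge of $H_{s,t}(G)$. The resulting $s$-$t$ walk in $H_{s,t}(G)$ must meet $D$; since $D$ contains no shortcut edges, the intersection lies on a path edge that also belongs to $R$. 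Hence $R$ meets $D$ in $G$, so $D$ cuts $G$, and minimality follows from $|D| = \lambda(G)$.

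The main obstacle is the careful bookkeeping in the reverse direction: one must ensure that the ``shortcut contraction'' of an arbitrary $G$-path $R$ yields a legitimate $s$-$t$ walk in $H_{s,t}(G)$, and that any intersection of this walk with $D$ lifts back to an intersection of $R$ with $D$ in $G$. Both facts hinge on Definition \ref{definition:stPathGraph}, which guarantees that shortcut edges exactly encode detours through $V(G) \setminus V(H_{s,t}(G))$. The forward direction is cleaner because shortcut edges cannot lie in a minimum cut $C$ of $G$ by construction, so no lift is needed. Beyond Menger's theorem and Definition \ref{definition:stPathGraph}, no additional machinery is required.
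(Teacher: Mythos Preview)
Your proposal is correct and follows essentially the same approach as the paper: both argue by contradiction, translating a hypothetical surviving $s$-$t$ path in one graph into a surviving $s$-$t$ path in the other by expanding or contracting the shortcut edges guaranteed by Definition~\ref{definition:stPathGraph}. The paper only spells out the forward inclusion and declares the reverse ``similar''; you give both directions explicitly and also extract $\lambda(H_{s,t}(G)) = \lambda(G)$ along the way, which the paper uses implicitly.
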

\begin{proof}
    By Menger's theorem, we know that a minimum $s$-$t$ cut in $G$ must contain exactly one edge from each path in $\mathcal{P}_{s, t}(G)$, where $|\mathcal{P}_{s, t}(G)| = |\lambda(G)|$. W.l.o.g., let $H_{s, t}(G)$ be the augmented $s$-$t$ path graph of $G$ such that each path $p \in \mathcal{P}_{s, t}(G)$ is also present in $H_{s, t}(G)$. 
We now show that a minimum $s$-$t$ cut in $G$ is also present in $H_{s, t}(G)$. The argument in the other direction is similar and is thus omitted. 

Consider an arbitrary minimum $s$-$t$ cut $X$ in $G$. For the sake of contradiction, assume that $X$ is not a minimum $s$-$t$ cut in $H_{s, t}(G)$. Then, after removing every edge of $X$ in $H_{s, t}(G)$, there is still at least one $s$-$t$ path left in the graph. Such a path must contain an edge $(u, v)$ such that $u \leq w$ and $w' \leq v$, where $w$ and $w'$ are the tail and head nodes of two (not necessarily distinct) edges in $X$, respectively. By definition of $H_{s, t}(G)$, there is a path from $u$ to $v$ in $G$ that does not use edges in $\mathcal{P}_{s, t}(G)$. But then removing the edges of $X$ in $G$ still leaves an $s$-$t$ path in the graph. Thus $X$ cannot be an $s$-$t$ cut, and we reach our contradiction.
\end{proof}

\begin{lemma} \label{lemma:stPathGraphConstruction}
An augmented $s$-$t$ path graph $H$ of a graph $G$ can be constructed in time $O(F(m, n) + m\lambda(G))$, where $F(m, n)$ is the time required by a max-flow computation.
\end{lemma}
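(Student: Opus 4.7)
The plan is to construct $H$ in three stages. First, compute a maximum $s$-$t$ flow in $G$ with any max-flow algorithm; this takes $O(F(m,n))$ time, and standard flow decomposition extracts from it a collection $\mathcal{P}_{s,t}(G)$ of $\lambda(G)$ edge-disjoint $s$-$t$ paths in an additional $O(m)$ time. Let $V(H) := \bigcup_{p \in \mathcal{P}_{s,t}(G)} V(p)$, and initialize $H$ to contain exactly the path edges of $\mathcal{P}_{s,t}(G)$. Second, scan $E(G)$ once and, for every remaining edge $(u,v)\in E(G)$ with both endpoints in $V(H)$, add $(u,v)$ to $H$ as a (direct) non-path edge; this stage runs in $O(m+n)$ time.

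The third stage is the technically non-trivial one: for every pair $u,v\in V(H)$ joined in $G$ by a directed path whose internal vertices all lie in $V(G)\setminus V(H)$, a virtual non-path edge $(u,v)$ must be inserted in $H$. My plan is first to compute, in $O(m+n)$ time, the strongly connected components of the induced subgraph $G[V(G)\setminus V(H)]$ and form their condensation DAG $D$, together with the list of ``entry'' edges from $V(H)$ into each component and the list of ``exit'' edges from each component into $V(H)$. Then, for each of the $\lambda(G)$ paths $p \in \mathcal{P}_{s,t}(G)$, carry out a single sweep of cost $O(m)$ that, for every vertex $u$ on $p$, determines the set of components of $D$ reachable from $u$ through its out-edges into $V(G)\setminus V(H)$, propagates this reachability along $D$ in topological order, and emits the virtual edges $(u,v)$ obtained from the exit edges of each reachable component. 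Summing over the $\lambda(G)$ paths contributes $O(m\lambda(G))$ to this stage, and together with the max-flow and initialization stages yields the claimed $O(F(m,n)+m\lambda(G))$ bound.

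The main obstacle is ensuring that a single per-path sweep really costs $O(m)$ rather than the naive $O(|V(p)|\cdot m)$ obtained by processing each vertex of $p$ with an independent DAG reachability computation, which would blow up the total running time to $O(nm\lambda(G))$. The resolution is to amortize the exploration of $D$ across the vertices of $p$ by traversing $D$ in topological order and reusing the reachability information computed at earlier vertices of $p$ when handling later ones, so that each edge of $G$ is inspected only a constant number of times during the sweep associated with $p$. Verifying that this amortized scheme enumerates every virtual edge with an endpoint on $p$ exactly once, while respecting the dependence on the order of vertices along $p$, is the key technical step behind the lemma.
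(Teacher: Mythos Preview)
Your plan has a genuine gap at the third stage. By Definition~\ref{definition:stPathGraph}, the augmented path graph may contain a virtual non-path edge for \emph{every} pair $(u,v)$ connected through $V(G)\setminus V(H)$, and the number of such pairs can be $\Theta(n^2)$ even when $m$ and $\lambda(G)$ are small. (Take a single non-path vertex $w$ with an in-edge from and an out-edge to every path vertex: then $m=O(n)$, $\lambda(G)$ can be constant, yet $\Theta(n^2)$ virtual edges are mandated.) So no algorithm that ``emits the virtual edges $(u,v)$'' for all reachable exit edges can run in $O(m\lambda(G))$; the output alone is too large. The paper sidesteps this by observing that Algorithm~\ref{algo.kdisjointMSP} never needs the full augmented graph: Line~6 only asks, for each path vertex $u$ and each path $q$, for the \emph{rightmost} vertex on $q$ reachable from $u$. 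That is at most $\lambda(G)$ virtual edges per vertex, hence $O(n\lambda(G))$ in total.

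Even setting the output-size issue aside, your amortization sketch is not convincing. You iterate over \emph{source} paths $p$ and, for each $u$ on $p$, want the set of DAG components reachable from $u$; but for two different vertices $u_1,u_2$ on the same path these reachable sets need bear no inclusion relation, so ``reusing reachability information computed at earlier vertices of $p$'' has no obvious meaning, and the per-path cost can blow up to $|V(p)|\cdot|D|$. The paper instead iterates over \emph{target} paths $q$: it walks the vertices of $q$ from right to left and, from each, does a backward search through $G$ that marks every reached vertex with its rightmost contact on $q$, skipping vertices already marked. Because each vertex is marked once per target path, the total work is $O(m)$ per path and $O(m\lambda(G))$ overall. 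The switch from forward/source-based to backward/target-based exploration, together with the ``rightmost only'' pruning, is exactly the idea your proposal is missing.
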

\begin{proof}
    The idea of the algorithm is rather simple. First, we find a maximum cardinality collection of edge-disjoint $s$-$t$ paths in $G$ and take their union to construct a ``skeleton'' graph $H$. Then, we augment the graph by drawing an edge between two path vertices $u, v \in H$ if $v$ is reachable from $u$ in $G$ by using exclusively non-path vertices. By definition, the resulting graph is an augmented $s$-$t$ path graph of $G$.

Now we look into the algorithm's implementation and analyze its running time. It is folklore knowledge that the problem of finding a maximum-sized collection of edge-disjoint $s$-$t$ paths in a graph with $n$ vertices and $m$ edges can be formulated as a maximum flow problem. Hence, the first step of the algorithm can be performed in $F(m, n)$ time. Let $\mathcal{P}_{s, t}(G)$ denote such found collection of $s$-$t$ paths. 

The second step of the algorithm could be computed in $O(mn)$ time by means of an \textit{all-pairs reachability} algorithm. Notice, however, that for a path vertex $v$ all we require for a correct execution of Algorithm \ref{algo.kdisjointMSP} is knowledge of the rightmost vertices it can reach on each of the $\lambda(G)$ paths (Line 6 of Algorithm \ref{algo.kdisjointMSP}). Hence, we do not need to draw every edge between every pair of reachable path vertices, only $\lambda(G)$ edges per vertex suffice. This can be achieved in $O(m \lambda(G))$ time as follows.   

In the original graph, first equip each vertex $u \in V(G)$ with a set of $\lambda(G)$ variables $R(p, u)$, one for each path $p\in \mathcal{P}_{s, t}(G)$. These variables will be used to store the rightmost vertex $v \in p$ that is reachable from $u$. Next, consider a path $p \in \mathcal{P}_{s, t}(G)$ represented as a sequence $[v_1, v_2, \ldots, v_p]$ of internal vertices (i.e., with $s$ and $t$ removed). For each vertex $v \in p$, in descending order, execute the following procedure \texttt{propagate($v$, $p$)}: Find the set $N(v)$ of incoming neighbors of $v$ in $G$ and, for each $w \in N(v)$ if $R(p, w)$ has not been set, let $R(p, w) = v$ and mark $w$ as visited. 
Then, for each node $w \in N(v)$, if $w$ is an unvisited non-path vertex, execute \texttt{propagate($w$, $p$)}; 
otherwise, do nothing. Notice that, since we iterate from the rightmost vertex in $p$, any node $u$ such that $R(u, p) = v_i$ cannot change its value when executing \texttt{propagate($v_j$)} with $j < i$. In other words, each vertex only stores information about the rightmost vertex it can reach in $p$. Complexity-wise, every vertex $v$ in $G$ will be operated upon at most $deg(v)$ times. Hence, starting from an unmarked graph, a call to \texttt{propagate($v$, $p$)} takes $O(m)$ time. Now, we want to execute the above for each path $p \in \mathcal{P}_{s, t}(G)$ (unmarking all vertices before the start of each iteration). This then gives us our claimed complexity of $O(m \lambda(G))$. 
The claim follows from combining the running time of both steps of the algorithm. 
\end{proof}

The following is an immediate consequence of Lemma \ref{lemma:stPathGraphConstruction} and Claim \ref{claim:sweepingAlgoComplexity}. 

\begin{corollary} \label{corollary:disjointGeneral}
There is an algorithm that, given a graph $G$ and two specified vertices $s, t \in V(G)$, in $O(F(m, n) + m\lambda(G))$ time finds a collection of maximum cardinality of pairwise disjoint $s$-$t$ mincuts in $G$. 
\end{corollary}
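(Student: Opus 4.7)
The plan is a direct two-stage algorithm combining the constructions of the previous subsections. First, I would invoke Lemma \ref{lemma:stPathGraphConstruction} to build an augmented $s$-$t$ path graph $H = H_{s,t}(G)$ from $G$ in time $O(F(m,n) + m\lambda(G))$. By Claim \ref{claim:pathGraphSameSet}, the family of minimum $s$-$t$ cuts of $H$ coincides exactly with that of $G$; in particular, any collection of pairwise disjoint $s$-$t$ mincuts in $H$ is a valid solution for the same problem on $G$, and vice versa. Hence it suffices to solve \textsc{Max-Disjoint MC} on $H$.

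Second, I would run Algorithm \ref{algo.kdisjointMSP} on $H$. Because $H$ is (by construction) an $s$-$t$ path graph, Lemma \ref{lemma.algorithmCorrectness} applies and guarantees that the returned collection $\hat{C}$ consists of pairwise edge-disjoint $s$-$t$ mincuts of $H$ and has maximum possible cardinality $k_{\max}$. Translated back through Claim \ref{claim:pathGraphSameSet}, $\hat{C}$ is then a maximum cardinality collection of pairwise disjoint minimum $s$-$t$ cuts of $G$, which is exactly the output required by the corollary.

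It remains to account for the time complexity. The construction step costs $O(F(m,n) + m\lambda(G))$ by Lemma \ref{lemma:stPathGraphConstruction}. For the second step, by Claim \ref{claim:sweepingAlgoComplexity} Algorithm \ref{algo.kdisjointMSP} runs in $O(m_H \log n_H)$ time on $H$, where $n_H = |V(H)| \leq n$ and, by the construction in Lemma \ref{lemma:stPathGraphConstruction}, each vertex of $H$ contributes at most $O(\lambda(G))$ incident edges, so $m_H = O(n\lambda(G))$. Therefore the second step takes $O(n\lambda(G) \log n)$ time. The main (and essentially only) subtlety is showing this is absorbed by $O(F(m,n) + m\lambda(G))$: every $s$-$t$ max-flow algorithm requires $\Omega(m)$ time to read its input, and known max-flow algorithms satisfy $F(m,n) = \Omega(m \log n)$, which yields $n\lambda(G)\log n = O(m\lambda(G) + F(m,n))$. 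Summing the two phases gives the stated bound $O(F(m,n) + m\lambda(G))$, completing the proof.
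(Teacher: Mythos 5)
Your overall route is exactly the paper's: build the augmented $s$-$t$ path graph via Lemma \ref{lemma:stPathGraphConstruction}, use Claim \ref{claim:pathGraphSameSet} to transfer the set of $s$-$t$ mincuts, and run Algorithm \ref{algo.kdisjointMSP}, whose correctness is Lemma \ref{lemma.algorithmCorrectness} and whose cost is Claim \ref{claim:sweepingAlgoComplexity}. The one place where your write-up does not go through is the final absorption step. From $F(m,n)=\Omega(m\log n)$ it does \emph{not} follow that $n\lambda(G)\log n = O(m\lambda(G)+F(m,n))$: the product $\lambda\log n$ is not $O(\lambda+\log n)$, so $m\lambda\log n$ is not $O(m\lambda+m\log n)$ in general. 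Concretely, take $G$ consisting of $\lambda$ parallel length-two $s$-$t$ paths, so $n,m=\Theta(\lambda)$; then your bound for the second phase is $\Theta(\lambda^2\log\lambda)$, while $m\lambda=\Theta(\lambda^2)$ and $F(m,n)=m^{4/3+o(1)}\ll\lambda^2\log\lambda$, so the claimed containment fails by a $\log$ factor. (Relying on ``known max-flow algorithms satisfy $F=\Omega(m\log n)$'' is also shaky, but even granting it the implication is false.)

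The gap is easy to close, and this is what makes the corollary ``immediate'' in the paper's accounting: the $O(\log n)$ per visit in Claim \ref{claim:sweepingAlgoComplexity} is only the cost of querying, for a vertex $v$ and a path $p$, the rightmost neighbor of $v$ on $p$. But the construction of Lemma \ref{lemma:stPathGraphConstruction} already computes exactly this information, namely the variables $R(p,u)$ for every vertex $u$ and every path $p$, and these can be stored in an array indexed by path so that each query takes $O(1)$ time (alternatively, use the hash-table remark in the footnote of Claim \ref{claim:sweepingAlgoComplexity}). With constant-time lookups, running Algorithm \ref{algo.kdisjointMSP} on $H$ costs $O(m_H)=O(n_H\,\lambda(G))=O(m\lambda(G))$, since every vertex of $H$ lies on one of the edge-disjoint paths and hence $n_H\leq m+1$. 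Adding the $O(F(m,n)+m\lambda(G))$ construction time then gives the stated bound without any appeal to lower bounds on $F(m,n)$.
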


By replacing $F(m, n)$ in Corollary \ref{corollary:disjointGeneral} with the running time of the current best algorithms of \cite{liu2020faster, kathuria2020potential} for finding a maximum flow, we obtain the desired running time of Theorem \ref{thm:3}.

\section{Hardness of \texorpdfstring{\textsc{Min-$k$-DMC}}{Min-k-DMC}}\label{sec.hardness}
In contrast to the polynomial-time algorithms of the previous sections, here we show that \textsc{$k$-DMC} is NP-hard when considering $d_\mathrm{min}$ as the diversity measure. We called this variant \textsc{Min-$k$-DMC} in Section \ref{sec.introduction}. The hardness proof is split into three parts. In Section \ref{sec.HalfMinVCMB}, we first show that a variant of the \textit{constrained minimum vertex cover} problem on bipartite graphs (\textsc{Min-CVCB}) of Chen and Kanj \cite{chen2003constrained} is NP-hard. Next, in Section \ref{sec.firstReduc} we give a reduction from this problem to \textsc{2-Fixed 3-DMC}, a constrained version of \textsc{$3$-DMC}. Finally, in Section \ref{sec.hardnessProof}, we give a polynomial time reduction from \textsc{2-Fixed 3-DMC} to \textsc{Min-$3$-DMC}, which completes the proof that \textsc{Min-$k$-DMC} is NP-hard.

\subsection{A first reduction} \label{sec.HalfMinVCMB}
Let us first introduce the \textit{constrained minimum vertex cover} problem on bipartite graphs. 

\begin{extthm}[Min-CVCB]
    Given a bipartite graph $G = (V, E)$ with bipartition $V = A \cup B$ and two positive integers $k$ and $\ell$, determine whether there is a minimum vertex cover $V'$ of $G$ such that $|V' \cap A| \leq k$ and $|V' \cap B| \leq \ell$.
\end{extthm}

This problem was proven to be NP-hard by Chen and Kanj \cite[Thm. 3.1]{chen2003constrained} via a reduction from the NP-complete problem \textsc{Clique} \cite{garey1979computers}. In the \textsc{Min-CVCB} instance $\langle G, k, l\rangle$ they construct, the bipartite graph $G$ has a perfect matching. Thus, the following can be stated as a corollary of their reduction. We refer to a bipartite graph that has a perfect matching as a \textit{matched bipartite graph}. 

\begin{corollary} \label{corollary.1}
    \textsc{Min-CVCB} is NP-hard even in matched bipartite graphs. 
\end{corollary}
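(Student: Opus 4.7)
The plan is to revisit the reduction of Chen and Kanj \cite{chen2003constrained} from \textsc{Clique} to \textsc{Min-CVCB} and verify that the bipartite graph it produces always admits a perfect matching. Recall that their reduction takes an instance $\langle H, c \rangle$ of \textsc{Clique} and builds a bipartite graph $G = (A \cup B, E_G)$ together with parameters $k, \ell$ such that $G$ has a minimum vertex cover $V'$ with $|V' \cap A| \leq k$ and $|V' \cap B| \leq \ell$ if and only if $H$ contains a clique of size $c$.

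My first step is to write out their construction in full and identify the natural correspondence between the two sides of $G$: in the reduction, both $A$ and $B$ are indexed by $V(H)$ (possibly together with a symmetric set of gadget vertices), so there is a canonical pairing between elements of $A$ and elements of $B$ coming from their common index set. I would then verify that each such pair is itself an edge of $E_G$, so that the set of all these pairs forms a perfect matching. Once this is established, the corollary is immediate: the output of the reduction lies entirely within the class of matched bipartite graphs, and hence \textsc{Min-CVCB} restricted to that class is already NP-hard.

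The principal obstacle, which I expect to be routine but requires care, is simply matching up the notation of \cite{chen2003constrained} with our setting and confirming that the paired vertices really are adjacent in $E_G$ rather than only implicitly linked via a gadget. Should the published construction turn out not to be \emph{manifestly} matched --- for instance, if some gadget vertices on one side have no partner on the other --- the backup plan is to attach a fresh pendant neighbour to every unmatched vertex on the opposite side. I would then verify that this local modification preserves the size of a minimum vertex cover and affects the side-constraints $k, \ell$ only by an additive shift that can be absorbed into the parameters. In either case the hardness transfers to the matched setting, yielding the corollary.
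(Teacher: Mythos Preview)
Your approach is essentially identical to the paper's: the corollary is obtained simply by observing that the bipartite graph produced in Chen and Kanj's reduction from \textsc{Clique} already admits a perfect matching, so the hardness transfers directly to the matched setting. The paper states this without further elaboration; your plan to explicitly verify the pairing (with a pendant-vertex fallback) is a more careful execution of the same idea.
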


Furthermore, we can prove the hardness of the following variant of the problem. 

\begin{extthm}[\textsc{Balanced Min-VCMB}]
    Given a matched bipartite graph $G = (V, E)$ with bipartition $V = A \cup B$, determine whether there is a minimum vertex cover $V'$ of $G$ such that $|V' \cap A| = |V' \cap B|$.
\end{extthm}

\begin{lemma}
    \textsc{Balanced Min-VCMB} is NP-hard.
\end{lemma}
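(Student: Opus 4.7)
The plan is to reduce from \textsc{Min-CVCB} on matched bipartite graphs (NP-hard by Corollary~\ref{corollary.1}). Observe first that, in a matched bipartite graph with $|A|=|B|=n$, every minimum vertex cover has size exactly $n$; thus \textsc{Balanced Min-VCMB} coincides with the restriction of \textsc{Min-CVCB} to the symmetric case $k=\ell=n/2$, since the constraints $|V^*\cap A|\le n/2$ and $|V^*\cap B|\le n/2$ together with $|V^*\cap A|+|V^*\cap B|=n$ force the equality $|V^*\cap A|=|V^*\cap B|=n/2$.

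The first step is to reduce general \textsc{Min-CVCB} on matched bipartite graphs to its symmetric case $k=\ell$. I expect this to be the main obstacle: naive symmetric pendant-edge padding cannot on its own translate an asymmetric constraint interval $[n-\ell,\,k]$ into a symmetric one centered at $n/2$. I plan to handle it either by inspecting the parameters of the Chen--Kanj reduction underlying Corollary~\ref{corollary.1} (whose output instance can be tuned so that the two side-constraints are equal) or, failing that, by designing a small matched bipartite gadget whose minimum vertex covers force a controlled asymmetric contribution to the $A$- and $B$-sides of the cover, effectively shifting the achievable interval of $A$-counts in a prescribed manner.

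Given a symmetric instance $\langle G, k, k\rangle$ with $|A|=|B|=n$ and $k\ge n/2$, the second step---the construction of $G'$---is straightforward. I will append to $G$ exactly $p := 2k-n$ disjoint \emph{pendant matched pairs}, each consisting of a new vertex $x_i\in A$ and a new vertex $y_i\in B$ joined only by the single edge $(x_i,y_i)$. The resulting graph $G'$ is matched bipartite with $|A'|=|B'|=n+p=2k$; hence every minimum vertex cover of $G'$ has size exactly $2k$, and \emph{balanced} means $|V'^{*}\cap A'|=|V'^{*}\cap B'|=k$.

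Correctness then follows from a short counting argument. Any minimum vertex cover $V'^{*}$ of $G'$ decomposes uniquely into a minimum vertex cover of $G$ (contributing some $\alpha$ vertices to the $A$-side) and a cover of the $p$ pendant edges (contributing some $q\in[0,p]$ vertices to the $A$-side), so $|V'^{*}\cap A'|=\alpha+q$. The balance condition $\alpha+q=k$ is feasible for some $q\in[0,p]$ iff $\alpha\in[k-p,\,k]=[n-k,\,k]$. Therefore $G'$ admits a balanced minimum vertex cover iff $G$ admits a minimum vertex cover with $\alpha\in[n-k,\,k]$, which is precisely the \textsc{Min-CVCB} condition for $\langle G, k, k\rangle$, and combining this with the first step yields NP-hardness of \textsc{Balanced Min-VCMB}.
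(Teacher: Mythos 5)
Your step 2 (padding a \emph{symmetric} instance $\langle G,k,k\rangle$ with $p=2k-n$ pendant matched pairs and the counting argument $\alpha+q=k$, $q\in[0,p]$ iff $\alpha\in[n-k,k]$) is correct. The problem is step 1: you never establish that \textsc{Min-CVCB} on matched bipartite graphs remains NP-hard when restricted to $k=\ell$, and this is exactly where the difficulty of the lemma lies. Corollary \ref{corollary.1} only gives hardness for general $(k,\ell)$, and, as you yourself observe, pendant pairs contribute symmetrically to both sides, so they cannot absorb an asymmetric constraint; your two proposed fixes (hoping the Chen--Kanj instances can be ``tuned'' to $k=\ell$, or designing an unspecified gadget with a forced asymmetric contribution) are left as plans, not proofs. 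So as written the reduction is incomplete: it reduces from a problem whose hardness has not been shown.

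For comparison, the paper avoids the symmetric intermediate problem altogether. Starting from an arbitrary instance $\langle G,k,\ell\rangle$ with $k+\ell=|V|/2$, it case-splits on $k$ versus $|V|/4$ and, in the case $k<|V|/4$, adds $t=|V|/2-2k$ dummy pairs $\{x_i,y_i\}$ where each $x_i$ is joined not only to $y_i$ but to \emph{every} vertex of $B$. This forces all the $x_i$ into any minimum cover that is balanced (a cover missing some $x_i$ would have to contain essentially all of $B'$), so the gadget contributes exactly $t$ vertices to the $A$-side and none to the $B$-side, shifting the $A$-count by a prescribed amount --- precisely the ``controlled asymmetric contribution'' you defer to. If you carry out that forcing construction (or verify that the Chen--Kanj reduction indeed outputs $k=\ell$), your argument goes through; without it, there is a genuine gap.
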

\begin{proof}
    This problem is a variant of \textsc{Min-CVCB} with two added restrictions: (i) the input graph $G$ has a perfect matching, and (ii) $k = \ell = |V|/4$. Corollary \ref{corollary.1} already states the hardness of the variant with only (i) as an added restriction. Thus, we need only argue that the problem remains hard when also considering restriction (ii). 

    Let $\langle G = (V, E), k, l \rangle$ be an instance of \textsc{Min-CVCB} with bipartition $V = A \cup B$, where $G$ has a perfect matching. We will construct an instance $\langle G' = (V', E') \rangle$ with bipartition $V' = A'\cup B'$ of \textsc{Balanced Min-VCMB} and show that $G$ has a feasible solution iff $G'$ has a balanced minimum vertex cover $C$; i.e., one such that $|C \cap A'| = |C \cap B'|$. First, observe that because $G$ is a matched bipartite graph, we have $|A| = |B| = |V|/2$. Moreover, the size of a minimum vertex cover in $G$ is exactly $k + l = |V|/2$. We distinguish three 
    cases: (a) $k > |V| / 4$, (b) $k < |V| / 4$, and (c) $k = |V| / 4$. Case (c) is trivial, and since $k + \ell = |V|/2$, case (a) can be seen as case (b) by replacing $k$ with $\ell$. So, we will focus on proving case (b). 

    Starting from graph $G$, we construct $G'$ by adding $t = |V|/2 - 2k$ dummy pairs of vertices $\{x_1, y_1\}, \ldots, \{x_t, y_t\}$ to $G$ and connecting them as follows. For each pair $\{x_i, y_i\}$ we create the edge $(x_i, y_i)$ and add an edge from $x_i$ to every vertex in $B$. Let $X = \bigcup_{1 \leq i \leq t} x_i $ and $Y = \bigcup_{1 \leq i \leq t} y_i$. Then, the bipartition of $G'$ is given by $A' = A \cup X$ and $B' = B \cup Y$. Observe that $G'$ is matched bipartite since the subgraph induced by $X$ and $Y$ has a perfect matching. Clearly, the size of a minimum vertex cover in $G'$ is $|V| + t$. We further claim that any minimum vertex cover $C$ in $G'$ such that $|C \cap B'| < |V|/2$ satisfies that $X \subset C$. It is easy to see this by contradiction: since $|C \cap B'| < |V|/2$, there is at least one edge that has no endpoint in $C$, namely an edge connecting a vertex in $B' \setminus C$ to a vertex in $X \setminus C$. 

    Now, we are ready to prove that $G$ has a minimum vertex cover $D$ with $|A\cap D| = k$ iff $G'$ has a minimum vertex cover $C$ such that $|C \cap A'| = |C \cap B'| = \frac{|V|/2 + t}{2}$. 

    Let $D$ be a minimum vertex cover in $G$ such that $|D \cap A| = k$. We claim that $C = D \cup X$ is a solution to \textsc{Balanced Min-VCMB} in $G'$. This follows because, by construction of $X$, the set $C$ contains an endpoint of every edge in $G'$. Moreover, $|C \cap A'| = k + t = \frac{|V|/2 + t}{2}$, which follows from $k = \frac{|V|/2 - t}{2}$. 

    Conversely, given a minimum vertex cover $C$ in $G'$ such that $|C \cap A'| = |C \cap B'| = \frac{|V|/2 + t}{2}$, we claim that $D = C \setminus X$ is a minimum vertex cover in $G$ with $k$ $A$-vertices. The cardinality part of the claim follows from the earlier observed fact that $X \subset C$, hence $|D \cap A| = \frac{|V|/2 - t}{2} = k$. On the other hand, because both $G$ and $G'[X \cup Y]$ (the subgraph of $G'$ induced by $X\cup Y$) have perfect matchings, the intersection of $C$ with (the vertex set of) each subgraph is a minimum vertex cover for the subgraph. Hence, $D$ is a minimum vertex cover for $G$, proving the claim.     

    Since the construction of the instance $\langle G' = (V', E') \rangle$ of \textsc{Balanced Min-VCMB} can clearly be performed in polynomial time, the theorem is proved.
\end{proof}

\subsection{A second reduction} \label{sec.firstReduc}
In this section, we introduce a variant of \textsc{3-DMC}, and establish its NP-hardness. 

\begin{extthm}[\textsc{2-Fixed 3-DMC}]
    Given are two minimum $s$-$t$ cuts $X$ and $Y$ in a graph $G = (V, E)$, and an integer $\ell > 0$. Answer whether there is a third minimum $s$-$t$ cut $S$ in $G$ such that $\max(|X \cap S|, |Y \cap S|) \leq \ell$. 
\end{extthm}


We concentrate on a special type of \textsc{2-Fixed 3-DMC} where the graph $G = (V, E)$ satisfies the following restrictions: (i) any maximum cardinality set of (internally) vertex disjoint $s$-$t$ paths covers all the vertices in $V$, and (ii) any $s$-$t$ path has length 3. With a slight abuse of terminology, we refer to such a graph as a \textit{bipartite $s$-$t$ graph}. Notice that $G \setminus \{s, t\}$ is a matched bipartite graph. Showing the NP-hardness of \textsc{2-Fixed 3-DMC} in bipartite $s$-$t$ graphs clearly implies hardness in the general case. 

\begin{lemma} \label{theorem.1}
    \textsc{2-Fixed 3-DMS} is NP-hard, even in bipartite $s$-$t$ graphs. 
\end{lemma}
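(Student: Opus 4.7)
The plan is to reduce \textsc{Balanced Min-VCMB} to \textsc{2-Fixed 3-DMC} on bipartite $s$-$t$ graphs. Starting from a matched bipartite graph $G=(A\cup B,E)$ with $|A|=|B|=n/2$ and perfect matching $m:A\to B$, I construct $G'$ by adding two vertices $s,t$, connecting $s$ to every vertex of $A$ with a source edge and every vertex of $B$ to $t$ with a sink edge; the edges of $G$ become middle edges of $G'$. The $n/2$ paths $p_i=s\to a_i\to b_{m(i)}\to t$ are vertex-disjoint and cover every vertex, so $G'$ is a bipartite $s$-$t$ graph of height $\lambda=n/2$. I then take $X$ to be the set of all source edges, $Y$ to be the set of all sink edges (both clearly $s$-$t$ mincuts), and set $\ell=n/4$.

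The key claim is that $G$ has a balanced minimum vertex cover if and only if $G'$ has an $s$-$t$ mincut $S$ with $\max(|X\cap S|,|Y\cap S|)\leq n/4$. The forward direction is direct: a balanced min VC $V'=V'_A\cup V'_B$ with $|V'_A|=|V'_B|=n/4$ produces the mincut $S=\{(s,a):a\in V'_A\}\cup\{(b,t):b\in V'_B\}$, which cuts every length-$3$ $s$-$t$ path because $V'$ covers every edge of $G$, and satisfies $|X\cap S|=|Y\cap S|=n/4$.

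For the reverse direction, the structural handle is that by Menger's theorem every mincut $S$ of $G'$ contains exactly one edge from each $p_i$, and nothing else. This induces a partition of $A$ into $A_0=\{a_i:(s,a_i)\in S\}$, $A_1=\{a_i:(a_i,b_{m(i)})\in S\}$, and $A_2=\{a_i:(b_{m(i)},t)\in S\}$, with $|X\cap S|=|A_0|$ and $|Y\cap S|=|A_2|$. Given $|A_0|,|A_2|\leq n/4$, the slack $|A_1|=n/2-|A_0|-|A_2|\geq n/4-|A_0|$ lets me split $A_1=A_1^A\sqcup A_1^B$ with $|A_1^A|=n/4-|A_0|$, and define $V':=A_0\cup A_1^A\cup m(A_1^B)\cup m(A_2)$. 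A short case analysis---using that $S$ cuts every non-matching $s$-$t$ path while containing no non-matching middle edge, so $A_0\cup m(A_2)$ already covers every non-matching edge of $G$---verifies that $V'$ is a vertex cover. By construction $|V'|=n/2$ and $|V'\cap A|=|V'\cap B|=n/4$, so $V'$ is the desired balanced min VC.

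The main obstacle I expect is that a valid third cut $S$ need not consist only of source and sink edges: minimum cuts can use matching middle edges too. The trick in the reverse direction---splitting the matching-edge portion $A_1$ of $A$ across the two sides of the bipartition to absorb the slack between $|A_0|,|A_2|$ and the $n/4$ quota---is what handles this and forces a balanced vertex cover from any valid $S$, without requiring $S$ to be preprocessed into a ``middle-edge-free'' form. The reduction is clearly polynomial, yielding the claimed NP-hardness.
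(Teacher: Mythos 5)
Your reduction is correct and rests on the same construction as the paper's: both reduce from \textsc{Balanced Min-VCMB}, direct the edges of $G$ from $A$ to $B$, attach $s$ to all of $A$ and all of $B$ to $t$, fix $X$ and $Y$ to be the source and sink edge sets, and set the threshold to half the size of a side ($\ell = n/4$ in your notation, $m/2$ in the paper's). The forward directions coincide. Where you genuinely differ is the converse. The paper argues only the case in which the cut meets the budget with equality, i.e.\ $|X\cap S| = |Y\cap S| = m/2$: then $S$ can contain no middle edge and one reads off a balanced vertex cover directly. It does not explicitly treat feasible cuts with $\max(|X\cap S|,|Y\cap S|)$ strictly below the threshold, which necessarily use matching middle edges. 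Your argument handles every feasible cut: the observation that a mincut contains exactly one edge of each matching path $p_i$ and no edge outside $\bigcup_i p_i$ (so no non-matching middle edge) is correct, the induced partition $A = A_0 \cup A_1 \cup A_2$ gives $|X\cap S| = |A_0|$ and $|Y\cap S| = |A_2|$, the cardinality bookkeeping for splitting $A_1$ works out, and the fact that $A_0 \cup m(A_2)$ already covers every non-matching edge (because $S$ must cut the path $s\to a_i\to b_j\to t$ using a source or sink edge) makes $V'$ a vertex cover of size $n/2$, hence a balanced minimum vertex cover by K\H{o}nig's theorem. So your converse is more robust and in fact covers a case the paper's proof glosses over; the price is a somewhat longer case analysis, whereas the paper's version is shorter but leans on the equality assumption.
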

\begin{proof}
We show that \textsc{2-Fixed 3-DMS} is NP-hard by reducing from \textsc{Balanced Min-VCMB}. 
Let $\langle G = (V, E) \rangle$ with $V = A \cup B$ be an instance of \textsc{Balanced Min-VCMB}, where $m := |V|/2$. Since $G$ has a perfect matching, we have $|A| = |B| = m$. We construct an instance $\langle H, X, Y, \ell \rangle$ of \textsc{2-Fixed 3-DMC} 
as follows. 
The directed graph $H$ simply consists of a copy of $G$, with edges directed from $A$ to $B$, plus two additional vertices $s$ and $t$, such that $s$ has an outgoing arc to every vertex in $A$ and $t$ has an incoming arc from every vertex in $B$. 
For the fixed cuts $X$ and $Y$, we let $X = \{(s, u) \; | \; u \in A\}$ and $Y = \{(v, t) \; | \; v \in B\}$. Clearly, these are $s$-$t$ cuts. The fact that they are of minimum size stems from $G$ having a perfect matching of size $m$ since this implies that $H$ has at least $m$ (internally) edge-disjoint $s$-$t$ paths. 
Finally, we set $\ell = m/2$. We claim that a minimum vertex cover $V'$ of size $m$ exists in $G$ such that $|V' \cap A| = |V' \cap B| = m/2$ iff there is a minimum $s$-$t$ cut $S$ in $H$ such that $|X \cap S| = |Y \cap S| = m/2$. On the one hand, this follows from observing that, by definition of vertex cover, the set of edges that connect $s$ and $t$ to the vertices of a minimum vertex cover in $G$ is always a minimum $s$-$t$ cut in $H$. 
On the other hand, a minimum $s$-$t$ cut $S$ in $G$ satisfying $|X \cap S| = |Y \cap S| = m/2$ cannot contain an edge $(u, v)$ such that $u \in A$ and $v \in B$. Otherwise, the intersection size would be less than $m/2$. Then, 
it follows that the vertex set $\{u \; | \; (s, u) \in S\} \cup \{v \; | \; (v, t) \in S\}$ is a minimum vertex cover in $G$. 
\end{proof}

\subsection{The final reduction} \label{sec.hardnessProof}
We now present the NP-hardness proof for the decision version of \textsc{Min-$k$-DMC}. For simplicity, we consider \textsc{Min-$k$-DMC} reformulated as a minimization problem by means of the relationship $\max_{S \in U_k} d_\mathrm{min}(S) = \min_{S \in U_k} \hat{d}_\mathrm{min}(S)$, where

\begin{equation*}
    \hat{d}_{\mathrm{min}}(X_1, \ldots, X_k) = \max_{1\leq i \leq j \leq k} |X_i \cap X_j|.
\end{equation*}

\begin{extthm}[Min-$k$-DMC (decision version)]
    Given are a directed graph $G = (V, E)$, vertices $s,t \in V$, and two integers $k, \ell > 0$. Let $\Gamma_G(s, t)$ be the set of minimum $s$-$t$ cuts in $G$, and let $U_k$ be the set of $k$-element multisets of $\Gamma_G(s, t)$. Answer whether $G$ has a collection $C \in U_k$ of minimum $s$-$t$ cuts such that $\hat{d}_\mathrm{min}(C) \leq \ell$. 
\end{extthm}


\begin{theorem}
    The decision version of \textsc{Min-$k$-DMC} is NP-hard, even for $k = 3$. 
\end{theorem}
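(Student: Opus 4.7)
The plan is to reduce \textsc{2-Fixed 3-DMC}---NP-hard by Lemma~\ref{theorem.1}---to the decision version of \textsc{Min-3-DMC}. Given an instance $\langle H, X, Y, \ell \rangle$ with $H$ the bipartite $s$-$t$ graph built in Section~\ref{sec.firstReduc} (so $\lambda(H) = m$ and $X, Y$ are the two extremal min cuts of $H$), I would construct $\langle H', \ell' \rangle$ by augmenting $H$ with two auxiliary \emph{enforcer gadgets} $G_X, G_Y$ that share only $s$ and $t$ with $H$. Each gadget $G_\bullet$ is a complete bipartite $K_{K,K}$ interposed between $s$ and $t$ via one-sided edges, designed so that its only min $s$-$t$ cuts are the ``left'' cut $L_\bullet = \{(s,a) : a \in A_\bullet\}$ and the ``right'' cut $R_\bullet = \{(b,t) : b \in B_\bullet\}$, each of size $K$. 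Taking $K$ polynomial in $|H|$ with $K > \ell$ (e.g.\ $K := m + \ell + 1$) and setting $\ell' := K + \ell$, we have $\lambda(H') = m + 2K$ and each min cut of $H'$ decomposes uniquely as $S_H \cup S_{G_X} \cup S_{G_Y}$ with $S_H \in \Gamma_H(s,t)$ and $S_{G_\bullet} \in \{L_\bullet, R_\bullet\}$.

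For \textbf{completeness}, a valid third cut $S$ for the \textsc{2-Fixed} instance lifts to the triple $(X \cup L_X \cup L_Y,\; Y \cup R_X \cup R_Y,\; S \cup L_X \cup R_Y)$, whose pairwise intersections work out to $0$, $K + |X \cap S|$, and $K + |Y \cap S|$, each at most $K + \ell = \ell'$. For \textbf{soundness}, given a triple $(S_1, S_2, S_3)$ of min cuts of $H'$ with max pairwise intersection at most $\ell'$, the bound $K > \ell$ combined with pigeonhole on the $2 \times 2$ grid of gadget configurations $\{LL, LR, RL, RR\}$ rules out any two cuts agreeing on both gadgets; hence three distinct gadget configurations are realized. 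A short combinatorial analysis then identifies two of the cuts as the canonical lifts of $X$ and $Y$, and the third cut's restriction to $H$ yields the required \textsc{2-Fixed} witness $S$ with $\max(|X \cap S|, |Y \cap S|) \le \ell$.

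The \textbf{main obstacle} lies in the identification step at the end of soundness. A naive pigeonhole argument over two independent $s$-$t$-attached gadgets forces only a ``hub-and-spoke'' pattern in $H$---the existence of \emph{some} min cut with small intersection against two \emph{other} arbitrary min cuts---which is strictly weaker than what \textsc{2-Fixed 3-DMC} requires, namely small intersection against the \emph{specific} cuts $X$ and $Y$. To close this gap, the construction must tie each extremal gadget configuration to a commitment on the $H$-cut: I would do so by additionally embedding calibrated multi-edge copies of the $X$- and $Y$-edges into $H'$ (possibly merging the gadget vertices on the appropriate side with the $A$- and $B$-sides of $H$) so that, among the min cuts of $H'$ compatible with each extremal configuration, the only feasible $H$-part is exactly $X$ (respectively $Y$). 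Carrying out this calibration without disturbing $\lambda(H')$ or introducing unintended min cuts in $H'$ is the technical crux of the reduction.
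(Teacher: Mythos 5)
You correctly identify the right source problem (\textsc{2-Fixed 3-DMC} on bipartite $s$-$t$ graphs, Lemma~\ref{theorem.1}), your completeness direction is fine, and you even diagnose precisely why the construction as described is not sound: with two enforcer gadgets attached to $H$ only at $s$ and $t$, a triple of minimum cuts of $H'$ with pairwise intersections at most $K+\ell$ certifies only a ``hub-and-spoke'' pattern among three \emph{arbitrary} minimum cuts of $H$, not small intersection with the \emph{specific} cuts $X$ and $Y$. But the step you then defer---``calibrated multi-edge copies'' of the $X$- and $Y$-edges, possibly merging gadget vertices into $A$ and $B$, so that each extremal gadget configuration forces the $H$-part of the cut to be exactly $X$ or exactly $Y$---is precisely where the whole difficulty of the theorem lies, and it is left unspecified. (It is also delicate: parallel edges change $\lambda(H')$ and all intersection sizes, since the objective counts cut edges.) As written, the proposal is therefore a plan with a genuine gap rather than a proof.

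The paper closes exactly this gap by exploiting the fact that in the hard instances one may fix $X=E_A=\{(s,a):a\in A\}$ and $Y=E_B=\{(b,t):b\in B\}$, the extremal cuts. It adds a single path $s\!-\!u\!-\!w\!-\!t$ together with an edge $(a,u)$ for every $a\in A$ and $(w,b)$ for every $b\in B$; then any minimum $s$-$t$ cut of $H'$ containing $(s,u)$ must contain all of $E_A$, and any one containing $(w,t)$ must contain all of $E_B$ (Claim~\ref{claim:3}). This is the calibration you were looking for: choosing one gadget edge commits the cut to the whole of $X$ (resp.\ $Y$). Padding is done with a \emph{single} balanced complete bipartite gadget $K_{m/2,m/2}$ attached to $s$ and $t$ (not two size-$K$ gadgets), so that every minimum cut contains exactly one of its two sides of size $m/2$, allowing the threshold $\ell'=\ell=m/2$ and providing the baseline intersection between the lifts of $X$ and $Y$. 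Even then the soundness direction requires a case analysis over which side of the gadget each of the three cuts uses, and in one case the \textsc{2-Fixed 3-DMC} witness is not simply the $H$-part of the third cut but is assembled from $P\cap R$ together with $E_B$-edges selected via a maximum family of edge-disjoint $s$-$t$ paths; none of this appears in your sketch.
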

\begin{proof}
    We give a polynomial time reduction from \textsc{2-Fixed 3-DMC} in bipartite $s$-$t$ graphs, which is proven NP-hard in Theorem \ref{theorem.1}. Let $\langle G, X, Y, \ell \rangle$ be an instance of \textsc{2-Fixed 3-DMC} with bipartition\footnote{Whenever we refer to the bipartition of a bipartite $s$-$t$ graph $G$, we mean the bipartition of its induced bipartite graph $G \setminus \{s, t\}$.} $A \cup B$ 
    and let the vertices of $A$ (resp. $B$) be labeled $a_1,\ldots,a_m$ (resp. $b_1,\ldots,b_m$), 
    where $m := |A \cup B|/2$. Furthermore, let $E_A = \{(s, v) \; | \; v \in A\}$ and $E_B = \{(v, t) \; | \; v \in B\}$ we fix $X = E_A$ and $Y = E_B$ and let $\ell = m/2$. We can do this since we proved in Theorem \ref{theorem.1} that \textsc{2-Fixed 3-DMC} is NP-hard even in the special case where $X = E_A$ and $Y = E_B$. For the sake of simplicity, w.l.o.g. assume that there is an edge $(a_i, b_i) \in G$ for each $i \in \{1, \ldots, m\}$. 
    Observe that the minimum $s$-$t$ cut size in $G$ is $m$. 

    Given $\langle G, X, Y, \ell \rangle$, we construct an instance $\langle H, s, t, \ell' \rangle$ of \textsc{Min-$3$-DMC} as follows. The graph $H$ consists of a copy of $G$ plus some additional vertices and edges. The first of these additions are two vertices $u$ and $w$, such that $s$-$u$-$w$-$t$ is a path in $H$. Moreover, we draw the edges $(a_i, u)$ and $(w, b_i)$ for each $i \in \{1, \ldots, m\}$. The second addition is a balanced complete bipartite graph $H_{b} = (A_b \cup B_b, E')$ such that $|A_b| = |B_b| = m/2$. Besides the edges inside $H_b$, we draw the edges $\{(s, a') \; | \; a' \in A_b\}$ and $\{(b', t) \; | \; b' \in B_b\}$. Finally, we set $\ell' = \ell = m/2$. Clearly, this construction can be completed in polynomial time. See Figure \ref{fig:reductionGraph2} for an illustration. Notice that the minimum $s$-$t$ cut size in $H$ is $\frac{3m}{2} + 1$. Furthermore, we claim the following. 

    \begin{claim} \label{claim:3}
        Any minimum $s$-$t$ cut in $H$ that contains the edge $(s, u)$ (resp. $(w, t)$) must also contain every edge in $E_A$ (resp. $E_B$). 
    \end{claim}
    \begin{proof}
        First, we show that the edge $(u, w)$ is crossed by some $s$-$t$ path $p^*$ of any maximum cardinality set of edge-disjoint $s$-$t$ paths in $H$. Let $\mathcal{P}_{s, t}$ denote an arbitrary such set. 
        We know that the cardinality of $\mathcal{P}_{s, t}$ is the size of a minimum $s$-$t$ cut in $H$; namely $3 \frac{m}{2}+1$. Then, because there are exactly $3 \frac{m}{2}+1$ outgoing edges from $s$, there is a path $p^* \in \mathcal{P}_{s, t}$ that visits vertex $u$. Moreover, $u$ has only one outgoing edge to $w$, which implies that $(u, w)$ is contained in $p^*$. 

        Back to the main claim, let $C$ be a minimum $s$-$t$ cut in $H$ containing the edge $(s, u)$. For the sake of contradiction, suppose there is an edge $(s, a)$ for some $a \in A$ such that $(s, a) \not\in C$. Then, there is an $s$-$t$ path that crosses no edge in $C$; namely, the path $s$-$a$-$u$-$w$-$b$-$t$, with $b \in B$ a vertex visited by $p^*$. Therefore, the edge set $C$ is not an $s$-$t$ cut, which gives the necessary contradiction.
    \end{proof}
    
    Now we prove that there is a feasible solution to \textsc{2-Fixed 3-DMS} in $G$ iff there are three minimum $s$-$t$ cuts $(P, Q, R)$ in $H$ such that $|P \cap Q| = |P \cap R| = |Q \cap R| = m/2$.

    \begin{figure}
\centering
\begin{tikzpicture}
\tikzset{edge/.style = {->,> = latex'}}
\node[label={$s$},draw=black,circle,fill] (s) at (0, 0) {};
\node[draw=black,circle] (x1) at (1.5, 1.5) {};
\node[draw=black,circle] (x2) at (1.5, 0.5) {};
\node[draw=black,circle] (x3) at (1.5, -0.5) {};
\node[draw=black,circle] (x4) at (1.5, -1.5) {};

\node[draw=black,circle] (y1) at (3, 1.5) {};
\node[draw=black,circle] (y2) at (3, 0.5) {};
\node[draw=black,circle] (y3) at (3, -0.5) {};
\node[draw=black,circle] (y4) at (3, -1.5) {};

\node[label={$t$},draw=black,circle,fill] (t) at (4.5, 0) {};

\node at (2.25, -3.0) {};

\draw[edge, line width=2pt, palecerulean] (s) -- (x1);
\draw[edge, line width=2pt, bicolor={palecerulean}{color2}] (s) -- (x2);
\draw[edge, line width=2pt, bicolor={palecerulean}{color2}] (s) -- (x3);
\draw[edge, line width=2pt, palecerulean] (s) -- (x4);

\draw[edge] (x1) -- (y1);
\draw[edge] (x2) -- (y1);
\draw[edge] (x2) -- (y2);
\draw[edge] (x3) -- (y3);
\draw[edge] (x2) -- (y3);

\draw[edge] (x4) -- (y4);
\draw[edge] (x3) -- (y2);

\draw[edge, line width=2pt, bicolor={bubblegum}{color2}] (y1) -- (t);
\draw[edge, line width=2pt, bubblegum] (y2) -- (t);
\draw[edge, line width=2pt, bubblegum] (y3) -- (t);
\draw[edge, line width=2pt, bicolor={bubblegum}{color2}] (y4) -- (t);
\end{tikzpicture}
\hspace{6em}
\begin{tikzpicture}
\tikzset{edge/.style = {->,> = latex'}}
\node[label={$s$},draw=black,circle,fill] (s) at (0, 0) {};
\node[draw=black,circle] (x1) at (1.5, 1.5) {};
\node[draw=black,circle] (x2) at (1.5, 0.5) {};
\node[draw=black,circle] (x3) at (1.5, -0.5) {};
\node[draw=black,circle] (x4) at (1.5, -1.5) {};

\node[draw=black,circle] (y1) at (4,1.5) {};
\node[draw=black,circle] (y2) at (4,0.5) {};
\node[draw=black,circle] (y3) at (4,-0.5) {};
\node[draw=black,circle] (y4) at (4,-1.5) {};

\node[draw=black,circle] (x01) at (1.5, 3.5) {};
\node[draw=black,circle] (x02) at (1.5, 2.5) {};
\node[draw=black,circle] (y01) at (4,3.5) {};
\node[draw=black,circle] (y02) at (4,2.5) {};

\node[label={$t$},draw=black,circle,fill] (t) at (5.5,0) {};


\node[label=below:{$u$},draw=black,circle] (u) at (2.25,-2.5) {};
\node[label=below:{$w$},draw=black,circle] (w) at (3.25,-2.5) {};


\draw[edge, line width=2pt, palecerulean] (s) -- (x1);
\draw[edge, line width=2pt, bicolor={palecerulean}{color2}] (s) -- (x2);
\draw[edge, line width=2pt, bicolor={palecerulean}{color2}] (s) -- (x3);
\draw[edge, line width=2pt, palecerulean] (s) -- (x4);
\draw[edge, line width=2pt, palecerulean] (s) to [bend right=45] (u);
\draw[edge, line width=2pt, bicolor={bubblegum}{palecerulean}] (s) -- (x01);
\draw[edge, line width=2pt, bicolor={bubblegum}{palecerulean}] (s) -- (x02);

\draw[edge] (x1) -- (y1);
\draw[edge] (x2) -- (y1);
\draw[edge] (x2) -- (y2);
\draw[edge] (x3) -- (y3);
\draw[edge] (x2) -- (y3);

\draw[edge] (x4) -- (y4);
\draw[edge] (x3) -- (y2);

\draw[edge, line width=2pt, bicolor={bubblegum}{color2}] (y1) -- (t);
\draw[edge, line width=2pt, bubblegum] (y2) -- (t);
\draw[edge, line width=2pt, bubblegum] (y3) -- (t);
\draw[edge, line width=2pt, bicolor={bubblegum}{color2}] (y4) -- (t);
\draw[edge, line width=2pt, bubblegum] (w) to [bend right=45] (t);
\draw[edge, line width=2pt, color2] (y01) -- (t);
\draw[edge, line width=2pt, color2] (y02) -- (t);

\draw[edge, line width=2pt, color2] (u) -- (w);

\draw[edge] (x01) -- (y01);
\draw[edge] (x01) -- (y02);
\draw[edge] (x02) -- (y01);
\draw[edge] (x02) -- (y02);

\draw[edge] (x1) -- (u);
\draw[edge] (x2) -- (u);
\draw[edge] (x3) -- (u);
\draw[edge] (x4) -- (u);

\draw[edge] (w) -- (y1);
\draw[edge] (w) -- (y2);
\draw[edge] (w) -- (y3);
\draw[edge] (w) -- (y4);
\end{tikzpicture}
    \caption{An instance $(G, X, Y, \ell)$ of \textsc{2-Fixed 3-DMC} (left) and the constructed instance $(H, s, t, \ell')$ of \textsc{Min-$3$-DMC} (right). A solution for \textsc{2-Fixed 3-DMC} in $G$ with $\ell = m/2$ can be mapped to a solution for \textsc{Min-$3$-DMC} in $H$ with $\ell' = m/2$ and vice versa.}
    \label{fig:reductionGraph2}
\end{figure}
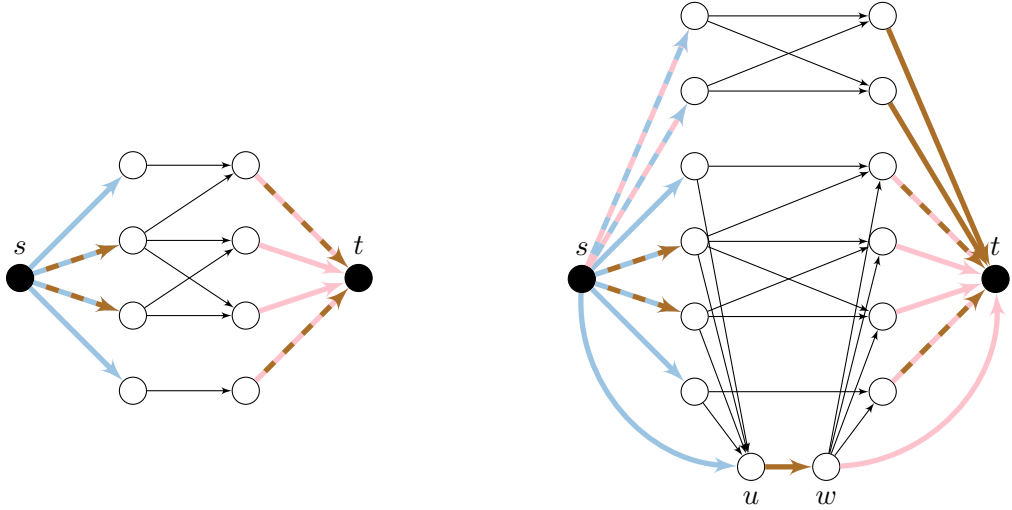

    Suppose that there is a feasible solution to \textsc{2-Fixed 3-DMC} in $G$. That is, there exists a minimum $s$-$t$ cut $S$ in $G$ such that $|X \cap S| = |Y \cap S| = m/2$. Let $P = E^b_A \cup X \cup (s, u)$, $Q = E^b_A \cup Y \cup (w, t)$, and $R = E^b_B \cup S \cup (u, w)$, where $E^b_A = \{(s, a) \; | \; a \in A_b\}$ and $E^b_B = \{(b, t) \; | \; b \in B_b\}$. These cuts are indicated in blue, pink, and brown in the example of Figure \ref{fig:reductionGraph2}, respectively. It is easy to check that these are indeed $s$-$t$ cuts in $H$\footnote{Removing $E^b_A$ or $E^b_B$ prevent $s$ from reaching $t$ via $H_b$. Similarly, removing $(u, w)$ prevents $s$ from reaching $t$ through a path containing $u$ or $w$. Moreover, because there are outgoing (resp. incoming) edges from (resp. to) every vertex in $A$ (resp. $B$) to $u$ (resp. from $w$), any minimum $s$-$t$ cut containing the edge $(s, u)$ (resp. $(w, t)$) must also have $E_A$ (resp. $E_B$) as a subset.} of minimum size $\frac{3m}{2} + 1$. Clearly, the triple $(P, Q, R)$ is a solution to \textsc{Min-$3$-DMC} in $H$ since $|P \cap Q| = |E^b_A| = m/2$, $|P \cap R| = |E_A \cap S| = m/2$, and $|Q \cap R| = |E_B \cap S| = m/2$. 

    Conversely, suppose that there is a solution for \textsc{Min-$3$-DMC} in $H$. That is, there is a triple $(P, Q, R)$ of minimum $s$-$t$ cuts in $H$ such that $|P \cap Q| = |P \cap R| = |Q \cap R| = m/2$. Observe that, by construction, exactly one of the edge sets $E^b_A$ and $E^b_B$ is a subset of any minimum $s$-$t$ cut in $H$. Then, there are two cases (modulo symmetry) for the triple $(P, Q, R)$: (i) $E^b_A \subset P$, $E^b_A \subset Q$, and $E^b_B \subset R$, and (ii) $E^b_A \subset P$, $E^b_A \subset Q$, and $E^b_A \subset R$. Next, we analyze each case and argue that we can always construct a feasible solution $S$ to \textsc{2-Fixed 3-DMC}.

    First, we consider case (ii). Notice that if the three cuts $(P, Q, R)$ all intersect at $E^b_A$, then they must not share any more edges in $G \setminus H_b$ since there would then be a pairwise intersection with a value greater than $m/2$. In particular, this means that the edges $(s, u)$, $(u, w)$ and $(w, t)$ must each belong to exactly one of $P$, $Q$, and $R$. Without loss of generality, assume that $(s, u) \in P$, $(w, t) \in Q$, and $(u, w) \in R$. Then, by Claim \ref{claim:3}, we have that $E_A \subset P$ and $E_B \subset Q$. Now, there are two sub-cases for $R$ to consider. On the one hand, if $R$ contains exactly $m$ edges from $\{(a, b) \; | \; a \in A, b \in B\}$, we have the trivial case of there being exactly $m$ vertex-disjoint $s$-$t$ paths in $G$. Then, we can construct a solution $S$ to \textsc{2-Fixed 3-DMC} by letting $S$ contain exactly one edge from each path, with $m/2$ coming from $E_A$ and $m/2$ from $E_B$. On the other hand, if $R$ contains fewer than $m$ edges from the set $\{(a, b) \; | \; a \in A, b \in B\}$, then $R$ intersects one of $P$ or $Q$ in more than $m/2$ edges, meaning that such a sub-case is not allowed. 
    
    Next, we turn to case (i). Since $E^b_A \subset P$ and $E^b_A \subset Q$, we have that $|P \cap Q| \geq m/2$. Then, to satisfy equality, it must be that $P \cap Q = E^b_A$. This also implies that one of $P$ or $Q$ must contain one of $(s, u)$ or $(w, t)$. 
    Moreover, $(u, w) \in R$; otherwise, because of Claim \ref{claim:3}, $|P \cap R| \geq m/2$ or $|P \cap R|  = 0$. 
    Now, suppose that $(s, u) \in P$ (the argument for $(w, t) \in P$ is analogous and is thus omitted). Then, by Claim \ref{claim:3}, $P = E_A^b \cup E_A \cup \{(s, u)\}$. We also know that $|P \cap R| = m/2$, and that $P \cap R \subset E_A$. From this, we can already construct a solution $S$ to \textsc{2-Fixed 3-DMC} as follows. Let $D$ be an arbitrary maximum cardinality set of edge-disjoint $s$-$t$ paths in $H$, and let  $\hat{D} \subset D$ be the subset of paths that do not cross an edge in $P \cap R$. Notice that every path in $D$ must each include an edge from $E_B$ since there are exactly $3 \frac{m}{2} +1$ incoming edges to $t$. Now, let $\hat{E}_D \subset E_B$ be the set of edges in $E_B$ that are crossed by paths in $\hat{D}$. We claim that $S = (P \cap R) \cup \hat{E}_D$ is a solution to \textsc{2-Fixed 3-DMC}. To prove this, let $\hat{D}' \subset D$ be the subset of paths that cross an edge in $R$. It is easy to see that $S$ has size $m/2$. The fact that $S$ is an $s$-$t$ cut follows from $R$ being an $s$-$t$ cut in $H$, as there is no edge $(x, y)$ with $x \in A$, $y \in B$, such that $x \in p$, and $y \in q$ with $p \in \hat{D}$ and $q \in \hat{D}'$. Moreover, $|S \cap E_A| = |S \cap E_B| = m / 2$. Hence, $S$ is a feasible solution to \textsc{2-Fixed 3-DMC} with objective value $m/2$. 
\end{proof}

\section{Concluding remarks}\label{sec.conclusion}
We showed that the $k$-\textsc{Diverse Minimum s-t Cuts} problem can be solved efficiently when considering two natural measures for the diversity of a set of solutions. There exist, however, other sensible measures of diversity. One that often arises in literature is the minimum pairwise Hamming distance of a collection of sets. We showed that $k$-\textsc{DMC} is NP-hard when using this \textit{bottleneck} measure as the maximization objective. 
For the special case of finding pairwise-disjoint collections of $s$-$t$ mincuts, we showed that faster algorithms exist when compared to solving $k$-\textsc{DMC} for the pairwise-sum and coverage diversity measures. It is thus natural to ask whether there are faster algorithms for \textsc{Sum}-$k$-\textsc{DMC} and \textsc{Cov}-$k$-\textsc{DMC} (or other variants of $k$-\textsc{DMC}) that do not require the sophisticated framework of submodular function minimization. In this work, we relied on the algebraic structure of the problem to obtain a polynomial time algorithm. We believe it is an interesting research direction to assess whether the notion of diversity in other combinatorial problems leads to similar structures, which could then be exploited for developing efficient algorithms. 

\section*{Acknowledgement}
We thank Martin Frohn for bringing the theory of lattices to our attention, and for fruitful discussions on different stages of this work. 

This research was supported by the European Union’s Horizon 2020 research and innovation programme under the Marie Skłodowska-Curie grant agreement no. 945045, and by the NWO Gravitation project NETWORKS under grant no. 024.002.003.

\bibliography{biblio}

\appendix
\section{Proofs of Section \ref{sec:sfm}} \label{appendix.proofsSection3}

\subsection{Proof of Proposition \ref{proposition:1}} \label{appendix.two}
Before proving the proposition, we require the following claim. 

\begin{claim} \label{claim:leftmostRightmost}
For any $X, Y \in \Gamma_G(s, t)$, we have $S_{\mathrm{min}}(X \cup Y), S_{\mathrm{max}}(X \cup Y) \in \Gamma_G(s, t)$ and $|S_{\mathrm{min}}(X \cup Y) \cap S_{\mathrm{max}}(X \cup Y)| = |X \cap Y|$.
\end{claim}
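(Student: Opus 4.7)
The plan is to prove the claim by exploiting the standard one-to-one correspondence between minimum $s$-$t$ cuts and source-side vertex partitions. For any $Z \in \Gamma_G(s,t)$, let $V_Z \subseteq V(G)$ denote the set of vertices reachable from $s$ in $G \setminus Z$, and write $\partial(V_Z)$ for the set of edges directed from $V_Z$ to $V(G) \setminus V_Z$. A routine verification shows $Z = \partial(V_Z)$: the inclusion $\partial(V_Z) \subseteq Z$ follows because $\partial(V_Z)$ is itself an $s$-$t$ cut of size at most $|Z|$, and the reverse inclusion because each edge of a mincut is essential—removing any $(u,v) \in Z$ must leave some $s$-$t$ path whose prefix places $u$ in $V_Z$ and whose suffix places $v$ outside $V_Z$.

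The first main step is to show that $S_{\mathrm{min}}(X \cup Y)$ and $S_{\mathrm{max}}(X \cup Y)$ are themselves mincuts, by identifying them with $\partial(V_X \cap V_Y)$ and $\partial(V_X \cup V_Y)$ respectively. Using the submodular inequality for the cut function, $|\partial(V_X \cap V_Y)| + |\partial(V_X \cup V_Y)| \leq |\partial(V_X)| + |\partial(V_Y)| = 2\lambda(G)$; and since $s$ belongs to both $V_X \cap V_Y$ and $V_X \cup V_Y$ while $t$ belongs to neither, each of the two cuts on the left is an $s$-$t$ cut of size at least $\lambda(G)$, hence of size exactly $\lambda(G)$.

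Next, I would identify these two mincuts with the desired extremal cuts. Fix $p \in \mathcal{P}_{s,t}(G)$ and let $x_p \in X$, $y_p \in Y$ be the unique edges of $X$ and $Y$ on $p$. The key observation is that $V_X$ splits $p$ cleanly at $x_p$: the prefix of $p$ up to the tail of $x_p$ avoids $X$ and hence lies in $V_X$, while no vertex at or past the head of $x_p$ can lie in $V_X$—otherwise, concatenating an $X$-avoiding $s$-walk to such a vertex with the remaining $X$-avoiding suffix of $p$ would yield an $s$-$t$ walk in $G \setminus X$, contradicting that $X$ is a cut. The same clean split holds for $V_Y$ at $y_p$. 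Hence along $p$ one exits $V_X \cap V_Y$ exactly at the leftmost of $\{x_p, y_p\}$, and exits $V_X \cup V_Y$ at the rightmost. Since $\partial(V_X \cap V_Y)$ and $\partial(V_X \cup V_Y)$ are mincuts of size $\lambda(G) = |\mathcal{P}_{s,t}(G)|$, each contains exactly one edge per path, which must be these extremal edges. By definition these are exactly the edges in $S_{\mathrm{min}}(X \cup Y)$ and $S_{\mathrm{max}}(X \cup Y)$, respectively.

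Finally, the equality $|S_{\mathrm{min}}(X \cup Y) \cap S_{\mathrm{max}}(X \cup Y)| = |X \cap Y|$ is immediate from the per-path description: on each $p \in \mathcal{P}_{s,t}(G)$, the leftmost and rightmost of $\{x_p, y_p\}$ coincide iff $x_p = y_p$, iff $X$ and $Y$ share their edge on $p$. Both sides therefore count the paths in $\mathcal{P}_{s,t}(G)$ where $X$ and $Y$ agree. I expect the main technical obstacle to be the clean-split argument in the third paragraph; it relies crucially on the fact that $X$ (respectively $Y$) contributes exactly one edge to each $p \in \mathcal{P}_{s,t}(G)$, which is itself a consequence of Menger's theorem.
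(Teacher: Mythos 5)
Your proposal is correct, but it takes a genuinely different route from the paper. The paper works entirely on the edge/path level: it fixes a maximum family $\mathcal{P}_{s,t}(G)$ of edge-disjoint paths, uses Menger's theorem to view $S_{\mathrm{min}}(X\cup Y)$ and $S_{\mathrm{max}}(X\cup Y)$ as the per-path leftmost/rightmost choices among the two cut edges, proves the intersection equality directly from the observation that a path carrying an edge of $X\cap Y$ carries no other edge of $X\cup Y$, and establishes that $S_{\mathrm{min}}(X\cup Y)$ is a cut by a contradiction argument about a hypothetical $s$-$t$ path sliding past the leftmost edges. You instead pass to the source-side vertex sets $V_X, V_Y$, get minimality of $\partial(V_X\cap V_Y)$ and $\partial(V_X\cup V_Y)$ from submodularity of the directed cut function, and only then translate back to the paper's edge-based definitions via the ``clean split'' of each path at its unique cut edge; the intersection equality then falls out of the per-path description. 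Your route buys the mincut property almost for free from standard machinery (source-set characterization plus submodularity), at the cost of the extra identification step $\partial(V_X\cap V_Y)=S_{\mathrm{min}}(X\cup Y)$ and $\partial(V_X\cup V_Y)=S_{\mathrm{max}}(X\cup Y)$, which you justify correctly using the exactly-one-edge-per-path consequence of Menger; the paper's argument is more self-contained and stays in the vocabulary it later uses for the lattice $L^*$, but has to do the cut-ness contradiction by hand. One small nit: your stated reason for $\partial(V_Z)\subseteq Z$ (that $\partial(V_Z)$ is a cut of size at most $|Z|$) is not itself the inclusion argument; the direct reason is that an edge leaving $V_Z$ and not in $Z$ would make its head reachable from $s$ in $G\setminus Z$, contradicting the definition of $V_Z$. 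This does not affect the validity of the rest of your argument.
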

\begin{proof}
Without loss of generality, let $\mathcal{P}_{s,t}(G)$ be any maximum-sized set of edge-disjoint paths from $s$ to $t$. Recall that, by Menger's theorem, any minimum $s$-$t$ cut in $G$ contains exactly one edge from each path in $\mathcal{P}_{s,t}(G)$.
Thus, for a path $p \in \mathcal{P}_{s,t}(G)$, let $e, f \in p$ be the edges that intersect with cuts $X$ and $Y$, respectively. Then the set $S_\mathrm{min}(X \cup Y)$ can be seen as the subset of $X \cup Y$ where $e \in S_\mathrm{min}(X \cup Y)$ if $e \leq f$, for each path $p \in \mathcal{P}_{s,t}(G)$. Analogous for $S_\mathrm{max}(X \cup Y)$. 

We want to prove that $S_\mathrm{min}(X \cup Y)$ (resp. $S_\mathrm{max}(X \cup Y)$) is an $s$-$t$ cut\footnote{Notice that the size of $S_\mathrm{min}(X \cup Y)$ (resp. $S_\mathrm{max}(X \cup Y)$) is already minimum, as it contains exactly one edge from each path $p \in \mathcal{P}_{s,t}(G)$, which has cardinality $\lambda$.}, and that $|S_{\mathrm{min}}(X \cup Y) \cap S_{\mathrm{max}}(X \cup Y)| = |X \cap Y|$. For the latter, simply observe that whenever $X$ and $Y$ intersect at an edge $e$, by Menger's theorem, the path $p \in \mathcal{P}_{s,t}(G)$ that contains $e$ contains no other edge $f$ from $X \cup Y$. Thus, by 
definition, the edge $e$ will be contained by both $S_\mathrm{min}(X \cup Y)$ and $S_\mathrm{max}(X \cup Y)$. On the other hand, if $S_\mathrm{min}(X \cup Y)$ and $S_\mathrm{max}(X \cup Y)$ intersect at an edge $e'$; by definition, the path from $\mathcal{P}_{s,t}(G)$ containing $e'$ cannot include another edge from $X \cup Y$, since either $S_\mathrm{min}(X \cup Y)$ or $S_\mathrm{max}(X \cup Y)$ would contain it, which we know is not the case. Thus, $e' \in X \cap Y$, and the second part of the claim is proven. 

Now we show that $S_\mathrm{min}(X \cup Y)$ and $S_\mathrm{max}(X \cup Y)$ are $s$-$t$ cuts. We only prove this for $S_\mathrm{min}(X \cup Y)$ since the proof for $S_\mathrm{max}(X \cup Y)$ is analogous. For the sake of contradiction, suppose that $S_\mathrm{min}(X \cup Y)$ is not an $s$-$t$ cut. Then, there exists an $s$-$t$ path $\pi = (s, \ldots, t)$ in $G$ that does not contain en edge from $S_\mathrm{min}(X \cup Y)$. This means that $\pi$ has a subpath $\pi^* = (v_1, \ldots, v_2)$ satisfying $v_1 \leq_p w$ and $w' \leq_q v_2$, where $w$ and $w'$ are, respectively, the head and tail nodes of two (not necessarily distinct) edges $e, f \in S_{\mathrm{min}}(X \cup Y)$, and $p, q \in \mathcal{P}_{s,t}(G)$. 
In other words, there exists a path $\pi^*$ starting at a node $v_1$ which appears before an edge $e \in S_\mathrm{min}$ in a path $p \in \mathcal{P}_{s,t}(G)$, and ending at a node $v_2$ that appears after an edge $f \in S_\mathrm{min}$ in a path $q \in \mathcal{P}_{s,t}(G)$.   
It follows that edge $f$ in path $q$ can never be in an $s$-$t$ cut together with an edge in path $p$ that is to the right of (and including) edge $e$ (unless an edge from $\pi$ is also cut, but then the cut is not of minimum size). But, since $e \in S_\mathrm{min}(X \cup Y)$, we know that $e$ is the leftmost edge from $X \cup Y$ in path $p$. Therefore, $f \not\in X \cup Y$, otherwise neither $X$ nor $Y$ would be cuts. But we know that $f \in S_\mathrm{min}(X \cup Y)$, which means $f \in X \cup Y$, and we reach a contradiction. Thus, the set $S_\mathrm{min}(X \cup Y)$ is a minimum $s$-$t$ cut, and the claim is proven. 
\end{proof}

We now prove Proposition \ref{proposition:1}. We restate it here for the convenience of the reader. 

\propMultCons*
\begin{proof}
We prove this by giving an algorithm that takes any $k$-tuple $C \in U^k$ 
to a $k$-tuple $\hat{C} \in U^k_{\mathrm{lr}}$ that is in left-right order. The algorithm can be seen in Algorithm \ref{algo:leftRightOrder}. 

\begin{algorithm}[H]
\caption[Caption for LOF]{LRO($C = (X_1, \ldots, X_k)$)} \label{algo:leftRightOrder}
\vspace{.5em}
{\setlist{nolistsep}
    \begin{enumerate}[leftmargin=*, noitemsep]
    \item For each $i \in \{1, \ldots, k-1\}$:
        \begin{enumerate}[align = left]
        \item For each $j \in \{i+1, \ldots, k\}$:
            \begin{enumerate}[align = left]
                \item Set $\hat{X}_r \leftarrow S_{\mathrm{max}}(X_i \cup X_j)$ and $\hat{X}_\ell \leftarrow S_{\mathrm{min}}(X_i \cup X_j)$.
                \item Replace $X_i$ by $\hat{X}_\ell$ and $X_j$ by $\hat{X}_r$.
        \end{enumerate}
    \end{enumerate}
    \item Return $C$.
    \end{enumerate}
}
\vspace{.5em}
\end{algorithm}
We have to verify that for any $k$-tuple $C$, the algorithm 
produces a $k$-tuple $\hat{C}$ that is in left-right order, and that $\mu_C(e) = \mu_{\hat{C}}(e)$ for all $e \in E(G)$. 
To prove 
the latter, 
notice that at iteration $i$ of the algorithm, the two cuts $X_i$ and $X_j$ are replaced by $S_{\mathrm{min}}(X_i \cup X_j)$ and $S_{\mathrm{max}}(X_i \cup X_j)$, respectively. By definition, $S_{\mathrm{min}}(X_i \cup X_j) \cup S_{\mathrm{max}}(X_i \cup X_j) = X_i \cup X_j$ and, by Claim \ref{claim:leftmostRightmost}, we know that $X_i \cap X_j = S_{\mathrm{min}}(X_i \cup X_j) \cap S_{\mathrm{max}}(X_i \cup X_j)$. Therefore, the multiplicity of the edges $e \in E(G)$ remains invariant at every iteration. It then follows that the $k$-tuple $\hat{C} = LRO(C)$ output by the algorithm contains the same set of edges as the input tuple; each of them preserving its multiplicity. 

It remains to show that $\hat{C}$ is in left-right order. First, notice that $\hat{C}$ iterates over every pair of indices $(i, j)$ such that $i < j$. Furthermore, the algorithm sees such a pair only once. Now, assume that $\hat{C}$ is not in left-right order. Then, it contains a pair $(X_i, X_j)$ of incomparable (crossing) cuts; but this cannot be the case, as these would have been replaced by $S_{\mathrm{min}}(X_i \cup X_j)$ and $S_{\mathrm{max}}(X_i \cup X_j)$ at iteration $(i, j)$. Therefore $\hat{C}$ is in left-right order\footnote{Alternatively, one can see that cut $X_i$ at the end of the inner loop satisfies that $X_i \leq X_j$ for all $i < j$; hence, at iteration $i$ of the outer loop the algorithm finds a cut $\hat{X}_i$ to the right of $\hat{X}_{i-1}$ that is leftmost with respect to $\hat{X}_j$ for all $i < j$. That is, $X_{i-1} \leq X_i \leq X_j$ for all $i \in [k]$ and $i < j$.} and the proposition is proved. 
\end{proof}

\subsection{Proof of Lemma \ref{lemma:multiplicityModular}} \label{appendix.multiplicityModular}
Let $C_1 = [X_1, \ldots, X_k]$ and $C_2 = [Y_1, \ldots, Y_k]$ be distinct elements in the lattice $L^* = (U_{\mathrm{lr}}^k, \preceq)$. 
For a fixed edge $e \in E(G)$, we are interested in $\mu_e(C_1 \vee C_2) + \mu_e(C_1 \wedge C_2)$. For this purpose, consider the set of indices $P = \{1, \ldots, k\}$. We partition $P$ into four parts: (i) $P_1 = \{i \, : \, e \notin X_i \cup Y_i\}$, (ii) $P_2 = \{i \, : \, e \in X_i, e \notin Y_i\}$, (iii) $P_3 = \{i \, : \, e \notin X_i, e \in Y_i\}$ and (iv) $P_4 = \{i \, : \, e \in X_i \cap Y_i\}$. 
We claim that $\mu_e(C_1 \vee C_2) + \mu_e(C_1 \wedge C_2) = |P_2| + |P_3| + 2|P_4|$. This follows because on the one hand, by definition, the edge $e$ must appear in either $S_\text{min}(X_i \cup Y_i)$ or $S_\text{max}(X_i \cup Y_i)$ for each $i \in P_2 \cup P_3$. On the other hand, the edge $e$ appears in both $S_\text{min}(X_i \cup Y_i)$ and $S_\text{max}(X_i \cup Y_i)$ for every $i \in P_4$, since there is no edge $f \in X_i \cup Y_i$ on the same $s$-$t$ path $p$ as $e$ such that $f \leq_p e$ or $e \leq_p f$ (otherwise it could not be that $e \in X_i \cap Y_i$). 
Now, observe that from the way we partitioned the set $P$, we have 
$\mu_e(C_1) = |P_2| + |P_4|$ and $\mu_e(C_2) = |P_3| + |P_4|$. Combining this with our previous claim, we obtain $\mu_e(C_1 \vee C_2) + \mu_e(C_1 \wedge C_2) = \mu_e(C_1) + \mu_e(C_2)$. 
By definition of modularity, the multiplicity function $\mu_e$ is thus modular on the lattice $L^*$ for any edge $e \in E(G)$. \hfill$\qed$

\subsection{Proof of Lemma \ref{lemma:multiplicityProperty}} \label{appendix.multiplicityProperty}
We require the following proposition. 

\begin{proposition} \label{proposition.interval}
For any $C = [X_1, \ldots, X_k]$ in $L^*$, the edge $e \in E(C)$ appears in every cut of a contiguous subsequence $C' = [X_i, \ldots, X_j]$ of $C$, $1 \leq i \leq j \leq k$, with size $|C'| = \mu_e(C)$.
\end{proposition}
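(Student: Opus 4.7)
The plan is to exploit two facts: first, that $C$ is in left-right order, meaning $X_1 \leq X_2 \leq \cdots \leq X_k$ in the partial order on $s$-$t$ mincuts; second, that by Menger's theorem every minimum $s$-$t$ cut contains exactly one edge from each path of a fixed maximum-sized edge-disjoint path collection $\mathcal{P}_{s,t}$. Fix an edge $e \in E(C)$, let $i$ be the smallest index with $e \in X_i$, and let $j$ be the largest. It suffices to show that $e \in X_b$ for every $b$ with $i \leq b \leq j$, since then the multiset of cuts containing $e$ is exactly $[X_i, \ldots, X_j]$, a contiguous subsequence of length $j - i + 1 = \mu_e(C)$.

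To show $e \in X_b$ for $i \leq b \leq j$, fix the path $p \in \mathcal{P}_{s,t}$ that contains $e$ and let $f$ be the unique edge of $p$ that lies in $X_b$ (existence and uniqueness come from Menger's theorem applied to $X_b$). From $X_i \leq X_b$ and the definition of the predecessor-successor relation, the path $p$ meets the edge $e \in X_i$ at or before the edge $f \in X_b$, i.e. $e \preceq_p f$. Symmetrically, from $X_b \leq X_j$ we get $f \preceq_p e$ using that $e \in X_j$. Antisymmetry of $\preceq_p$ then forces $e = f$, so $e \in X_b$ as desired.

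The main potential obstacle is being careful that the argument above actually uses the definition of the partial order on cuts correctly, in particular that ``every $s$-$t$ path meets an edge of $X$ at or before an edge of $Y$'' specializes correctly to the paths in $\mathcal{P}_{s,t}$, rather than merely to some unspecified family of paths. Since the definition quantifies over \emph{all} $s$-$t$ paths in $G$, it certainly applies to the paths in $\mathcal{P}_{s,t}$, so this is immediate rather than a genuine obstacle; the proof is essentially a short two-line deduction once this is pointed out. No further machinery (submodularity, lattice operations, or properties of $S_{\min}/S_{\max}$) is needed.
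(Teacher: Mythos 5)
Your proof is correct and follows essentially the same route as the paper's: both rest on the left-right order of $C$ together with Menger's theorem forcing exactly one cut edge per path of $\mathcal{P}_{s,t}$, so that the order relation specializes to $\preceq_p$ on the path containing $e$. The only difference is that you argue directly (antisymmetry gives $e=f$ for every intermediate cut $X_b$) while the paper phrases the same comparison as a contradiction with a cut $X_h$ missing $e$; this is a cosmetic distinction, not a different method.
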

\begin{proof}
    The case when $\mu_e(C) = 1$ is trivial. Next, we prove the case when $\mu_e(C) \geq 2$. By contradiction, suppose that $e$ does not appear in a contiguous subsequence of $C$. Then, there exists some cut $X_h \in C$ with $i < h < j$ such that $e \in X_i$, $e \not\in X_h$, and $e \in X_j$. We know that collection $C$ is in left-right order, thus we have that $X_i \leq X_j$ for every $i < j$. Now, from $e \in X_i$, it follows that $e$ is a path-predecessor of en edge $f$ in $X_h$. But from $e \in X_j$, edge $e$ must also be a path-successor of $f$. The edges $e$ and $f$ cannot be equal since $e \not\in X_h$, thus we get the necessary contradiction.
\end{proof}

\begin{remark} \label{remark:Interval} \label{appendix.remarkInterval}
By Proposition \ref{proposition.interval}, we can represent the containment of an edge $e$ in a collection $C \in L^*$ as an interval $I_e(C) = (i, j)$, where $i \leq j$, of length $\mu_e(C)$ defined on the set of integers $\{1, \ldots, k\}$. In this interval representation, the elements of $I_e(C)$ correspond bijectively to the positions of the cuts in $C$ that contain edge $e$. This will be useful in the proofs of Lemma \ref{lemma:multiplicityProperty} and Claim \ref{claim:binomialSubmodular}. 
\end{remark}

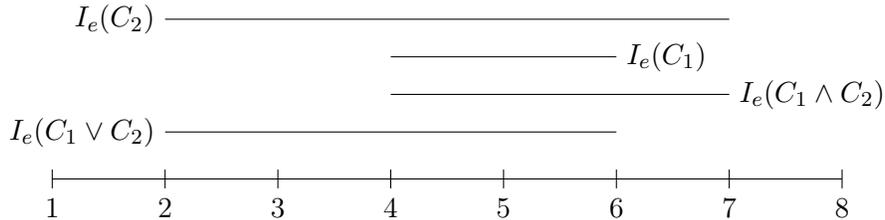
\begin{figure}[H]
    \centering
    \begin{tikzpicture}[scale=0.5]
    \draw (3,0) -- (24,0); 
    \foreach \x in {1,2,...,8} {
        \draw (\x*3,0.25) -- (\x*3,-0.25) node[below] {\x};
    }
    \draw (6,1.25) node[left] {$I_e(C_1 \vee C_2)$} -- (18,1.25); 
    \draw (12,2.25) -- (21,2.25) node[right] {$I_e(C_1 \wedge C_2)$}; 
    \draw (12,3.25) -- (18,3.25) node[right] {$I_e(C_1)$}; 
    \draw (6,4.25) node[left] {$I_e(C_2)$} -- (21,4.25); 
    
    \end{tikzpicture}
    \caption{
    Interval representation of the containment of an edge $e \in E(G)$ in left-right ordered collections of $s$-$t$ mincuts. For any two $C_1, C_2 \in L^*$ such that $e \in E(C_1) \cup E(C_2)$, we are interested in the containment of $e$ in collections $C_1 \vee C_2$ and $C_1 \wedge C_2$. In the example, there are $k = 8$ cuts in each collection, and 8 corresponding elements in the domain of the intervals. Observe that neither $I_e(C_1 \vee C_2)$ nor $I_e(C_1 \wedge C_2)$ are longer than $I_e(C_1)$ or $I_e(C_2)$. Also, the corresponding sums of their lengths are equal. 
    }
    \label{fig:intervalRepresentation}
\end{figure}

We are now ready to prove Lemma \ref{lemma:multiplicityProperty}. We restate it here for the reader's convenience. 

\multiplicityProperty*
\begin{proof}
We prove this by case distinction on the containment of $e$ in $E(C_1) \cup E(C_2)$. There are three cases: $e \in E(C_1) \setminus E(C_2)$, $e \in E(C_2) \setminus E(C_1)$, and $e \in E(C_1) \cap E(C_2)$. 
\begin{description}
   \item[Case 1: $e \in E(C_1) \setminus E(C_2)$.] 
   We prove this case by contradiction. Assume that $\max(\mu_e(C_1 \vee C_2), \mu_e(C_1 \wedge C_2)) > \mu_e(C_1)$. By Lemma \ref{lemma:multiplicityModular}, we know that $\mu_e(C_1 \vee C_2) + \mu_e(C_1 \wedge C_2) = \mu_e(C_1)$. W.l.o.g., we can assume that $\mu_e(C_1 \wedge C_2) > \mu_e(C_1 \vee C_2)$. This implies that $\mu_e(C_1 \vee C_2) < 0$, which is a contradiction. Hence, it must be that $\max({\mu_e(C_1 \vee C_2), \mu_e(C_1 \wedge C_2)}) \leq \mu_e(C_1)$.
   \item[Case 2: $e \in E(C_2) \setminus E(C_1)$.] 
   This case is symmetrical to Case 1, hence is already proven. 
   \item[Case 3: $e \in E(C_1) \cap E(C_2)$.]
   To prove that the statement is true in this case, it is convenient to consider the interval representation of edge $e$ in $E(C_1)$ and $E(C_2)$. Let $I_e(C_1) = (\alpha, \beta)$ and $I_e(C_2) = (\sigma, \tau)$ be such intervals as defined by Remark \ref{remark:Interval}. There are two subcases to consider: $I_e(C_1) \cap I_e(C_2) = \emptyset$, and $I_e(C_1) \cap I_e(C_2) \neq \emptyset$. 
   \begin{description}
       \item[Subcase 3.1.] We claim that $\max({\mu_e(C_1 \vee C_2), \mu_e(C_1 \wedge C_2)}) = \max(\mu_e(C_1), \mu_e(C_2))$ holds in this subcase. To see this, w.l.o.g., suppose that $\beta < \sigma$. Then, because $C_2$ is in left-right order, the cuts of $C_2$ in the interval $(\alpha, \beta)$ each contain a path-predecessor of edge $e$. Then, by definition of the join operation in $L^*$, we have $I_e(C_1 \vee C_2) = (\alpha, \beta)$. Similarly, the cuts of $C_1$ in the interval $(\sigma, \tau)$ each contain a path-successor of $e$. Hence, by the meet operation in $L^*$, we have $I_e(C_1 \wedge C_2) = (\sigma, \tau)$. Taking the length of the intervals, we obtain $\mu_e(C_1 \vee C_2) = \mu_e(C_1)$ and $\mu_e(C_1 \wedge C_2) = \mu_e(C_2)$, from which the claim follows. 
       \item[Subcase 3.2.] We have two further subcases to consider: $I_e(C_1) \not\subseteq I_e(C_2)$ (or $I_e(C_2) \not\subseteq I_e(C_1)$), and $I_e(C_1) \subseteq I_e(C_2)$ (or vice versa). 
       \begin{description}
           \item[Subcase 3.2.1.]
           The proof of this subcase is analogous to the proof of subcase (3.1), where we also obtain that $\max({\mu_e(C_1 \vee C_2), \mu_e(C_1 \wedge C_2)}) = \max(\mu_e(C_1), \mu_e(C_2))$. 
           \item[Subcase 3.2.2.]
           W.l.o.g., suppose that $I_e(C_2) \subseteq I_e(C_1)$ (see Figure \ref{fig:intervalRepresentation} for an illustration). Then $\alpha \leq \sigma \leq \tau \leq \beta$. Again, by definition of join and meet, we have that $I_e(C_1 \vee C_2) = (\alpha, \tau)$ and $I_e(C_1 \wedge C_2) = (\sigma, \beta)$. Now, since $\tau - \alpha \leq \beta - \alpha$ and $\beta - \sigma \leq \beta - \alpha$, we obtain $\max(\mu_e(C_1 \vee C_2), \mu_e(C_1 \wedge C_2)) \leq \max(\mu_e(C_1), \mu_e(C_2))$, which is what we wanted.
       \end{description}
   \end{description}
\end{description}

Since the claim is true for all cases covered and all cases have been considered, the claim is proved.
\end{proof}

\subsection{Proof of Claim \ref{claim:binomialSubmodular}} \label{appendix.binomialSubmodular}
We know that $e \in E(C_1 \vee C_2) \cup E(C_1 \wedge C_2)$ iff $e \in E(C_1) \cup E(C_2)$ (see proof in Appendix \ref{sec.missingProofs}). Hence, we may only consider the edge set $E(C_1) \cup E(C_2)$. We prove the claim by case distinction on the containment of $e$ in $E(C_1) \cup E(C_2)$. There are three cases: $e \in E(C_1) \setminus E(C_2)$, $e \in E(C_2) \setminus E(C_1)$, and $e \in E(C_1) \cup E(C_2)$.
\begin{description}
    \item[Case 1: $e \in E(C_1) \setminus E(C_2)$.]
    We know from Lemma \ref{lemma:multiplicityProperty} that $\mu_e(C_1 \vee C_2) \leq \mu_e(C_1)$ and $\mu_e(C_1 \wedge C_2) \leq \mu_e(C_1)$. Hence we have $\binom{\mu_e(C_1 \vee C_2)}{2} \leq \binom{\mu_e(C_1)}{2}$ and $\binom{\mu_e(C_1 \wedge C_2)}{2} \leq \binom{\mu_e(C_1)}{2}$. Moreover, from Lemma \ref{lemma:multiplicityModular}, we know that $\mu_e(C_1 \vee C_2) + \mu_e(C_1 \wedge C_2) = \mu_e(C_1)$. It is clear that $\binom{a}{2} + \binom{b}{2} < \binom{a + b}{2}$ for any $a, b \in \mathbb{N}$. Therefore, the claim is satisfied in this case.
    \item[Case 2: $e \in E(C_2) \setminus E(C_1)$.]
    This case is symmetrical to Case 1, hence is already proven.
    \item[Case 3: $e \in E(C_1) \cup E(C_2)$.]
    Consider the interval representation of $e$ in $E(C_1) \cup E(C_2)$ (see Remark \ref{remark:Interval} for details). There are three subcases: (3.1) $I_e(C_1)$ and $I_e(C_2)$ have no overlap (i.e., $I_e(C_1) \cap I_e(C_2) = \emptyset$), (3.2) $I_e(C_1)$ and $I_2(C_2)$ overlap but neither is entirely contained in the other (i.e., $I_e(C_1) \cap I_e(C_2) \neq \emptyset$ and $I_e(C_1) \not\subseteq I_e(C_2)$ nor $I_e(C_2) \not\subseteq I_e(C_1)$), and (3.3) one of $I_e(C_1)$ or $I_e(C_2)$ is entirely contained in the other (i.e., $I_e(C_1) \subseteq I_e(C_2)$ or $I_e(C_2) \subseteq I_e(C_1)$). 
    \begin{description}
        \item[Subcase 3.1.]
        We know by the proof of Lemma \ref{lemma:multiplicityProperty} that $\max({\mu_e(C_1 \vee C_2), \mu_e(C_1 \wedge C_2)}) = \max(\mu_e(C_1), \mu_e(C_2))$. And by Lemma \ref{lemma:multiplicityModular}, we also have $\min({\mu_e(C_1 \vee C_2), \mu_e(C_1 \wedge C_2)}) = \min(\mu_e(C_1), \mu_e(C_2))$. It is then immediate that the claim is satisfied with equality in this case. 
        
        \item[Subcase 3.2.]
        Analogous to Subcase 3.1.
        
        \item[Subcase 3.3.]
        It is easy to show that $\binom{a}{2} + \binom{b}{2} \leq \binom{c}{2} + \binom{d}{2}$ for $a, b, c, d \in \mathbb{N}$, given that the following properties hold: $a + b = c + d$, and $\max(a, b) \leq \max(c, d)$.\footnote{By combining the two properties, we have $a \cdot b \geq c \cdot d$. Moreover, by the former property, we know that $(a+b)^2 = (c+d)^2$. Together, these facts imply that $a^2 + b^2 \leq c^2 + d^2$. Again by the first property, we can subtract $(a+b)$ and $(c+d)$ from each side, respectively, resulting in $a(a-1) + b(b-1) \leq c(c-1) + d(d-1)$. Then, by definition of the binomial coefficient, the claim immediately follows.} In our context, these are the properties satisfied by the multiplicity function stated in Lemmas \ref{lemma:multiplicityModular} and \ref{lemma:multiplicityProperty}. Therefore, the claim is also satisfied in this subcase.
    \end{description}
\end{description}

Since we have considered all cases, we conclude that the claim holds for every edge $e \in E(C_2) \cup E(C_1)$. And, since it holds vacuously for edges $e \not\in E(C_2) \cup E(C_1)$, the generalized claim follows. \hfill$\qed$

\subsection{Proof of Claim \ref{claim:edgeSetSizesInequality}} \label{appendix.edgeSetSizesInequality}
To prove this claim, we shall look at each edge in the graph, and inspect whether it is contained in each of the four edge sets $E(C_1 \vee C_2)$, $E(C_1 \wedge C_2)$, $E(C_1)$, and $E(C_2)$. 
For simplicity, we shall denote these sets by $A$, $B$, $C$, and $D$, respectively. 
We begin with two simple facts. First, for any two sets $X$ and $Y$, we may write $|X|+|Y| = |X \setminus Y| + |Y \setminus X| + 2 |X \cap Y|$. Second, any edge $e \in A \cup B$ iff $e \in C \cup D$ (see proof in Appendix \ref{sec.missingProofs}). 
Thus, for an edge $e$, we can restrict to analyze the following cases: $e \in A \setminus B$, $e \in B \setminus A$, and $e \in A \cap B$. 

\begin{description}
    \item[Case 1: $e \in A \setminus B$.]
    We claim that $e$ must be contained in either $C$ or $D$, but not in both. By contradiction, assume that $e$ is contained in $C \cap D$. Then, by Lemma \ref{lemma:multiplicityModular}, we have $\mu_e(A) = \mu_e(C) + \mu_e(D)$. But this implies that $\mu_e(A) > \max(\mu_e(C), \mu_e(D))$, which stands in contradiction with Lemma \ref{lemma:multiplicityProperty}.  
    Therefore, every time an edge appears exclusively in $A$ or $B$, it also appears exclusively in $C$ or $D$ (observe that the reverse is not always true). More formally, we have $|C \setminus D| + |D \setminus C| = |A \setminus B| + |B \setminus A| + |X|$, where $X$ is the set of edges in $C$ or $D$ that could also appear in $A \cap B$. 
    \item[Case 2: $e \in B \setminus A$.]
    Symmetrical to Case 1. 
    \item[Case 3: $e \in A \cap B$.]
    We have three subcases: (3.1) $e \in C \setminus D$, (3.2) $e \in D \setminus C$, and (3.3) $e \in C \cap D$. 
    \begin{description}
        \item[Subcases 3.1 \& 3.2.]
        An observant reader may notice that these subcases are equivalent to inspecting the set $X$. In fact, it is enough for our purposes to show that $|X| \geq 0$. To see why this holds, assume w.l.o.g. that $e \in C \setminus D$ in the interval $(i, j)$. Now, consider the case where $D$ contains only cuts that each contains a path-predecessor of $e$ in the interval $(1, h)$ and cuts that each contains a path-successor of $e$ in the interval $(h+1, k)$, with $i < h < j$. Then, by definition of join and meet in $L^*$, $e \in A$ in the interval $(1,h)$ and $e \in B$ in the interval $(h+1, k)$. Therefore, $e \in A \cap B$, which implies $|X| \geq 0$.
        \item[Subcase 3.3.]
        By a similar argument to Case 1, it follows that every edge that appears in $C \cap D$ also appears in $A \cap B$. Therefore, $|A \cap B| = |C \cap D| + |X|$.
    \end{description}
\end{description}

Putting everything together, we obtain the following:
\begin{align*}
    |C| + |D| & = |C \setminus D| + |D \setminus C| + 2 |C \cap D| \notag \\
    & = |A \setminus B| + |B \setminus A| + |X| + 2 |A \cap B| - 2 |X| \notag \\
    & = |A| + |B| - |X|, 
\end{align*}
and since $|X| \geq 0$, we have that $|A| + |B| \geq |C| + |D|$ and the claim is proven. \hfill$\qed$

\section{Solving \textsc{SUM-k-DMC} and \textsc{COV-k-DMC} in \texorpdfstring{$O(k^5n^5)$}{O(k5n5)} time} \label{appendix.runningTime}
\preciseruntime*
\begin{proof}
Let us first look at each step in the reduction to SFM. From the discussion in Section \ref{sec.sfmPreliminaries}, to minimize a submodular function $f$ in a distributive lattice $L$, we should first transform the problem into an equivalent version on sets. For this, we require (i) a compact representation of the lattice $L$, and (ii) a polynomial evaluation oracle for the function $\hat{f} : \mathcal{D}(J(L)) \rightarrow \mathbb{R}$, which is an analogue of $f$ on $L$ but defined on the poset of ideals of the join-irreducibles of $L$. Then, any algorithm for SFM on sets can be used to solve the original problem. 

In our context, the total running time of the algorithm is, 
\begin{center}
    $O(t_\mathrm{c}(n, m) + t_\mathrm{SFM}(n, m, T_\mathrm{EO}))$,
\end{center}
where $t_\mathrm{c}(n, m)$ is the time required to construct a compact representation of the lattice $L^*$, $t_\mathrm{SFM}(n, m, T_\mathrm{EO})$ is the time taken by an SFM algorithm on sets, and $T_\mathrm{EO}$ is the time required to compute $\hat{d}'_\mathrm{sum}$ (resp. $\hat{d}'_\mathrm{cov}$) by an evaluation oracle. Here $\hat{d}'_\mathrm{sum}$ (resp. $\hat{d}'_\mathrm{cov}$) denotes the analog function on $\mathcal{D}(J(L^*))$ of $\hat{d}_\mathrm{sum}$ (resp. $\hat{d}_\mathrm{cov}$) on $L^*$, and $n$ and $m$ denote the number of vertices and edges of our input graph $G$. 

By Lemmas \ref{lemma:picard} and \ref{lemma:joinIrreduciblesLStar}, a compact representation of $L^*$ can be computed in time $t_\mathrm{c}(n, m) = F(n, m) + O(k^2n^2)$, where $F(n, m)$ is the time required by a max-flow computation. The latter term follows from considering all potential edges between pairs of nodes in the construction of $G(L^*)$ from the set of join-irreducibles $J(L^*)$, which has size $O(kn)$. 

Next, we analyze the time $T_\mathrm{EO}$ required to compute $\hat{d}'_\mathrm{sum}(A)$; where $A$ is the set of join-irreducibles lower than or equal to the corresponding element $a \in L^*$. We know that the original function $\hat{d}_\mathrm{sum}(a)$ can be efficiently computed in $O(kn)$ time. Hence, if we can recover the element $a \in L^*$ from the ideal $A \in \mathcal{D}(J(L))$ in $t_\mathrm{ideal}(n, m)$ time, we can use our familiar function $\hat{d}_\mathrm{sum}$ to compute $\hat{d}'_\mathrm{sum}$ in time $t_\mathrm{ideal}(n, m) + O(kn)$. We claim that $t_\mathrm{ideal}(n, m) = O(k^2n^2)$. This follows from the fact that $a$ can be recovered from $A$ by computing the join of all $O(kn)$ elements in $A$, where a join operation between two elements in $L^*$ has complexity $O(k \lambda(G)) = O(kn)$. 

Using the above, and plugging into $t_\mathrm{SFM}(n, m, T_\mathrm{EO})$ the running time of the current best SFM algorithm of Jiang \cite{jiang2021minimizing}, we obtain a total running time of $F(n, m) + O(k^2n^2) + O(k^3 n^3 \cdot k^2n^2)$, which simplifies to $O(k^5n^5)$. 
\end{proof}

\section{Other proofs}\label{sec.missingProofs}
\subsection{Fact used in the proofs of Claims \ref{claim:binomialSubmodular} and \ref{claim:edgeSetSizesInequality}}

\begin{fact}
For any two $C_1, C_2 \in L^*$, we have that $e \in E(C_1 \vee C_2) \cup E(C_1 \wedge C_2)$ iff $e \in E(C_1) \cup E(C_2)$.
\end{fact}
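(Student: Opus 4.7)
The plan is to reduce the statement to a one-index identity and then conclude by taking a union over all $k$ components. The key ingredient is the observation that for any two minimum $s$-$t$ cuts $X$ and $Y$, we have
\[
S_{\min}(X \cup Y) \cup S_{\max}(X \cup Y) = X \cup Y.
\]
This holds because each path $p \in \mathcal{P}_{s,t}(G)$ meets $X \cup Y$ in at most two edges (by Menger's theorem, each minimum cut contributes exactly one edge per path). By definition, $S_{\min}(X \cup Y)$ collects the leftmost such edge on $p$ (when present) and $S_{\max}(X \cup Y)$ collects the rightmost; together they account for every edge of $X \cup Y$ appearing on $p$. Running this over all paths in $\mathcal{P}_{s,t}(G)$ gives the claimed equality. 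This identity is already implicit in the proofs of Claim~\ref{claim:leftmostRightmost} and Proposition~\ref{proposition:1}.

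With this in hand, I would then lift the identity from single pairs of cuts to left-right ordered $k$-tuples by invoking Lemma~\ref{lemma:distributivity}, which tells us that meet and join in $L^*$ act componentwise as $S_{\min}$ and $S_{\max}$. Writing $C_1 = [X_1, \ldots, X_k]$ and $C_2 = [Y_1, \ldots, Y_k]$, we obtain
\[
E(C_1 \vee C_2) \cup E(C_1 \wedge C_2) = \bigcup_{i=1}^{k} \bigl(S_{\max}(X_i \cup Y_i) \cup S_{\min}(X_i \cup Y_i)\bigr) = \bigcup_{i=1}^{k}(X_i \cup Y_i) = E(C_1) \cup E(C_2).
\]
Both inclusions of the stated ``iff'' follow from this set equality. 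The statement is therefore essentially a direct consequence of the componentwise definition of $\vee$ and $\wedge$ in $L^*$ combined with the single-pair identity $S_{\min}(X \cup Y) \cup S_{\max}(X \cup Y) = X \cup Y$.

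I do not foresee a real obstacle here; the only subtle point is justifying the single-pair identity, and that is immediate from how $S_{\min}$ and $S_{\max}$ are defined path by path on a maximum-sized family $\mathcal{P}_{s,t}(G)$ of edge-disjoint $s$-$t$ paths. Once that step is written out, the rest of the argument is a one-line reduction.
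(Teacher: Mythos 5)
Your proposal is correct and rests on the same core observation as the paper's proof: every edge of $X_i \cup Y_i$ must land in $S_{\min}(X_i \cup Y_i)$ or $S_{\max}(X_i \cup Y_i)$, which combined with the componentwise definition of $\vee$ and $\wedge$ in $L^*$ gives the claim. The only cosmetic difference is that you establish the set equality $E(C_1 \vee C_2) \cup E(C_1 \wedge C_2) = E(C_1) \cup E(C_2)$ directly (handling both directions at once), whereas the paper argues one direction by contradiction and omits the converse as similar.
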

\begin{proof}
    Assume, for the sake of contradiction, that there exists an edge $e$ such that $e \in E(C_1) \cup E(C_2)$ but $e \not\in E(C_1 \vee C_2) \cup E(C_1 \wedge C_2)$. If $e \in E(C_1) \cup E(C_2)$, then there exists at least one cut $X_i \in C_1 \cup C_2$ such that $e \in X_i$. W.l.o.g., suppose that $X_i \in C_1$, and let $Y_i$ be the $i$th cut in $C_2$. If $e \not\in S_\mathrm{min}(X_i \cup Y_i)$ then, by definition, it must be that $e \in S_\mathrm{max}(X_i \cup Y_i)$ and vice versa. 
    This gives us the necessary contradiction and $e$ must also be contained in $E(C_1 \vee C_2) \cup E(C_1 \wedge C_2)$. The other direction of the argument is similar and is thus omitted. 
\end{proof}
\end{document}